\documentclass[11pt]{amsart}

\usepackage[utf8]{inputenc}
\usepackage[a4paper, margin=2cm]{geometry}
\usepackage{amsthm,amssymb,mathtools,amsfonts}
\usepackage[dvipsnames]{xcolor}
\usepackage[foot]{amsaddr}
\usepackage{tikz}
\usepackage{bbm} 
\usepackage{comment}
\usepackage{amsmath}
\usepackage{amsfonts}
\usepackage{amssymb}
\usepackage{amsthm}
\usepackage{pgfplots}
\usepackage{graphicx}
\usepackage{lmodern}
\usepackage{cancel}
\usepackage{physics}
\usepackage{mathtools}
\usepackage{dsfont}
\usepackage{comment}
\usepackage{algorithm2e}
\usepackage{tikz}
\usetikzlibrary{calc}
\usepackage{enumitem}
\usetikzlibrary{positioning}
\usepackage{capt-of}
\usepackage{subcaption}
\pgfplotsset{compat=1.18}

\usepackage{newfloat}
\DeclareFloatingEnvironment[
  fileext=lop,
  listname={List of Protocols},
  name=Protocol,
  placement=tbp
]{protocol}

\usepackage[colorlinks]{hyperref}
\hypersetup{
    colorlinks  = true,
    citecolor   = CadetBlue,
    linkcolor   = MidnightBlue,
    urlcolor    = PineGreen
}
\usepackage{thmtools}
\declaretheorem[parent=section]{theorem}
\declaretheorem[numberlike=theorem, name=Proposition]{proposition}

\declaretheorem[numberlike=theorem, name=Example]{example}
\declaretheorem[numberlike=theorem, name=Lemma]{lemma}
\declaretheorem[numberlike=theorem, name=Definition]{definition}
\declaretheorem[numberlike=theorem, name=Corollary]{corollary}

\declaretheorem[numberlike=theorem, name=Assumptions]{assumptions}

\DeclareMathOperator{\id}{id}

\title{Device independent quantum key distribution with robust self-tests}

\author{Andreas Bluhm$^1$}
\address{$^1$ Univ. Grenoble Alpes, CNRS, Grenoble INP, LIG, Grenoble, France}

\author{Gereon Ko\ss mann$^{2}$}
\address{$^2$ Institute for Quantum Information, RWTH Aachen University, Aachen, Germany}
\author{René Schwonnek$^{3}$}
\address{$^3$ Institute for theoretical Physics, Leibniz University Hannover, Germany}

\begin{document}

\begin{abstract}    
    Device-independent quantum key distribution (DIQKD) provides a model of quantum key distribution with minimal assumptions and highly abstract theoretical building blocks. Although DIQKD frees us from detailed discussions of specific device models and associated error parameters, it replaces them with fundamental assumptions about the validity of quantum experiments. In this work, we propose a way to lift a protocol based on DIQKD-style assumptions to a device-dependent QKD protocol by performing local self-tests in the laboratories of the two key-generating parties. In particular, we consider routed Bell-test setups as a means of self-testing the local parties in earnest and develop a rigorous mathematical framework showing that the underlying optimization problems can indeed be transferred to the device-dependent QKD setting. As an application, we illustrate many of the relevant techniques through the case study of a routed BB84 protocol.
\end{abstract}

\maketitle
\tableofcontents
\section{Introduction}

Quantum key distribution offers an opportunity to extract a provably secure key between distant parties, which is certified by quantum theory \cite{RENNER2008}. A standard model of QKD considers two parties, Alice and Bob, both equipped with quantum devices characterized by quantum measurements and sharing bipartite quantum states \cite{Ekert1991}. Usual QKD protocols now expect the two parties to measure the shared states often enough to obtain meaningful statistics, which can subsequently be used to extract a secure key by the rules of the composable security framework (cf.~e.g.~\cite{Tomamichel2017}). In particular, since its very first developments \cite{Bennett1992,Bennett_2014,Ekert1991}, it has been shown that actual security requires the consideration of, in principle, all imperfections of the underlying devices and mismatches between theory and experiment. In order to handle the huge number of potential security lacks, so-called device-independent quantum key distribution (DIQKD) emerged as the gold standard for QKD experiments, as it does not require any assumption on the underlying devices beyond that they are subject to the laws of quantum mechanics \cite{Mayers1998}. Moreover, in concrete bipartite scenarios, security proofs for DIQKD only require statistics made by the two partners, Alice and Bob, in order to estimate the amount of key that can be extracted \cite{pironio2009device,Vazirani_2014}. Given these arguably weak assumptions on the physical modeling, it has been discovered that, instead of a concrete mathematical modeling of the devices, a different aspect of the experiment needs to be well understood. At the forefront of this aspect are the loopholes of Bell experiments, as they turn out to be the most important security risks in a DIQKD experiment. Generally, a loophole can be defined as any lack of coherence between experiment and physical modeling that is able to fake the observed statistics. The most prominent instance is the detection loophole, as its effect on the observed statistics equals the influence of long-distance experiments, which, by the natural Lambert-Beer decay effect, have vanishing event sizes.
   
In a recent sequence of works \cite{Lim2013,Chaturvedi2024,Lobo2024,Le_Roy_Deloison_2025}, models of DIQKD have been developed in order to address the detection loophole, which have been named \emph{routed Bell tests}. Routed Bell tests in \cite{Lim2013,Chaturvedi2024,Lobo2024,Le_Roy_Deloison_2025} are tripartite experiments between Alice, Bob, and a third party named Fred in this work. It is assumed that Alice and Fred are close to each other such that one may assume that Alice and Fred are able to perform a close-to-perfect CHSH test, while Alice and Bob could be far apart and perform a key-generation protocol. The main purpose of routed Bell tests has been to show that the zero-key threshold, i.e., the amount of randomness needed in order to extract key, significantly improves depending on the detection efficiency. In comparison, \cite{kossmann2025routedbelltestsarbitrarily} introduced a fourth party, called George, and investigated the key rate depending on the quantum bit-error rate and local Bell violation, where both Alice and Bob perform local tests with Fred and George, respectively. The key insight compared to its predecessors is a marginal constraint between the applied states for key generation between Alice and Bob and the Bell tests between Alice and Fred, respectively between Bob and George, as we will discuss in the next section.

Even though it has been observed in all proposals \cite{Lim2013,Chaturvedi2024,Lobo2024,Le_Roy_Deloison_2025,kossmann2025routedbelltestsarbitrarily} that Fred and George are, at least in principle, local self-tests of Alice’s and Bob’s devices, a rigorous mathematical formulation is still missing. Self-testing is a technique in the realm of device-independent quantum information processing that allows one to infer directly from observed statistics a concrete mathematical model of the implemented states and measurements, up to isometric degrees of freedom \cite{Mayers2004,upi2020,Paddock_2023}. Thus, as observed in \cite{kossmann2025routedbelltestsarbitrarily}, in the case of a perfect CHSH violation, key rates as in a device-dependent situation are observed numerically \cite{Shor_2000}. For the case of a perfect self-test, it is well known that the optimal device-dependent key rates can be recovered \cite{Woodhead_2016}. However, in proper device-independent models, the performance of perfect self-tests has only very limited applications, as one can arguably assume that a perfect test cannot be recovered in practice, at the very least because of finite-size effects in the observed statistics. For this purpose, a notion of robust self-testing has recently been introduced, eventually capturing the case of imperfect statistics \cite{upi2020,operator_algebra_self_test2024}. We furthermore remark that the model \cite{kossmann2025routedbelltestsarbitrarily} was written in the language of $C^\star$-algebras, which fits the setup for \emph{abstract self-tests} in \cite{operator_algebra_self_test2024}. However, for our calculations here, it will be beneficial always to compute the relevant quantities in a representation without resorting to abstract $C^\star$-algebras. \cite[Thm.~3.8]{operator_algebra_self_test2024} shows that this perspective is equivalent to using the abstract self-test framework, at least for states in the minimal tensor product. 

This work is divided into two main parts. In the first part, \autoref{sec:models}, we compare the models of \cite{Lim2013,Lobo2024} with \cite{kossmann2025reliableentropyestimationobserved} and discuss the role of the switch that routes the states either to Alice and Fred or to Alice and Bob. In \autoref{example:assumptions_switch}, we show that a no-signaling assumption between Alice's device and the switch is crucial for the security of the setup. In particular, for a fully device-independent modeling, it is important to make the assumptions of the setup explicit in order to connect to concrete experiments. In particular, all setups -- whether those in \cite{Lim2013,Lobo2024} or, explicitly, \cite{kossmann2025routedbelltestsarbitrarily} -- need to fulfill the marginal constraint, that is, the incoming state in Alice's lab must always the same independently of the random variable corresponding to the switch. In the second part of this work, \autoref{sec:robust_lifts}, we use the language of robust self-tests from \cite{operator_algebra_self_test2024} and show that the setup with local self-tests in Alice's and Bob's labs can be lifted to an effectively device-dependent optimization problem even in the presence of errors, enabling us to use all the tools from device-dependent QKD.

\section{Models}\label{sec:models}
 \def\rs#1{{\color{PineGreen}RS:#1}}%
\begin{figure}[ht]
  \includegraphics[width=0.9\textwidth]{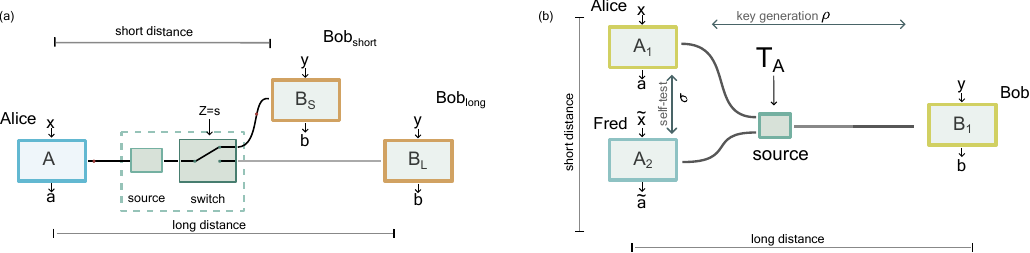}
  \caption{\label{fig:settings}  (a) Figure adapted from \cite{Lobo2024}.  In this model the switch and the source are treated as different devices that do not influence each other. The source creates a fixed state.  Controlled by a random variable $Z$, the Bob part of this state  is either routed to $Bob_{short}$ or to  $Bob_{long}$ via the switch.
  (b) Figure adapted from \cite{kossmann2025routedbelltestsarbitrarily}. Here the action of the switch is modeled as part of the source, which now has an extra classical parameter $T_A$ as input.
  }
\end{figure}

In this section, we compare two models for the \emph{routed Bell test} scenario proposed in \cite{Lim2013}. Here, a local Bell test ensures the security of a quantum key distribution protocol between two faraway parties, which leads to a protocol that can tolerate high losses in the quantum channel.  Thus, this approach enables protocols for \emph{device independent quantum key distribution} (DIQKD) that work over longer distances. However, there are different ways in which this approach has been be modeled.

The first setup, displayed in \autoref{fig:settings}~(a), has been introduced in \cite{Chaturvedi2024} and has been substantially improved in \cite{Lobo2024}. Its application to DIQKD has been studied in \cite{Le_Roy_Deloison_2025} and subsequent work. Based on this, a similar model, but using a different mathematical framework, i.e. the commuting operator framework, has been used in \cite{kossmann2025routedbelltestsarbitrarily} in order to extend the former work to a multi-party scenario on both sides. This setup is illustrated in \autoref{fig:settings}~(b). This section is devoted to elaborating on the differences between the setups in \autoref{fig:settings}~(a) and \autoref{fig:settings}~(b). 

\subsection{Security assumptions and switches}
DIQKD can arguably be understood as a framework in which the devices and any explicit classical description of the implemented measurements are abstracted away, and all assumptions are reduced to the fundamental ones: the correctness of quantum theory, access to local private randomness, and the confidentiality of any information within the agents' local laboratories. If a more intricate, possibly multipartite, model of a cryptographic protocol aims to call itself \emph{device-independent}, security under these assumptions provides the standard said protocol must rise to meet. 

In the concrete setting of routed Bell tests, we are therefore asked to clearly articulate the assumptions under consideration. From the setup it is clear that all proposed experiments may, without doubt, assume the correctness of quantum theory and the availability of local private randomness for Alice and Bob. Moreover, the assumption of confidentiality of any information within the laboratories of the key-generating parties, Alice and Bob, can be adopted, since it does not differ from the one in any bipartite DIQKD protocol studied extensively over the last decade (cf., e.g., \cite{ferradini2025definingsecurityquantumkey} for a recent discussion of assumptions in (DI)QKD). However, the manner in which the statistics obtained by quantum measurements are used must be considered with great care, since in both proposals \autoref{fig:settings}~(a) and \autoref{fig:settings}~(b) the statistics between $A-B_S$ in \autoref{fig:settings}~(a) and $A-F$ in \autoref{fig:settings}~(b) are intended to be exploited via a local Bell test in order to drastically improve the parties' knowledge about the underlying experimental devices. In particular, both protocols claim to yield a device-independent setup that enables the agents to extract in a so-called routed BB84 protocol\footnote{i.e. local CHSH tests and BB84 statistics between Alice and Bob} as much key as was previously known to be achievable only in the device-dependent case, or even in the strict semi-device-independent setting (cf.\ \cite{Le_Roy_Deloison_2025} for the setup in \autoref{fig:settings}~(a) and \cite{kossmann2025routedbelltestsarbitrarily} for the setup in \autoref{fig:settings}~(b), both claiming to achieve the Shor--Preskill key rate \cite{Shor_2000} in the routed setting. For the semi-device-independent case, see \cite{Woodhead_2016}.).\footnote{It is surprising, but well known, that the semi-device-independent BB84 protocol admits the same tight key rate as the device-dependent protocol.} Since it is moreover well known -- and readily verifiable -- that the BB84 protocol itself is not secure in a device-independent setting, as it admits a local-hidden-variable model and a perfect-leakage model in dimension four (see, e.g., \cite{pironio2009device}), a concise argument for its security under very precise assumptions is required. It is also crucial that this argument can be readily extended to all further, more involved, protocols. The core question coming out of this discussion regarding both proposals in \autoref{fig:settings}~(a) and in \autoref{fig:settings}~(b) can be summarized as follows: 
\begin{center}
    \emph{How can the information of the local Bell test can be incorporated in the key-generation protocol in a precise mathematical way?}
\end{center} 

The key decision to make concerning the use of information obtained from the local test lies in how the switch is modeled. However, since a switch is a very concrete optical element, we aim to adopt the mathematical language commonly used in the DIQKD literature to abstract these details away, namely the formalism of Ludwig boxes as introduced by Ludwig in \cite{Ludwig1985}. In this language, we simply model the switch as a source which, conditioned on a classical variable $T$, emits two types of states, $\rho$ and $\sigma$. 

The point we would like to stress as  particularly important is that in the setup \autoref{fig:settings}~(a) Alice's laboratory always receives the same state emitted by the source, independently of the position of the switch. However, if one draws a box around the entire switching setup, as one would like to do in a device-independent setting as in  \autoref{fig:settings}~(b), this information becomes inaccessible at the level of the box description. Thus, the statement that the source sends the same state to Alice's lab, regardless of the classical switch variable $T$, may have to be promoted to an explicit assumption. In the following, we exhibit a specific local-hidden-variable model showing that this assumption is indeed essential.

\begin{example}[Assumptions on switch]\label{example:assumptions_switch}
To show what goes wrong if we do not enforce the following condition, consider the following example, where the source produces the following states:
\begin{equation}
    \rho_{ABF} = \begin{cases} \dyad{\Omega}_{A_0F} \otimes \dyad{1}_{A_1} \otimes \sigma_B & T=1 \\
    \tau_{A_0B} \otimes \dyad{0}_{A_1} \otimes \mu_{F} & T=0 \end{cases}\,.
\end{equation}
Here, $\ket{\Omega}=\frac{1}{\sqrt{2}}(\ket{00} + \ket{11})$ is a maximally entangled state, $\tau=\frac{1}{2}(\dyad{00}+\dyad{11})$ is a classically correlated state, and $\sigma$, $\mu$ are arbitrary states. We have decomposed $A$ as $A_0A_1$. On $A_1$, the source sends to Alice the information whether a test round or a key round is being played such that Alice's devices can change their behavior accordingly. It is easy to see that in the case $T=1$, a local CHSH inequality will be maximally violated and that in the case $T=0$, an eavesdropper holding a purification of $\tau$ will have full information about the key being generated. Importantly, we observe that the marginals on Alice's side $\rho_A$ depend on the random variable $T.$
\end{example}

Indeed, upon revisiting the argument, \cite{kossmann2025routedbelltestsarbitrarily} proposes a DIQKD model with the natural marginal constraint that ensures Alice's lab does not receive any information about whether a given round is a local-test round or a key-generation round. The same property follows from the concrete modeling of the switch in \cite[Fig.~1]{Lobo2024}. To be concrete, the comparable model from \cite{kossmann2025routedbelltestsarbitrarily} in the setting of one local party is drawn in \autoref{fig:settings}~(b).

Here the random variable $T$ is responsible for deciding whether a state $\rho_{ABF}$, commonly used for key-generation, or a state $\sigma_{ABF}$, used for the local test, is prepared. In order to make the protocol secure, it is enough to ensure that 
\begin{align}\label{eq:marginal_constraint}
    \rho_A = \sigma_A,
\end{align}
i.e., Alice can not distinguish between key-generation rounds and local self-tests. As shown in \cite{kossmann2025routedbelltestsarbitrarily}, this is a natural assumption in order to even come up with analytical key rate results (cf.~\autoref{eq:keyrate_beta_explicit_eps}) and \autoref{eq:marginal_constraint} becomes a natural constraint  when calculating key rates with non-commutative polynomial optimization techniques as in \cite{kossmann2025reliableentropyestimationobserved}. 

\subsection{Comparing statistics}

In this section, we will demonstrate that the set of correlations produced by the models in \autoref{fig:settings}~(a) and \autoref{fig:settings}~(b) with marginal constraint \autoref{eq:marginal_constraint} are identical. To this end, we will define two types of channels which formalize the state preparation in the two models. Here, $T$ is the classical register which decides which kind of state is prepared and $A'$ is the register prepared by the source which is not sent to Alice. The first type of channel we consider (corresponding to the model in \autoref{fig:settings}~(b)) is given as
\begin{equation}\label{eq:our_channel_model}
\begin{aligned}
    \Phi_{AA^\prime T\to ABF}: \mathcal{S}(\mathcal{H}_{A} \otimes \mathcal{H}_{A^\prime} \otimes \mathcal{H}_T) &\to \mathcal{S}(\mathcal{H}_{A} \otimes \mathcal{H}_{B} \otimes \mathcal{H}_F)  \\
    \rho_{AA^\prime T } &\mapsto \sum_t \Gamma_{AA^\prime \to ABF}^t\left(\bra{t} \rho_{AA^\prime R}\ket{t}\right) \,.
\end{aligned}
\end{equation}
Here, the $\Gamma_{AA^\prime \to ABF}^r: \mathcal S(\mathcal H_A \otimes \mathcal H_{A'}) \to \mathcal S(\mathcal H_A \otimes \mathcal H_{B} \otimes \mathcal H_F) $ form a collection of quantum channels indexed by a finite set of classical indices $t$ such that 
\begin{equation} \label{eq:marginal-constraint-channel}
\operatorname{tr}_{BF} \; \circ \; \Gamma_{AA^\prime \to ABF}^t = \operatorname{tr}_{BF} \; \circ \; \Gamma_{AA^\prime \to ABF}^{t^\prime}    
\end{equation}
for all $t,t^\prime \in T$. Then, \autoref{eq:marginal-constraint-channel} is a no-signaling constraint, enforcing in particular a marginal constraint such as in \autoref{eq:marginal_constraint}. 

The second type of channel is given as
\begin{equation}\label{eq:pironio_channel_model}
    \begin{aligned}
        \Phi_{A^\prime T\to BF}^\prime: \mathcal{S}(\mathcal{H}_{A^\prime} \otimes \mathcal{H}_T) &\to \mathcal{S}(\mathcal{H}_{B} \otimes \mathcal{H}_F). 
    \end{aligned}
\end{equation}
These channels correspond to the model in \autoref{fig:settings}~(a), where we have identified $B_{S}$ with $F$ and $B_{L}$ with $B$ to formulate both setups using the same systems.
\begin{proposition}\label{prop:comparison_models}
    In the device independent setting are both models \autoref{eq:our_channel_model} and \autoref{eq:pironio_channel_model} equivalent. That means, given a state $\rho_{ABF}$ as an output state of the first model, then there exists a channel of the second model and an input state $\tau_{AA^\prime T}$ of the second model such that $\rho_{ABF} = \operatorname{id}_A \otimes \Phi_{A^\prime T\to BF}^\prime(\tau_{AA^\prime T})$ and vice versa. 
\end{proposition}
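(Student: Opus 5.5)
The plan is to prove the two inclusions separately. The direction from the second model to the first is immediate. The direction from the first to the second carries the content: a channel that acts jointly on $A$ and $A'$ but cannot signal from $t$ to $A$ will be turned into a source that prepares a fixed state on $A$ together with a channel acting only on $A'$. Throughout, ``output state'' means the family $t\mapsto \rho^t_{ABF}$ of states produced for each value of the classical switch variable, that is, the input $\rho_{AA'T}$ is classical on $T$.

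\emph{Second model $\Rightarrow$ first.} Let $\rho_{ABF}=\id_A\otimes\Phi'_{A'T\to BF}(\tau_{AA'T})$, and write $\Phi'^{\,t}:=\Phi'(\,\cdot\otimes\dyad{t}_T)$. Define
\[
\Gamma^t_{AA'\to ABF}:=\id_A\otimes\Phi'^{\,t}_{A'\to BF}.
\]
Then $\operatorname{tr}_{BF}\circ\Gamma^t=\id_A\otimes\operatorname{tr}_{A'}$, which does not depend on $t$. Hence \autoref{eq:marginal-constraint-channel} holds, and the channel \autoref{eq:our_channel_model} built from the $\Gamma^t$ reproduces $\rho_{ABF}$ on the same input.

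\emph{First model $\Rightarrow$ second.} The key tool is the structure theorem for semicausal channels (Eggeling--Schlingemann--Werner; cf.\ also Beckman--Gottesman--Nielsen--Preskill). It applies to a channel $\Gamma_{AA'T\to ABF}$ whose output marginal on $A$ does not depend on the input $T$. By \autoref{eq:marginal-constraint-channel}, the channel $\Gamma:=\sum_t \Gamma^t(\bra t\cdot\ket t)$ is of this kind. The theorem then lets me factor it through a memory system $M$ as
\[
\Gamma=(\id_A\otimes\mathcal{D}_{MA'T\to BF})\circ(\mathcal{E}_{AA'\to AM}\otimes\id_{A'T}).
\]
I expect verifying this in the required form to be the main obstacle, especially in the infinite-dimensional or commuting-operator setting. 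In finite dimensions I would cite the theorem directly. If a self-contained argument is wanted, I would instead use a Stinespring dilation of $\operatorname{tr}_{BF}\circ\Gamma^{t_0}$ and Stinespring uniqueness: every $\Gamma^t$ has the same $A$-marginal channel, so each $\Gamma^t$ factors through the common isometry $V:AA'\to A E$ followed by a channel $E\to BF$ that depends on $t$.

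Given the factorization, set $\tau_{AA''T}:=(\mathcal{E}\otimes\id)(\rho_{AA'T})$, where the new source register is $A'':=MA'$ and carries everything not sent to Alice. Set $\Phi'_{A''T\to BF}:=\mathcal{D}$. Then $\id_A\otimes\Phi'(\tau)=\Gamma(\rho_{AA'T})=\Phi(\rho_{AA'T})$, which is the claimed representation.

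Finally, I note why this is legitimate. Device-independently, only the input state of the second model is unconstrained, so absorbing $\mathcal{E}$ into the source state is allowed. Conversely, the second model automatically satisfies \autoref{eq:marginal_constraint}, since $\rho_A=\tau_A$ for every $t$. This matches \autoref{example:assumptions_switch}: the example is excluded from both models, and dropping \autoref{eq:marginal-constraint-channel} would break exactly the step where the semicausal decomposition is invoked.
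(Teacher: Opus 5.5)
Your proof is correct after one notational fix, and it takes a different route from the paper's.

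\textbf{How the two arguments differ.} The paper argues at the level of output states. It writes $\rho_{ABF}=\sum_t p_t\rho^t_{ABF}$ and asserts that all $\rho^t_A$ coincide. It then purifies each $\rho^t_{ABF}$ on $BFE$ and uses Uhlmann's theorem to obtain unitaries $U^t_{BFE}$ that map one fixed purification $\psi$ of $\rho_A$ to the $\psi^t$. The new source emits $\psi$ with $A'=BFE$, and the second-model channel is the $T$-controlled unitary followed by $\operatorname{tr}_{ET}$.

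You instead factor the channel itself. Stinespring uniqueness for the common marginal channel $\Lambda=\operatorname{tr}_{BF}\circ\Gamma^t$ gives $\Gamma^t=(\id_A\otimes\mathcal D^t_{E\to BF})\circ\operatorname{Ad}_V$ with a single isometry $V:AA'\to AE$. In effect this is the paper's Uhlmann step applied to the Choi states of the $\Gamma^t$, whose $RA$-marginals are $t$-independent, rather than to the outputs. Since $T$ is classical, this Stinespring argument is already the whole proof, and you do not need the Eggeling--Schlingemann--Werner theorem.

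\textbf{What each route buys.} Your factorization does not depend on the input state. It therefore never needs the step $\rho^t_A=\rho_A$. From \autoref{eq:marginal-constraint-channel}, that step only gives $\rho^t_A=\Lambda(\tau^t_{AA'})$, which is $t$-independent only when $\tau^t_{AA'}$ effectively does not depend on $t$, for example for a product input $\tau_{AA'}\otimes\tau_T$. The paper tacitly assumes this. Your argument also covers inputs in which $T$ is correlated with $AA'$, and it yields the conditioned-on-$T=t$ corollary directly. The paper's route is more elementary and produces a pure source state that is uncorrelated with $T$.

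\textbf{The notational fix.} Your displayed factorization uses $A'$ twice. Because $\mathcal E_{AA'\to AM}$ consumes $A'$, it must read $(\id_A\otimes\mathcal D_{MT\to BF})\circ(\mathcal E_{AA'\to AM}\otimes\id_T)$, and then $A''=M$. The other possible reading has $\mathcal E$ acting on $A$ alone and $A'$ passed on to $\mathcal D$. That would require no signalling from $A'$ to Alice's output, which \autoref{eq:marginal-constraint-channel} does not provide.

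\textbf{A smaller point.} Your closing remark that the second model gives $\rho_A=\tau_A$ for every $t$ likewise holds only when $\tau_{AA'T}$ is product across $T$. In general one gets $\rho^t_A=\tau^t_A$.
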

\begin{proof}
    To see that the second model implies the first one we can just choose
    \begin{align}
        \Phi_{AA^\prime T\to ABF} = \operatorname{id}_A \otimes \Phi_{A^\prime T\to BF}^\prime.    
    \end{align}
    As $T$ is classical, we identify
    \begin{equation}
        \Gamma_{AA^\prime \to ABF}^t(\mu_{AA'}) = \operatorname{id}_A \otimes \Phi_{A^\prime T\to BF}^\prime(\mu_{AA'} \otimes \dyad{t}) 
    \end{equation}
    for any state $\mu_{AA'} \in \mathcal S(\mathcal H_A \otimes \mathcal H_{A'})$. The condition that the conditional channels $\operatorname{tr}_{BF} \; \circ \;\Gamma_{AA^\prime \to ABF}^t$ are independent of $t$ is trivially satisfied. Thus, the channels in the second model are special cases of the channels allowed in the  first model.

    Let us now consider the converse, i.e., that the first model implies the second one. As $\rho_{ABF}$ is an output state in the first model, there exists a state $\tau_{AA^\prime T} = \sum_t p_t \tau_{AA^\prime}^t\otimes \ketbra{t}{t}$ such that 
    \begin{align}
       \rho_{ABF} = \Phi_{AA^\prime T\to ABF}(\tau_{AA^\prime T}).
    \end{align}
    We consider in the following 
    \begin{align}
        \rho_{ABF} = \sum_t  p_t  \underbrace{\Gamma_r\left(\tau_{AA^\prime}^t\right)}_{\eqqcolon \rho_{ABF}^t}.
    \end{align}
    Here, we have written 
    \begin{equation*}
        \tau_{AA^\prime T} = \sum_{t} p_t\tau_{AA^\prime}^t \otimes \dyad{t},
    \end{equation*}
    which we can do since $T$ is a classical register.
    
    Now, because of $\operatorname{tr}_{BF} \; \circ \; \Gamma_{AA^\prime \to ABF}^t = \operatorname{tr}_{BF} \; \circ \; \Gamma_{AA^\prime \to ABF}^{t^\prime}$ for all $t,t^\prime \in T$, it holds that $\rho_A^t =  \rho_A$ for all $t\in T$. Hence, we can find a purifying system $E$ and purifications $\psi_{ABFE}^t$ of $\rho_{ABF}^t$ such that there exist by Uhlmann's theorem unitaries $U_{BFE}^t$ relating $\psi_{ABFE}$, a purification of $\rho_{ABF}$, to $\psi_{ABFE}^t$. Thus, we have an extended state 
    \begin{equation}
    \begin{aligned}
        \rho_{ABFE} &\coloneqq   \sum_t p_t \ketbra{\psi_{ABFE}^t}{\psi_{ABFE}^t} \\
        &=\sum_t p_t U_{BFE}^t\ketbra{\psi_{ABFE}}{\psi_{ABFE}}(U_{BFE}^t)^{\dagger} \\
        &= \operatorname{tr}_T\bigg[\bigg(\sum_{t}U_{BFE}^t \otimes \ketbra{t}{t}\bigg) \sum_t p_t \ketbra{\psi_{ABFE}}{\psi_{ABFE}} \otimes \ketbra{t}{t} \bigg(\sum_{t}U_{BFE}^t \otimes \ketbra{t}{t}\bigg)^\dagger\bigg].
    \end{aligned}
    \end{equation}
    Now we define as input state $\tau^\prime_{AA^\prime T}\coloneqq   \sum_t p_t \ketbra{\psi_{ABFE}}{\psi_{ABFE}} \otimes \dyad{t} $ with $A^\prime \coloneqq   BFE$. Then we apply the channel 
    \begin{equation}
    \operatorname{tr}_{ET}\circ \bigg(\sum_{t}U_{BFE}^t \otimes \ketbra{t}{t}\; \cdot \; \left(\sum_{t}U_{BFE}^t \otimes \ketbra{t}{t}\right)^\dagger\bigg)\, ,
    \end{equation} \label{eq:channel-for-converse}
    which outputs $\rho_{ABF}$ and the channel just acts on $A^\prime$ and the classical register $T$. Thus, we have found a suitable $\Phi_{A^\prime T\to BF}^\prime$ by using \autoref{eq:channel-for-converse} as its definition. 
\end{proof}
\begin{corollary}
    The construction from \autoref{prop:comparison_models} works even conditioned on an event $T = t$. 
\end{corollary}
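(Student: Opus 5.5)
The plan is to reread the two directions of the proof of \autoref{prop:comparison_models} and check that each construction respects the conditioning on a fixed value $t$ of the classical register $T$. The goal is a conditional statement. For every $t$, the conditional output $\rho^t_{ABF} = \Gamma^t_{AA'\to ABF}(\tau^t_{AA'})$ of the first model should equal $\operatorname{id}_A \otimes \Phi'_{A'T\to BF}(\tau'_{AA'}\otimes\dyad{t})$ for the channel $\Phi'$ and input $\tau'$ built in the proof, and conversely.

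The direction from the second model to the first is immediate, because there $\Gamma^t$ is defined as $\operatorname{id}_A\otimes\Phi'(\,\cdot\,\otimes\dyad{t})$. Conditioning on $T=t$ therefore simply selects the summand $\Gamma^t(\mu_{AA'})$, and this summand coincides with the conditional output of the second model.

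For the converse I will use that $T$ is classical and that every map involved is block diagonal in $T$. The input $\tau'_{AA'T}=\sum_t p_t\dyad{\psi_{ABFE}}\otimes\dyad{t}$ is block diagonal, and the channel $\operatorname{tr}_{ET}\circ \mathrm{Ad}_{\sum_t U^t_{BFE}\otimes\dyad{t}}$ is a controlled unitary followed by a partial trace. Projecting onto $\dyad{t}$ before tracing out $T$ isolates the block $U^t_{BFE}\dyad{\psi_{ABFE}}(U^t_{BFE})^\dagger$, with weight $p_t$. By the Uhlmann step in the proof, this block equals $\dyad{\psi^t_{ABFE}}$. Tracing out $E$ then gives $\rho^t_{ABF}$, which is exactly the conditional output of the first model. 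Normalising by $p_t$ finishes the argument.

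The one point that needs care is the case $p_t=0$, where conditioning is undefined. I will handle it by noting that $\psi^t$ and $U^t$ are constructed from $\tau^t_{AA'}$ and $\Gamma^t$ alone, not from $p_t$. That construction uses only the marginal constraint $\rho^t_A=\rho_A$, which holds for all $t$ by \autoref{eq:marginal-constraint-channel}. Consequently we may take $p_t>0$ for all $t$ without loss of generality, and the conditional identity holds for each $t$ separately.
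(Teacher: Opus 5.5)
Your proposal is correct and is exactly the reasoning the paper leaves implicit (it states the corollary without a separate proof). The forward direction is trivial. In the converse, the channel is a $T$-controlled unitary applied to the $T$-independent input $\dyad{\psi_{ABFE}}$, so compressing onto $\dyad{t}$ gives $p_t\operatorname{tr}_E\dyad{\psi^t_{ABFE}}=p_t\,\rho^t_{ABF}$.

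One point is worth making explicit. The identity $\rho^t_A=\rho_A$ follows from \autoref{eq:marginal-constraint-channel} only if the conditional inputs $\tau^t_{AA'}$ do not depend on $t$, i.e.\ the source is independent of the switch. The paper also assumes this tacitly, and it is what makes $\tau^t_{AA'}$, and hence your treatment of $p_t=0$, well defined.
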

Thus, we have clarified that both models we are considering yield the same statistics.

\subsection{Short-range quantum correlations in the multipartite setting}\label{subsec:srq} After debating models in earlier works, in \cite{Lobo2024} a proper definition for the set of short quantum correlations was given. Short range quantum correlations in the sense of \cite{Lobo2024} are the correlations in the model depicted in \autoref{fig:settings}~(a) that can be obtained without any entanglement being distributed to $Bob_{long}$. Based on the model considered in this work we are in a position to generalize this notion to situations in which Alice and Bob are assisted by arbitrarily many local partners, whereas the original definition in \cite{Lobo2024} did not allow for such a generalization. We will see that the previous definition will arise as a subcase. 

For the following consider a setting with parties $A_0\dots A_n$ on Alice's side and parties $B_0 \dots B_n$ on Bobs side. Routing is controlled by switches with input random variables $T_A$ and $T_B$ with support $\{0,1,\dots,n\}$. Employing the perspective of a virtual global source, this corresponds to the creation of a state $\omega^s$ for every input $(T_A,T_B)=s$.  
The measurement of each party on Alice's side has an input $x_k$ and an output $a_k$ for $k = 0,1\dots,n$. Inputs and outputs on Bob's side are labeled by $b_k$ and $y_k$, respectively. 
The measurements themselves are described by POVM elements $M^k_{a_k|x_k}$ and $N^k_{b_k|y_k}$. In a device-independent setting these can be assumed to be projective \cite{Schwonnek2021}. We will model the independence of the local measurements by assuming the typical commutator relations of a form
\begin{align}
    [M^k_{a_k|x_k},M^j_{a_j|x_j}]=[N^k_{b_k|y_k},N^j_{b_j|y_j}]=[M^k_{a_k|x_k},N^j_{b_j|y_j}]=0 . 
\end{align}
Formally, we can now treat this as a description of an universal $C^*$-algebra generated by partially commuting projections \cite{Blackadar2006}. Intuitively, this basically corresponds to a proper abstract description of what it means to consider all measurements of all states in all feasible Hilbert spaces. In this language a state can be seen as a linear functional acting on a measurement which motivates the notation $\omega^s(M)$ to describe the expectation of an observable $M$ in a state $\omega^s$. The notation  of  $\tr(\rho_\omega M_\mathcal{H})$ which is more common in the DIQKD community  corresponds to a  representation of this algebra on a concrete Hilbert space $\mathcal{H}$. For the purposes of this work, these formulations can be assumed to be equivalent. The main advantage of the linear functional notation as used here is that measurements and observables can be kept as abstract symbols and that a mathematically delicate iteration over all possible realizations in all feasible Hilbert spaces can be avoided.   
The measurement data in an experiment described in this model leads to conditional probability distributions, also known as behaviors, 
\begin{align}
    p^s(\vec a, \vec b| \vec x , \vec y) =\omega^s \left( M^0_{a_0|x_0} \dots  N^n_{b_n|y_n} \right),
\end{align}
where we grouped indices as $\vec a= (a_0\dots a_n) $ and so on. 

Let $\omega^s_K$ denote the restriction of a state to a subsystem $K$, i.e. the marginal on $K$. 
We express the independence of a local system of the switch settings by marginal constraints 
\begin{align}
    \omega^s_K=\omega^{s'}_K \quad \forall\; s,s'\in\{0,\dots,n\} 
\end{align}
Intuitively, this expresses that the setting of a switch can not have any influence recognizable by any experiment that only acts on the $K$ system.  
We say that a state $\omega^s$ is $A$--$B$ separable if it can be decomposed as $\omega=\sum p_i w_{i,A_0\dots A_n}w_{i,B_0\dots B_n}$, i.e. as convex hull of product states.
We will use the convention that the switch setting $s=(0,0)$ will be used for key generation. The key will be based on the measurement data obtained by the parties $A_0 B_0$.
Based on this convention we can define correlations with a local model in the key generation round via 
\begin{align}\label{localquantum}
    LQC_{A_0B_0}=\{  p^s(\vec a, \vec b| \vec x , \vec y) \ \vert \   \omega^{(0,0)} \text{is $A$--$B$ separable}, \omega^s_{A_0}=\omega^{s'}_{A_0}, \omega^s_{B_0}=\omega^{s'}_{B_0} \forall s,s'\}
\end{align}
Note that, in a situation  without further test parties $A_1\dots$ and $B_1,\dots$ this definition recovers the well known set of correlations arising from local hidden variable models. Furthermore, in the case of  Bob being  characterized by a concrete Hilbert space, we get the set of behaviors measurable on an unsteerable state. In this sense this definition can be seen a natural extension of these concepts to a situation with partial information obtained by self-testing.  

We can adapt this definition to the 2+1 party setting of Lobo et al. \cite{Lobo2024} (see \autoref{fig:settings}(a)). It is easy to see that our definition recovers the previous definition of short range quantum correlations from \cite{Lobo2024}. A formal proof is given in the appendix \autoref{appendix:eqv} .

We will briefly sketch the main mechanisms of this proof. Followig the nomenclature of \cite{Lobo2024}, we have two states $\omega^{long}$ and $\omega^{short}$, where our party $A_1$ is equivalent to the party $B_S$. $B_L$ corresponds to $B_0$ in our setting. 
Here, our definition of A-B local correlations \autoref{localquantum} reduces to the set of correlations generated by a state $\omega^{long}$ that is separable with respect to $B_L$ and consistent with the  statistics observed by $A-B_S$ on a state $\omega^{short}$. 
In contrast, and after applying the equivalence shown in the previous section, Lobo et al. (see \cite{Lobo2024} Eq. 15)  define the set of short range correlations as all correlations that arise from a joint measurement by $B_L$ on $\omega^{long}$. The equivalence of the two definitions reduces to the claim  that in device-independent models joint measurability and separability are interchangeable concepts at the level of correlations.  
In one direction it has to be shown that any behavior measured on a separable state can be obtained from a joint measurement. As every separable state admits a local hidden variable model, this joint measurement can be directly constructed from the classical hidden variable model (see for example \cite{van2024schmidt} for constructions). 

In the other direction, we have to convince ourselves that any behavior measured by a joint measurement can be obtained from a separable state. W.l.o.g. any fixed joint measurement can be represented by a projective parent measurement on a larger space by Naimark's dilation Theorem. These projections generate an abelian subalgebra of $B_{long}$, which is large enough to describe the specific behavior. As abelian algebras cannot be entangled with any other system the corresponding state must be separable \cite{van2024schmidt}.  

\section{Robust lifts of DIQKD to device-dependent QKD}\label{sec:robust_lifts}

It has been recognized in earlier work regarding routed Bell setups that routed Bell tests are, in principle, able to lift device-independent models to device-dependent models in the setting of a perfect local test as part of a routed BB84 protocol \cite{Lim2013,Chaturvedi2024,Lobo2024,Le_Roy_Deloison_2025}. In this specific setting of a routed BB84 protocol Alice and Bob perform BB84 measurements and Alice does a local CHSH test in order to certify her device. It is well-known that in this case the optimal BB84 key-rate, known as Shor-Preskill key-rate \cite{Shor_2000} is achieved as shown in \cite{Woodhead_2016}. However, hitherto a concrete and general mathematical modeling taking into account an error due to imperfect local tests was still missing, as it would require a smooth interpolation between device-independent and device-dependent setups. These, however, are fundamentally different from the perspective of assumptions. This section is dedicated to a rigorous mathematical analysis of this observation using the recently introduced notion of \emph{robust self-tests} \cite{operator_algebra_self_test2024}. In order to introduce the language of robust self-tests, we will start by treating bipartite quantum experiments between Alice and Bob where both parties query their boxes with local private randomness. Each of them has a finite input set, called $X$ for Alice and $Y^\prime$ for Bob, and a finite output set denoted by $A$ for Alice and $B$ for Bob. We assume a setup in which Alice and Bob are spacelike separated. Hence, Alice and Bob will have in the asymptotic limit a conditional probability distribution
\begin{align}\label{eq:general_conditional_probability}
    p(a,b\vert x,y), \qquad a \in A, \ b \in B, \ x \in X, \ y\in Y^\prime, 
\end{align}
which satisfies the usual non-signaling conditions imposed by the spacelike separation. A \emph{self-test} of a conditional probability distribution $p(a,b\vert x,y)$ is, in a sense, a unique quantum model defined by a tuple 
\begin{align}\label{eq:def_quantum_model}
    \tilde{S} &\coloneqq   \left(\mathcal{H}_{\tilde{A}}, \mathcal{H}_{\tilde{B}}, \left\{\tilde{M}_{ a\vert x}\right\},\left\{\tilde{N}_{b\vert y}\right\},\ket{\varphi_{\tilde{A}\tilde{B}}}\right),
\end{align}
whereby we have fixed (finite-dimensional) Hilbert spaces, fixed measurement operators on these Hilbert spaces and, without loss of generality, a shared pure state. An application of  Born's rule yields
\begin{align}\label{eq:Born_rule}
    p(a,b\vert x,y) = \tr\left[\tilde{M}_{a\vert x} \otimes \tilde{N}_{b\vert y}\ketbra{\varphi_{\tilde{A}\tilde{B}}}{\varphi_{\tilde{A}\tilde{B}}}\right]
\end{align} and each other model of the form in \autoref{eq:def_quantum_model} with the same statistics is connected to \autoref{eq:def_quantum_model} via isometries and auxiliary states \cite{Paddock_2023}. One then defines the unique model via the smallest model ordered by isometries mapping between them (in the finite-dimensional case, this is just the model with the smallest dimensions of $\mathcal{H}_{\tilde{A}}$ and $\mathcal{H}_{\tilde{B}}$). Setups which allow for self-testing are exceptional, because usually completely different models can lead to the same statistics as in \autoref{eq:general_conditional_probability}. 

In DIQKD we are interested in the randomness we can extract purely from the statistics produced by a quantum experiment,  without making any further assumptions on the devices. The experiment is described by a general but unknown quantum model 
\begin{align}\label{eq:general_quantum_model}
    S \coloneqq   \left(\rho_{AB},\{M_{a\vert x}\},\{N_{b\vert y}\},\mathcal{H}_A,\mathcal{H}_B\right). 
\end{align} 
corresponding to the observed statistics via Born's rule (see \autoref{eq:Born_rule}). Importantly, $\rho_{AB}$ does not need to be pure.
Generally, given any statistic $p$, in DIQKD we are interested in an optimization over all compatible models 
\begin{equation}\label{eq:optimization_DIQKD}
    \begin{aligned}
        \inf \ &H(A\vert X=\tilde{x},E) \\
        \operatorname{s.t.} \ &S \quad \text{quantum model} \\
        &S \quad \text{yields the statistics} \ p.
    \end{aligned}
\end{equation}
In general, the optimization problem in \autoref{eq:optimization_DIQKD} is very difficult and few general techniques exist, which rely on techniques from non-commutative optimization \cite{Brown_2024,kossmann2025reliableentropyestimationobserved}. These techniques scale badly with the problem size. However, in the situation of an exact self-test, the conditional von Neumann entropy $H(A\vert X=\tilde{x},E)$ for a specific measurement setting can be calculated from the unique model in \autoref{eq:def_quantum_model} as sketched in the following proposition. Thus, the optimization in \autoref{eq:optimization_DIQKD} simplifies drastically.

\begin{proposition}\label{prop:perfect_self_test_conditional_vonNeumann}
    Let $\left(\mathcal{H}_A,\mathcal{H}_B,\{M_{a\vert x}\},\{N_{b\vert y}\},\rho_{AB}\right)$ be a realization of the statistics $p(a,b\vert x,y)$ and assume that there exists an exact self-test for $p(a,b\vert x,y)$ given by $\left(\mathcal{H}_{\tilde{A}},\mathcal{H}_{\tilde{B}},\{\tilde{M}_{a\vert x}\},\{\tilde{N}_{b\vert y}\},\ket{\psi_{\tilde{A}\tilde{B}}}\right)$. Then we have 
    \begin{align}\label{eq:conditional_vonNeumann}
        H(A\vert X=\tilde{x},E)_{\rho}
        =
        H(A\vert X=\tilde{x})_{\ket{\psi_{\tilde{A}\tilde{B}}}}
        =
        H((q_a)_{a\in A}),
    \end{align}
    where
    \begin{align}
        q_a \coloneqq \tr\!\left[\ketbra{\psi_{\tilde{A}\tilde{B}}}{\psi_{\tilde{A}\tilde{B}}}\,
        \tilde{M}_{a\vert \tilde{x}}\otimes \mathds{1}_{\tilde{B}}\right].
    \end{align}
\end{proposition}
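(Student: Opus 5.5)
The plan is to purify the realization and use the self-testing isometry to show that Eve's system ends up in product with Alice's key register. First, let $E$ be a purifying system for $\rho_{AB}$, so that $\ket{\psi_{ABE}}$ satisfies $\operatorname{tr}_E\ketbra{\psi_{ABE}}{\psi_{ABE}}=\rho_{AB}$. The pair $(\mathcal{H}_A,\mathcal{H}_B\otimes\mathcal{H}_E,\{M_{a\vert x}\},\{N_{b\vert y}\otimes\mathds{1}_E\},\ket{\psi_{ABE}})$ is again a realization of $p(a,b\vert x,y)$. In this realization Bob's system absorbs Eve, and the state is pure, which is the form required by the self-testing definition.

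Second, I invoke the exact self-test in the form recalled after \autoref{eq:Born_rule} (following \cite{Paddock_2023}). It gives local isometries $V_A:\mathcal{H}_A\to\mathcal{H}_{\tilde A}\otimes\mathcal{H}_{A_{\mathrm{aux}}}$ and $V_{BE}:\mathcal{H}_B\otimes\mathcal{H}_E\to\mathcal{H}_{\tilde B}\otimes\mathcal{H}_{B_{\mathrm{aux}}}$, together with an auxiliary state $\ket{\mathrm{aux}}$, such that
\begin{equation*}
(V_A\otimes V_{BE})(M_{a\vert x}\otimes\mathds{1})\ket{\psi_{ABE}}=(\tilde M_{a\vert x}\otimes\mathds{1})\ket{\psi_{\tilde A\tilde B}}\otimes\ket{\mathrm{aux}}\quad\text{for all } a,x.
\end{equation*}
Summing over $a$ shows that the state itself is mapped to $\ket{\psi_{\tilde A\tilde B}}\otimes\ket{\mathrm{aux}}$.

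Third, I compute the classical-quantum state on Alice's key register and Eve. Write $\rho_{\mathsf{A}E}=\sum_a \ketbra{a}{a}\otimes\operatorname{tr}_{AB}[(M_{a\vert\tilde x}\otimes\mathds{1})\ketbra{\psi}{\psi}]$. Eve's conditional states are unchanged if I also trace out $\tilde B$ and $B_{\mathrm{aux}}$ after applying the isometries; this holds because isometries on the traced systems do not change the partial trace. The delicate point is that $E$ is not a separate tensor factor after $V_{BE}$ acts. To handle it, I take the conditional state on $\tilde A\otimes A_{\mathrm{aux}}\otimes\tilde B\otimes B_{\mathrm{aux}}$, which equals $\tilde M_{a\vert\tilde x}\ketbra{\psi_{\tilde A\tilde B}}{\psi_{\tilde A\tilde B}}\tilde M_{a\vert\tilde x}^\dagger\otimes\ketbra{\mathrm{aux}}{\mathrm{aux}}$.

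The main obstacle is extracting from this that Eve's conditional states are $q_a\,\sigma_E$ for a single state $\sigma_E$ independent of $a$. I would argue via the complementary picture. Since $\ket{\psi_{ABE}}$ is a purification and the isometries act locally, Eve's information about the outcome is bounded by what is stored jointly in $B$ and $E$. Any extension of $\ket{\psi_{\tilde A\tilde B}}\otimes\ket{\mathrm{aux}}$ viewed from Alice's side factorizes: the reduced state on $\tilde A$ together with any third party $E'$ purifying the rest must be of the form $\operatorname{tr}_{\tilde B}\ketbra{\psi_{\tilde A\tilde B}}{\psi_{\tilde A\tilde B}}\otimes\omega_{E'}$.

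Concretely, I would apply the isometry only on Alice's side, $(V_A\otimes\mathds{1}_{BE})\ket{\psi_{ABE}}$. Its marginal on $\tilde A$ is the reduced state of the pure state $\ket{\psi_{\tilde A\tilde B}}$. From this I would argue that the overall state is $\ket{\psi_{\tilde A\tilde B'}}\otimes\ket{\text{junk}_{A_{\mathrm{aux}}BE}}$, obtained by Uhlmann applied on $BE$, with $\tilde B'$ embedded in $BE$. Eve's state in $BE$ then carries only the $\tilde A$-purifying part, which belongs to Bob. If $\tilde A$ is not maximally entangled, the statement is interpreted as Eve holding only $E$ and not Bob's system.

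It then follows that $\rho_{\mathsf{A}E}=\sum_a q_a\ketbra{a}{a}\otimes\sigma_E$, so $H(A\vert X=\tilde x,E)=H((q_a)_a)$, which also equals $H(A\vert X=\tilde x)_{\psi}$.
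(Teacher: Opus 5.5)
There is a genuine gap, and it sits at the step you yourself flag as the main obstacle. Once you absorb $E$ into Bob's lab, the self-test only gives an isometry $V_{BE}$ acting \emph{jointly} on $\mathcal H_B\otimes\mathcal H_E$. Nothing you write shows that Eve is decoupled from Alice's outcome.

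Your Uhlmann step on $BE$ just returns $(V_A\otimes\mathds 1_{BE})\ket{\psi_{ABE}}=(\mathds 1\otimes V_{BE}^\dagger)\bigl(\ket{\psi_{\tilde A\tilde B}}\otimes\ket{\mathrm{aux}}\bigr)$. This places $\tilde B'$ somewhere inside $B\otimes E$. The claim that this purifying part ``belongs to Bob'' is asserted, not derived. Your remark that every extension of the pure state $\ket{\psi_{\tilde A\tilde B}}\otimes\ket{\mathrm{aux}}$ is a product applies only to systems \emph{outside} that state. Here $V_{BE}$ has mapped $E$ \emph{inside} $\tilde{\mathcal H}_B\otimes\mathcal H_{B_{\mathrm{aux}}}$.

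In fact your argument uses only the relation for $M_{a\vert x}$ and the state relation. Both are also satisfied by $\ket{\psi_{ABE}}=\ket{\psi_{\tilde A\tilde B}}_{AE}\otimes\ket{0}_B$ with $M_{a\vert x}=\tilde M_{a\vert x}$, $V_A=\mathds 1$ and $V_{BE}$ the swap of $B$ and $E$. In that example Eve holds Alice's entire purification; for the CHSH model, $H(A\vert X=\tilde x,E)=0\neq 1$. So no argument built only on these relations can work. Your closing remark about $\tilde A$ not being maximally entangled does not address this.

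The paper avoids the problem by using the mixed-state form of the self-test. There the isometries $I_A$, $I_B$ act on $A$ and $B$ alone and satisfy $(I_A\otimes I_B\otimes\mathds 1_E)\ket{\psi_{ABE}}=\ket{\psi_{\tilde A\tilde B}}\otimes\ket{\Phi_{\mathrm{aux}_A\mathrm{aux}_BE}}$. Thus $E$ ends up in the junk, and the cq-state is immediately $\sum_a q_a\ketbra{a}{a}\otimes\Phi_E$.

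If you want to keep your pure-state reduction, the missing ingredient is Bob's half of the self-test, which you never use. You would also need to state an extra assumption on the ideal model, which holds for CHSH: $\ket{\psi_{\tilde A\tilde B}}$ has full Schmidt rank and the $\tilde N_{b\vert y}$ generate $\mathcal B(\tilde{\mathcal H}_B)$. Under that assumption the argument goes as follows:
\begin{enumerate}
\item The support $S$ of $\rho_{BE}$ reduces each $N_{b\vert y}\otimes\mathds 1_E$, and $V_{BE}$ intertwines these restrictions with $\tilde N_{b\vert y}\otimes\mathds 1_{B_{\mathrm{aux}}}$.
\item Every $\mathds 1_B\otimes O_E$ commutes with $N_{b\vert y}\otimes\mathds 1_E$, so its compression to $S$ commutes with the restricted Bob operators.
\item By the commutant argument, $V_{BE}$ therefore carries this compression to $\mathds 1_{\tilde B}\otimes R$.
\item This gives $\bra{\psi}M_{a\vert\tilde x}\otimes O_E\ket{\psi}=q_a\bra{\psi}O_E\ket{\psi}$, which is exactly the factorization $\operatorname{tr}_{AB}[(M_{a\vert\tilde x}\otimes\mathds 1)\psi]=q_a\rho_E$ you need.
\end{enumerate}
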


\begin{proof}
    By the exact self-test property, there exist local isometries
    \begin{align}
        I_A: \mathcal{H}_A \to \mathcal{H}_{\tilde{A}} \otimes \mathcal{H}_{\operatorname{aux}}^A
        \quad \text{and} \quad
        I_B: \mathcal{H}_B \to \mathcal{H}_{\tilde{B}} \otimes \mathcal{H}_{\operatorname{aux}}^B,
    \end{align}
    such that for a purification $\ket{\psi_{ABE}}$ of $\rho_{AB}$ we have
    \begin{align}
        (I_A\otimes I_B\otimes \mathds{1}_E)\ket{\psi_{ABE}}
        =
        \ket{\Phi_{\operatorname{aux}_A \operatorname{aux}_B E}}
        \otimes
        \ket{\psi_{\tilde{A}\tilde{B}}}.
    \end{align}
    Hence, after measuring $A$ with setting $\tilde{x}$, the resulting classical-quantum state factorizes as
    \begin{align}
        \rho_{AE}^{\tilde{x}}
        \;\cong\;
        \ketbra{\Phi_{\operatorname{aux}_A \operatorname{aux}_B E}}{\Phi_{\operatorname{aux}_A \operatorname{aux}_B E}}_{\operatorname{aux}_A \operatorname{aux}_B E}
        \otimes
        \sum_a q_a \ketbra{a}{a}_A,
    \end{align}
    up to the above local isometries. Since conditional von Neumann entropy is invariant under local isometries on the conditioning system and unchanged by tensoring a product state on that system, it follows that
    \begin{align}
        H(A\vert X=\tilde{x},E)_\rho
        =
        H(A\vert X=\tilde{x})_{\ket{\psi_{\tilde{A}\tilde{B}}}}.
    \end{align}
    Finally, on the ideal state the register $A$ is classical with distribution $(q_a)_a$, so
    \begin{align}
        H(A\vert X=\tilde{x})_{\ket{\psi_{\tilde{A}\tilde{B}}}}
        =
        H((q_a)_{a\in A}).
    \end{align}
\end{proof}

The key difference between \autoref{prop:perfect_self_test_conditional_vonNeumann} and a local self-test in a routed setup is that the Alice-Bob correlations are used not only for key generation, but also for self-testing the devices. This leads to two natural questions.

First, what is the effect of noise in the observed statistics, and how does the conditional von Neumann entropy behave under such perturbations? In general, inferring the conditional von Neumann entropy (defined via a purification) directly from observed statistics is difficult, which is also reflected by the optimization over all compatible models being a hard problem. For self-tests, the notion of robustness introduced in \cite{operator_algebra_self_test2024} provides the appropriate framework, and one goal of this work is to apply it to the conditional von Neumann entropy. Generally, it is unclear how the conditional von Neumann entropy behaves for certain statistics if one varies the statistics slightly. This phenomenon has been observed in \cite{Farkas2021} and \cite{Wooltorton2024,Farkas2024}, where for arbitrarily small non-locality, quantified in terms of Bell-violation, both statistics allowing for secret key extraction and not allowing for secret key extraction have been found. Here, it is obvious that knowing the statistics (or slightly more general Bell-violations) is not sufficient to infer something about the conditional von Neumann entropy.   

Second, self-tests can also certify the devices themselves, i.e. the measurement operators. In routed Bell-test scenarios, noise between distant parties is most likely governed by the physical Lambert-Beer effect. This suggests that one may self-test the devices locally and interpret the Lambert--Beer effect as effectively reducing the bit-error rate in key-generating rounds.

To connect with the results in \cite{Tomamichel_2013,Lim2013,Chaturvedi2024,Lobo2024,Le_Roy_Deloison_2025}, we proceed in two steps. First, we adapt the framework of robust self-tests to the BB84 setting, recovering the well-known conclusions of \cite{Tomamichel_2013,Lim2013}. Second, we extend the analysis to the setting of general routed Bell tests with self-test.

\subsection{Routed BB84 protocol}\label{subsec:bb84_protocol}

The simplest model to investigate many of the interesting properties of a routed Bell setup is the routed BB84 protocol. The routed BB84 protocol is defined by the actions of three parties: Alice, Bob, and Fred. Each of them has two projections, and Alice and Bob measure noisy BB84 statistics, while Alice and Fred measure CHSH statistics. It is well known that the entropic uncertainty relation under side information \cite{Berta2010} is tight for the BB84 protocol, which is also used in the semi-DI framework \cite{Woodhead_2016}. Semi-DI means that one party, usually Alice, is fully trusted and a concrete Hilbert space representation of her device is known. The step from semi-DI to routed BB84 is simply to allow noise in the local CHSH test, because, in fact, an optimal local CHSH test self-tests Alice's device perfectly, such that we would be back in the semi-DI case \cite{Mayers2004,Paddock_2023}. This observation was made in \cite{Le_Roy_Deloison_2025} and verified numerically in \cite{kossmann2025routedbelltestsarbitrarily}. Without yet stating what robustness in self-testing means, this already shows, in some sense, a robustness effect of the key rate depending on the quality of the local CHSH test. The key for an analysis of the more refined routed BB84 protocol with dependency on the local test is the adaptive uncertainty relation from \cite{Tomamichel_2013}, which yields tight bounds in the case of an optimal CHSH test and can be reduced to the observation that, in that case, the whole analysis depends only on the overlap between the two projections in Alice's device. Thus, from a theoretical perspective, the analysis of this setup can be reduced to the effective overlap of Alice's two projections. This drastically simplifies the analysis and even yields quantitative bounds on the key rate. Nevertheless, we emphasize and show subsequently in \autoref{thm:robust_self_test_routedBB84} that the theory of robust self-tests can already be applied, as the CHSH test is a robust self-test \cite[Thm.~9.5]{operator_algebra_self_test2024}. The following definition is from \cite{Tomamichel_2013}.

\begin{definition}[Effective overlap]\label{def:effective_overlap}
Let $\rho_A \in \mathcal{S}(\mathcal{H}_A)$ and let
$Z=\{M_{a\mid Z}\}_a$ and $X=\{M_{b\mid X}\}_b$ be POVMs on $A$. The \emph{effective overlap} of $(\rho_A,X,Z)$ is defined as
\begin{align}\label{eq:definition_oevrlap_1}
c^\star(\rho_A,X,Z)
\coloneqq  
\inf_{\left(U,X',Z',K'\right)}
\sum_k \tr\left[P_{A'}^{k}\,U\rho_A U^\dagger\right]\;
\max_{b}\left\lVert
\sum_a
\left(P_{A'}^{k}\Pi_{a\mid Z}P_{A'}^{k}\right)
\left(P_{A'}^{k}\Pi_{b\mid X}P_{A'}^{k}\right)
\left(P_{A'}^{k}\Pi_{a\mid Z}P_{A'}^{k}\right)
\right\rVert.
\end{align}
Here, the infimum ranges over all
isometries $U:\mathcal{H}_A \to \mathcal{H}_{A'}$,
all POVMs $Z'=\{\Pi_{a\mid Z}\}_a$ and $X'=\{\Pi_{b\mid X}\}_b$ on $A'$,
and all projective measurements $K'=\{P_{A'}^k\}_k$ on $A'$,
such that for all outcomes $a,b$,
\begin{align}\label{eq:definition_overlap_2}
\sum_k U^\dagger P_{A'}^k \Pi_{a\mid Z} P_{A'}^k U = M_{a\mid Z}
\quad\text{and}\quad
\sum_k U^\dagger P_{A'}^k \Pi_{b\mid X} P_{A'}^k U = M_{b\mid X}.
\end{align}
\end{definition}
We note that the effective overlap is continuous in its input state.
\begin{lemma}[Continuity of the effective overlap in the marginal state]\label{lem:cstar_continuity_state}
Fix Alice's two measurements $X,Z$.
Then for any two density operators $\tau_A,\tau'_A$ on $\mathcal H_A$,
\begin{align}\label{eq:cstar_lipschitz}
\lvert c^\star(\tau_A,X,Z)-c^\star(\tau'_A,X,Z)\rvert
\ \le\ \frac{1}{2}\lVert \tau_A-\tau^\prime_A\rVert_1.
\end{align}
\end{lemma}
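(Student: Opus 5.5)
We view $c^\star(\tau_A,X,Z)$ as an infimum, over the feasible tuples $(U,X',Z',K')$, of functions that are affine in the state. The key observation is that the feasibility constraints \autoref{eq:definition_overlap_2} involve only the fixed POVMs $X,Z$ and not the state. The feasible set is therefore the same for $\tau_A$ and $\tau'_A$. For a fixed feasible tuple $\theta=(U,X',Z',K')$ we set
\begin{align}
f_\theta(\tau_A) \coloneqq \sum_k \tr\left[P^k_{A'}U\tau_AU^\dagger\right] c_k, \qquad c_k \coloneqq \max_b\left\lVert \sum_a \left(P^k_{A'}\Pi_{a\mid Z}P^k_{A'}\right)\left(P^k_{A'}\Pi_{b\mid X}P^k_{A'}\right)\left(P^k_{A'}\Pi_{a\mid Z}P^k_{A'}\right)\right\rVert,
\end{align}
so that $c^\star(\tau_A,X,Z)=\inf_\theta f_\theta(\tau_A)$. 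It then suffices to show that $|f_\theta(\tau_A)-f_\theta(\tau'_A)|\le \tfrac12\lVert\tau_A-\tau'_A\rVert_1$ uniformly in $\theta$. The standard fact that an infimum of uniformly $L$-Lipschitz functions is $L$-Lipschitz gives the claim: for every $\varepsilon>0$, pick $\theta$ that is $\varepsilon$-optimal for $\tau'_A$, bound $c^\star(\tau_A)\le f_\theta(\tau_A)\le f_\theta(\tau'_A)+\tfrac12\lVert\cdot\rVert_1\le c^\star(\tau'_A)+\varepsilon+\tfrac12\lVert\cdot\rVert_1$, and then exchange the roles of the two states.

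Next we bound the weights $c_k$. Each $c_k$ lies in $[0,1]$. The operator $T=\sum_a Q_a R Q_a$, with $Q_a=P^k\Pi_{a\mid Z}P^k$ and $R=P^k\Pi_{b\mid X}P^k$, is positive. Since $R\le \mathds 1$, it satisfies $T\le \sum_a Q_a^2\le \sum_a Q_a = P^k\left(\sum_a \Pi_{a\mid Z}\right)P^k\le \mathds 1$, where $Q_a^2\le Q_a$ holds because $0\le Q_a\le\mathds 1$. Hence $\lVert T\rVert\le 1$.

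We then write $f_\theta(\tau_A)=\tr[W\tau_A]$ with
\begin{align}
W \coloneqq U^\dagger\Big(\sum_k c_k P^k_{A'}\Big)U .
\end{align}
Because the $P^k$ are orthogonal projections summing to the identity and $c_k\in[0,1]$, we have $0\le \sum_k c_kP^k\le\mathds 1$. Since $U$ is an isometry, this gives $0\le W\le \mathds 1_A$. The difference $\Delta=\tau_A-\tau'_A$ is traceless and Hermitian, so we may decompose it as $\Delta=\Delta_+-\Delta_-$ with $\tr\Delta_+=\tr\Delta_-=\tfrac12\lVert\Delta\rVert_1$. Then
\begin{align}
\tr[W\Delta]=\tr[W\Delta_+]-\tr[W\Delta_-]\in\left[-\tfrac12\lVert\Delta\rVert_1,\tfrac12\lVert\Delta\rVert_1\right],
\end{align}
since $0\le W\le\mathds 1$ gives $0\le\tr[W\Delta_\pm]\le\tfrac12\lVert\Delta\rVert_1$. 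This is exactly the Helstrom-type bound that produces the constant $\tfrac12$.

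The main point requiring care is the bound $c_k\le 1$, which is what keeps $W$ below the identity; without it only the weaker constant $\lVert\Delta\rVert_1$ would follow. In the displayed formula we also read $\max_b$ as a maximum over finitely many outcomes. Should the POVMs only be assumed sub-normalized on the support, the argument goes through unchanged, because we only need $\sum_a \Pi_{a\mid Z}\le\mathds 1$. Two further details are routine: the existence of at least one feasible tuple (the trivial choice $U=\mathds 1$, $K'=\{\mathds 1\}$ works), and the infimum being finite.
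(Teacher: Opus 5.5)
Your proposal is correct and takes essentially the same approach as the paper. You write $c^\star$ as an infimum, over state-independent feasible tuples, of the affine functionals $\tr[W\tau_A]$ with $W=U^\dagger\bigl(\sum_k c_kP^k_{A'}\bigr)U$, use $0\le W\le\mathds 1_A$, and conclude with the variational (Helstrom) characterization of the trace distance. Beyond the paper, you explicitly justify $c_k\le 1$ via $\sum_a Q_aRQ_a\le\sum_a Q_a^2\le\sum_a Q_a\le\mathds 1$ and spell out both directions of the infimum argument, which the paper leaves implicit.
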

\begin{proof}
    Denote by $\gamma\coloneqq   \left(U,X',Z',K'\right)$ an admissible tuple. Then we define
    \begin{align}
        f_{\gamma}\left(\tau_A\right) \coloneqq   \sum_k \tr\left[P_{A'}^{k,\gamma}\,U_\gamma \tau_A U^\dagger_\gamma \right] c_{k,\gamma}
    \end{align}
    with 
    \begin{align}
        c_{k,\gamma} \coloneqq   \max_{b}\left\lVert
\sum_a
\left(P_{A'}^{k,\gamma}\Pi_{a\mid Z}^{\gamma}P_{A'}^{k,\gamma}\right)
\left(P_{A'}^{k,\gamma}\Pi_{b\mid X}^{\gamma}P_{A'}^{k,\gamma}\right)
\left(P_{A'}^{k,\gamma}\Pi_{a\mid Z}^{\gamma}P_{A'}^{k,\gamma}\right)
\right\rVert.
    \end{align}
Now we can write 
\begin{equation}
\begin{aligned}
    \inf_{\gamma} f_\gamma\left(\tau_A\right) - \inf_{\gamma} f_\gamma\left(\tau_A^\prime\right) &\leq \sup_{\gamma} f_\gamma\left(\tau_A\right) - f_\gamma\left(\tau_A^\prime\right) \\
    &= \sup_{\gamma} \tr\left[\sum_k c_{k,\gamma}U_\gamma^\dagger P_{A^\prime}^{k,\gamma} U_\gamma\left(\tau_A - \tau_A^\prime\right) \right].
\end{aligned}
\end{equation}
But we also have $0\leq \sum_k c_{k,\gamma}U_\gamma^\dagger P_{A^\prime}^{k,\gamma} U_\gamma\leq \mathds{1}_A$, implying the result with the variational characterization of the trace-distance. 
\end{proof}

Adapting the setup from \cite{Tomamichel_2013,Lim2013} to \cite{kossmann2025routedbelltestsarbitrarily} and thus \autoref{fig:settings}~(b) can be done as follows. Let $\omega\left(\operatorname{CHSH}\right) \in [0,2\sqrt{2}]$ denote the observed CHSH value 
between Alice and Fred, which corresponds then to the effective overlap $c^\star(\sigma_A,X_A,Z_A)$. Given the connection between effective overlap and CHSH value found in \cite{Tomamichel_2013}
\begin{align}
c^\star(\sigma_A,X,Z) \leq \frac{1}{2}+\frac{\omega\left(\operatorname{CHSH}\right)}{8}\sqrt{8-\omega\left(\operatorname{CHSH}\right)^2},
\end{align}
we can apply \autoref{lem:cstar_continuity_state} for $\lVert \sigma_A - \rho_A\rVert_1 \leq \varepsilon.$ Then we have
\begin{align}\label{eq:cstar_beta_bound_eps}
c^\star(\rho_A,X,Z) \leq \frac{1}{2}+\frac{\omega\left(\operatorname{CHSH}\right)}{8}\sqrt{8-\omega\left(\operatorname{CHSH}\right)^2} + \frac{\varepsilon}{2}.
\end{align}
Now we apply the uncertainty relation under side-information from \cite{Tomamichel_2013} for the tripartite quantum state $\rho_{ABE}$ and $X_A = \{M_{a\vert Z}\}_a$ and $Z_A = \{M_{b\vert X}\}_b$ measurements such that $X_A,Z_A  \subseteq \mathcal{B}(\mathcal{H}_A)$. Then the corresponding reduced states satisfy
\begin{align}\label{eq:uncertainty_tomamichel}
    H(Z_A \vert E)_\rho + H(X_A\vert B)_\rho \geq - \log c^\star(\rho_A,X_A,Z_A).
\end{align}
By a standard argument and Fano's inequality we conclude for the asymptotic one-way BB84 key rate the explicit bound (cf. for details the proof of \autoref{thm:robust_self_test_routedBB84})
\begin{align}\label{eq:keyrate_beta_explicit_eps}
r_{\mathrm{key}}
\ \ge\
1\,-\,\log_2\left(1+\frac{\omega\left(\operatorname{CHSH}\right)}{4}\sqrt{8-\omega\left(\operatorname{CHSH}\right)^2}+\varepsilon\right)
\;-\; h_2(Q_Z) \;-\; h_2(Q_X).
\end{align}
This example shows many of the interesting properties of a routed Bell experiment. First of all, it shows that the value $\omega\left(\operatorname{CHSH}\right)$ can be incorporated into the key rate and thus demonstrates the smooth dependence between the CHSH value and the BB84 key rate.  
However, comparing this result with the numerical key-rate results for the case of one-sided self-tests and full statistics from \cite{kossmann2025routedbelltestsarbitrarily}, \autoref{eq:keyrate_beta_explicit_eps} yields strictly worse bounds on the asymptotic key rate. 
In the comparison of \cite{Lim2013,Chaturvedi2024,Lobo2024,Le_Roy_Deloison_2025} to \cite{kossmann2025routedbelltestsarbitrarily}, the latter framework, understood as a mathematical model for DIQKD, shows with \autoref{lem:cstar_continuity_state} the direct dependency on the marginal constraint, which is implicitly incorporated in all other frameworks, as \autoref{example:assumptions_switch} shows. Concretely, \autoref{eq:keyrate_beta_explicit_eps} makes explicit how the key rate depends on the condition $\Vert\rho_A-\sigma_A\Vert_1\leq \varepsilon$, and thus shows that key can still be extracted even when the marginal constraint is only approximate.

For the following investigations, it is worth mentioning that \autoref{eq:keyrate_beta_explicit_eps} does not include any information about a potential second self-testing partner on Bob's side. Given that the extractable randomness is drawn from Alice's device, it is also unclear how to include this directly in \autoref{eq:keyrate_beta_explicit_eps}. However, numerical results in \cite{kossmann2025routedbelltestsarbitrarily} indicate that a second self-test on Bob's side yields a strict improvement in the asymptotic key rate. In the following, we develop a framework that includes both switches.

\subsection{General robust self-testing and conditional von Neumann entropies}\label{subsec:general_robust_selftest}

The ideas of \autoref{subsec:bb84_protocol} show that the routed Bell-test framework is, at least in principle, able to derive smooth interpolation results between fully DI scenarios and device-dependent QKD. However, techniques based on uncertainty relations usually depend only on a one-sided analysis, as they concern the extractable randomness from Alice's device. Moreover, uncertainty relations for more than two measurements are inherently hard, and they remain open even in the fully device-dependent case beyond qubits. In this subsection, we introduce the framework of robust self-testing from \cite{operator_algebra_self_test2024} and connect the results to the general theoretical building blocks.  Our first goal is to reprove the results from \autoref{subsec:bb84_protocol} and then to generalize them and to incorporate self-tests with both parties Alice and Bob. A robust self-test is defined as follows.

\begin{definition}[Robust self-test]\label{def:robust_self_test}
    Let $\varepsilon>0$ and let two quantum models be given as
    \begin{equation}
    \begin{aligned}
        S &= \left(\mathcal{H}_A,\mathcal{H}_B, \{M_{a\vert x}\},\{N_{b\vert y}\},\ket{\psi}\right) \}\\
        \tilde{S} &= \left(\mathcal{H}_{\tilde{A}},\tilde{\mathcal{H}}_B, \{\tilde{M}_{a\vert x}\},\{\tilde{N}_{b\vert y}\},\ket{\tilde{\psi}}\right) \}.
    \end{aligned}
    \end{equation}
    We say $\tilde S$ is a local $\varepsilon$-dilation of $S$, denoted as $S\succeq_\varepsilon \tilde S$, if there are isometries $I_A:\mathcal{H}_A \to \mathcal{H}_{\tilde{A}} \otimes \mathcal{H}_A^{\operatorname{aux}}$ and $I_B:\mathcal{H}_B \to \tilde{\mathcal{H}}_B \otimes \mathcal{H}_B^{\operatorname{aux}}$ and a vector state $\ket{\operatorname{aux}} \in \mathcal{H}_A^{\operatorname{aux}} \otimes \mathcal{H}_B^{\operatorname{aux}}$, such that 
    \begin{equation}
        \begin{aligned}
            \Vert I_A \otimes I_B \left(M_{a\vert x}\otimes \mathds  1_B \ket \psi\right) - \left(\tilde{M}_{a\vert x} \otimes \mathds 1_B \ket{\tilde \psi}\right)\otimes \ket{\operatorname{aux}}\Vert_2 \leq \varepsilon \\
            \Vert I_A \otimes I_B \left(\mathds  1_A\otimes N_{b\vert y} \ket \psi\right) - \left( \mathds 1_A \otimes \tilde{N}_{b\vert y}  \ket{\tilde \psi}\right)\otimes \ket{\operatorname{aux}}\Vert_2 \leq \varepsilon \\
            \Vert I_A\otimes I_B \ket{\psi} - \ket{\tilde\psi} \otimes \ket{\operatorname{aux}}\Vert_2\leq \varepsilon
        \end{aligned}
    \end{equation}
\end{definition}

The following theorem shows the result from \autoref{eq:keyrate_beta_explicit_eps} as a qualitative result in the language of robust self-tests.
\begin{theorem}\label{thm:robust_self_test_routedBB84}
In the routed BB84 protocol with $\sigma_{AA_F}$ a self-test for Alice's device performing a CHSH game with winning probability $\omega(\mathrm{CHSH})-\varepsilon$ and Alice and Bob using $\rho_{AB}$ to run a BB84 protocol, the one-way communication, device-independent key-rate satisfies
\begin{align}
r_{\mathrm{key}} \geq 1 - h(Q_Z) - h(Q_X) - \mathcal{O}(\sqrt{\varepsilon}) \, .
\end{align}
In particular, as $\varepsilon\to 0$, they attain the optimal BB84 key-rate \cite{Shor_2000}.
\end{theorem}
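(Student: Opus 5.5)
The plan is to reduce the statement to the effective-overlap bound of \autoref{subsec:bb84_protocol}, now driven by the robust self-testing property of CHSH instead of the explicit Tomamichel--H\"anggi relation. We read the hypothesis as saying that the Alice--Fred statistics on $\sigma_{AA_F}$ come within $\varepsilon$ of the optimal CHSH behaviour, and that the marginal constraint \autoref{eq:marginal_constraint} holds, $\rho_A=\sigma_A$.

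\textbf{Step 1: robust self-test.} By \cite[Thm.~9.5]{operator_algebra_self_test2024}, CHSH is a robust self-test. For a purification of $\sigma_{AA_F}$ this yields isometries $I_A,I_F$ with $S\succeq_{\delta}\tilde S$ in the sense of \autoref{def:robust_self_test}, where $\delta=\mathcal O(\sqrt{\varepsilon})$. Here $\tilde S$ is the ideal model: a maximally entangled qubit pair, with Alice measuring $\sigma_Z$ and $\sigma_X$.

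\textbf{Step 2: approximate effective overlap.} We turn the vector estimates into a bound on $c^\star(\sigma_A,X,Z)$. Take $U=I_A$, $K'$ the trivial projection, and $\Pi$ the ideal qubit projections tensored with $\mathds 1_{\mathrm{aux}}$. These do not exactly satisfy \autoref{eq:definition_overlap_2}. Instead we use the state-dependent closeness $\lVert (I_A M_{a|x}-(\tilde M_{a|x}\otimes\mathds 1)I_A)\ket\psi\rVert\le 2\delta$ to bound
\begin{align}
\sum_a \langle\psi| M_{a|Z}M_{b|X}M_{a|Z}|\psi\rangle \le \tfrac12+\mathcal O(\delta).
\end{align}

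\textbf{Main obstacle.} Definition~\ref{def:effective_overlap} demands exact operator identities, while robust self-testing only gives state-dependent vector bounds. So the cleanest route is not to prove a clean bound on $c^\star$ directly. We would rather redo the relevant step of the uncertainty relation of \cite{Tomamichel_2013}: its proof only uses the expectation of $\sum_a M_{a|Z}M_{b|X}M_{a|Z}$ weighted against the state, so an $\mathcal O(\delta)$-approximate version suffices. If this fails, the fallback is the explicit bound \autoref{eq:cstar_beta_bound_eps} with $\omega(\mathrm{CHSH})=2\sqrt2-\mathcal O(\varepsilon)$. There $\sqrt{8-\omega^2}=\mathcal O(\sqrt\varepsilon)$, which directly gives $c^\star(\sigma_A,X,Z)\le \tfrac12+\mathcal O(\sqrt\varepsilon)$; this also explains the $\sqrt\varepsilon$ scaling in the statement.

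\textbf{Step 3: transfer to the key state.} Using $\rho_A=\sigma_A$, or \autoref{lem:cstar_continuity_state} if the marginal constraint is only approximate, we get $c^\star(\rho_A,X,Z)\le\tfrac12+\mathcal O(\sqrt\varepsilon)$.

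\textbf{Step 4: key rate.} Apply \autoref{eq:uncertainty_tomamichel} to a purification $\rho_{ABE}$ of $\rho_{AB}$:
\begin{align}
H(Z_A|E)\ge -\log c^\star - H(X_A|B)\ge 1-\log_2\bigl(1+\mathcal O(\sqrt\varepsilon)\bigr)-h(Q_X).
\end{align}
The bound $H(X_A|B)\le h(Q_X)$ follows by data processing to Bob's $X$ outcome and Fano's inequality. The Devetak--Winter formula for one-way reconciliation, $r\ge H(Z_A|E)-H(Z_A|Z_B)$, together with Fano again for $H(Z_A|Z_B)\le h(Q_Z)$, then gives
\begin{align}
r_{\mathrm{key}}\ge 1-h(Q_Z)-h(Q_X)-\mathcal O(\sqrt\varepsilon),
\end{align}
since $\log_2(1+t)=\mathcal O(t)$. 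Letting $\varepsilon\to0$ recovers the Shor--Preskill rate.
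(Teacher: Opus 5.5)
\textbf{Your primary route (Step~2 and the ``Main obstacle'' paragraph) has a genuine gap.} The quantity you propose to control, $\max_b\sum_a\langle\psi|M_{a|Z}M_{b|X}M_{a|Z}|\psi\rangle$, does not detect incompatibility of Alice's measurements, so no approximate uncertainty relation can be driven by it. For example, let $M_{\cdot|X}=M_{\cdot|Z}$ both be the computational-basis measurement, and let $\rho_{AB}=\frac12(\ketbra{00}{00}+\ketbra{11}{11})$ be purified by $E$. Then $\sum_a M_{a|Z}M_{b|X}M_{a|Z}=M_{b|X}$, so your quantity equals $\frac12$. Yet $H(Z_A|E)+H(X_A|B)=0<1=-\log\frac12$.

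The same failure occurs on any two-dimensional Jordan block of $(\Pi_{0|Z},\Pi_{0|X})$. There $\sum_a\Pi_{a|Z}\Pi_{0|X}\Pi_{a|Z}=\mathrm{diag}(x_k,1-x_k)$, whose expectation in the maximally mixed state on the block is $\frac12$ for every angle. This is why \autoref{def:effective_overlap} takes an operator norm on each block. The state enters only through the weights $\tr[P_{A'}^{k}U\rho_AU^\dagger]$ of a decomposition $K'$ that commutes with both measurements. So the assertion that the proof in \cite{Tomamichel_2013} ``only uses the expectation against the state'' is false.

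The missing idea is how the paper handles exactly the obstacle you identified. It keeps the \emph{physical} dilated projections of \autoref{ex:effective_overlap_feasible}, which satisfy \autoref{eq:definition_overlap_2} exactly with $U=I_A$, instead of the ideal ones. It then takes $K'$ to be the projections onto the Jordan blocks of $(\Pi_{0|Z},\Pi_{0|X})$. On a block with Bloch angle $\theta_k$, the inner norm is $\frac12+\frac12|\cos\theta_k|$, while $\{X',Z'\}^2$ acts as the \emph{scalar} $4\cos^2\theta_k$. Jensen's inequality then gives $c^\star(\sigma_A,X,Z)\le\frac12+\frac14(\tr[\sigma'_A\{X',Z'\}^2])^{1/2}$ (\autoref{prop:xxx}).

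The self-test is used only for the on-state bound $\lVert\{X',Z'\}\ket{\Psi'}\rVert_2=O(\delta)$. This follows from \cite[Prop.~3.15]{operator_algebra_self_test2024} with $k=1,2$ together with $\{\tilde X,\tilde Z\}=0$. Since $\tr[\sigma'_A\{X',Z'\}^2]=\lVert\{X',Z'\}\ket{\Psi'}\rVert_2^2=O(\delta^2)$, this yields $c^\star\le\frac12+O(\delta)=\frac12+O(\sqrt{\varepsilon})$. The anticommutator works as a state-dependent witness precisely because it is a scalar on each block.

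Your fallback, by contrast, is a correct and complete proof, and your Steps~3--4 then coincide with the paper's. It amounts to \autoref{eq:keyrate_beta_explicit_eps} with vanishing marginal mismatch, combined with $8-\omega^2=O(\varepsilon)$. Two bookkeeping points:
\begin{itemize}
\item A winning-probability deficit $\varepsilon$ is a CHSH-value deficit $8\varepsilon$, which gives for instance $c^\star\le\frac12+2^{5/4}\sqrt{\varepsilon}$.
\item This $\varepsilon$ must be kept distinct from the trace-distance $\varepsilon$ in \autoref{eq:cstar_beta_bound_eps}.
\end{itemize}
Along this route Step~1 is never used. You import the CHSH-specific overlap bound of \cite{Tomamichel_2013}, which is shorter and fully explicit. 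The paper instead derives the overlap bound from robust self-testing alone, which is what the theorem is meant to illustrate. That route works for any robust self-test of two binary measurements with anticommuting ideal observables, and it is the template for the general lifts that follow. Its cost is unspecified constants, which are made explicit for CHSH only in \autoref{sec:CHSH-bounds}.
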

\begin{proof}
We assume that the winning probability $\omega(\operatorname{CHSH})-\varepsilon$ is achieved by Alice and Fred given by the (w.l.o.g.) PVM model
\begin{align}
S \coloneqq   (\mathcal{H}_A,\mathcal{H}_F, \{M_{a\vert x}\},\{N_{b\vert y}\},\ket{\psi}) 
\end{align}
and we call the reduced states of that model $\sigma_A$ and $\sigma_F$. Moreover, we call the optimal CHSH model on a qubit
\begin{align}
\tilde{S} \coloneqq   (\mathcal{H}_{\tilde{A}},\tilde{\mathcal{H}}_F, \{\tilde{M}_{a\vert x}\},\{\tilde{N}_{b\vert y}\},\ket{\tilde{\psi}})
\end{align}
with $\tilde{M}_{a\vert X}$ the optimal Pauli-$x$ measurements, $\tilde{M}_{a\vert Z}$ the optimal Pauli-$z$ measurements on $\mathcal{H}_{\tilde{A}}$, and $\ket{\tilde{\psi}}$ a maximally entangled state on two qubits. By Theorem~9.5 in~\cite{operator_algebra_self_test2024}, achieving $\omega(\operatorname{CHSH})-\varepsilon$ is a robust self-test, i.e.\ there exists $\delta=O(\sqrt{\varepsilon})$ such that $\tilde S$ is centrally supported and
\begin{align}
S\succeq_{\delta}\tilde S
\end{align}
via local isometries $I_A,I_F$ and an auxiliary state $\ket{\operatorname{aux}}$.

Let
\begin{align}
\ket{\Psi'}\coloneqq   (I_A\otimes I_F)\ket{\psi},
\qquad
\ket{\phi}\coloneqq   \ket{\tilde\psi}\otimes\ket{\operatorname{aux}},
\qquad
\zeta\coloneqq   \Vert\ket{\Psi'}-\ket{\phi}\Vert_2 = O(\delta).
\end{align}
Define Alice's reflections in $S$ by
\begin{align}
X \coloneqq   M_{0\vert X}- M_{1\vert X}, \qquad Z \coloneqq   M_{0\vert Z}- M_{1\vert Z}.
\end{align}
Fix the isometry $I_A$ from the self-test and consider the associated dilated projective measurements $\{\Pi_{a\mid x}\}$, so that we can set the (dilated) reflections
\begin{align}
X^\prime \coloneqq   \Pi_{0\vert X} - \Pi_{1\vert X}, \qquad
Z^\prime \coloneqq   \Pi_{0\vert Z} - \Pi_{1\vert Z},
\qquad
\sigma_A^{\prime} \coloneqq   I_A \sigma_A I_A^\dagger .
\end{align}
Let $\tilde X=\sigma_x$ and $\tilde Z=\sigma_z$ on the ideal qubit.

From \cite[Prop.~3.15]{operator_algebra_self_test2024} with $k=1$ (and linearity to pass from projectors to reflections), the robust self-test yields the \emph{on-state} bounds
\begin{align}\label{eq:on-state-bnds-proof-fixed}
    \eta_X \coloneqq   \big\Vert (X' - \tilde X\otimes \mathds 1)\ket{\phi}\big\Vert_2 = O(\delta),
    \qquad
    \eta_Z \coloneqq   \big\Vert (Z' - \tilde Z\otimes \mathds 1)\ket{\phi}\big\Vert_2 = O(\delta).
\end{align}
We also need control of the \emph{products}. Note that $X'Z'$ (and $Z'X'$) is a linear combination of length-$2$ monomials in the POVM generators (expand $(M_{0\mid X}-M_{1\mid X})(M_{0\mid Z}-M_{1\mid Z})$ and use linearity), hence \cite[Prop.~3.15]{operator_algebra_self_test2024} with $k=2$ yields
\begin{align}\label{eq:product-bnds}
    \kappa_{XZ} \coloneqq   \big\Vert (X'Z' - \tilde X\tilde Z\otimes \mathds 1)\ket{\phi}\big\Vert_2 = O(\delta),
    \qquad
    \kappa_{ZX} \coloneqq   \big\Vert (Z'X' - \tilde Z\tilde X\otimes \mathds 1)\ket{\phi}\big\Vert_2 = O(\delta).
\end{align}
Since $\{\tilde X,\tilde Z\}=0$, we can write on $\ket{\phi}$,
\begin{align}
\{X',Z'\}\ket{\phi}
=
\bigl(X'Z'-\tilde X\tilde Z\otimes \mathds 1\bigr)\ket{\phi}
+
\bigl(Z'X'-\tilde Z\tilde X\otimes \mathds 1\bigr)\ket{\phi},
\end{align}
and therefore, by the triangle inequality and \autoref{eq:product-bnds},
\begin{align}\label{eq:vector-bound-phi-fixed}
\big\Vert\{X',Z'\}\ket{\phi}\big\Vert_2
\le \kappa_{XZ}+\kappa_{ZX}
= O(\delta).
\end{align}
Passing to $\ket{\Psi'}$ by the triangle inequality and using $\Vert\{X',Z'\}\Vert\le 2$ (since $X',Z'$ are reflections),
\begin{align}\label{eq:vector-bound-psi-fixed}
    \big\Vert\{X',Z'\}\ket{\Psi'}\big\Vert_2 
    \le \big\Vert\{X',Z'\}\ket{\phi}\big\Vert_2 + \Vert\{X',Z'\}\Vert\,\zeta
    \le O(\delta) + 2\,O(\delta)
    = O(\delta).
\end{align}
Since
\begin{align}
\tr[\sigma_A' \{X',Z'\}^2]
=
\langle \Psi'|\{X',Z'\}^2|\Psi'\rangle
=
\Vert\{X',Z'\}\ket{\Psi'}\Vert_2^2,
\end{align}
\autoref{eq:vector-bound-psi-fixed} implies the quadratic bound
\begin{align}\label{eq:second-moment-quadratic-fixed}
    \tr\left[\sigma_A^\prime \{X^\prime,Z^\prime\}^2 \right] = O(\delta^2).
\end{align}
Combining \autoref{appendix:qubit_reduction} 
with \autoref{eq:second-moment-quadratic-fixed} and $\delta=O(\sqrt{\varepsilon})$ gives
\begin{align}
c^\star(\sigma_A,X,Z) \le \frac{1}{2} + O(\sqrt{\varepsilon}).
\end{align}

Now we use the protocol assumption that $\sigma_A = \rho_A$ in order to pass to the uncertainty relation from \cite{Tomamichel_2013} (for a purification of $\rho_{AB}$, repeated in \autoref{eq:uncertainty_tomamichel}):
\begin{equation}
\label{eq:EUR_final}
H(Z_A|E)_\rho + H(X_A|B)_\rho \geq -\log c^\star(\rho_A,X,Z)
\geq -\log \frac{1}{2} + \mathcal{O}(\sqrt{\varepsilon}) .
\end{equation}
By data-processing, Fano’s inequality and the observed bit-error rates, we have
\begin{align}\label{eq:thm_fano_inequality_proof}
H(X_A|B)\leq H(X_A|X_B) \leq h(Q_X),\qquad
H(Z_A|Z_B) \leq h(Q_Z).
\end{align}
Inserting the ingredients into the Devetak-Winter formula with key extracted from $Z$ gives
\begin{align}
r_{\operatorname{key}} \geq H(Z_A|E) - H(Z_A|Z_B).
\end{align}
From \autoref{eq:EUR_final} we have $H(Z_A|E)\geq -\log c^\star(\rho_A,X,Z) - H(X_A|B)$, hence using $-\log c^\star(\rho_A,X,Z)=1-\mathcal{O}(\sqrt{\varepsilon})$,
\begin{align}
r_{\operatorname{key}}
&\geq \bigl[1-\mathcal{O}(\sqrt{\varepsilon}) - H(X_A|B)\bigr] - H(Z_A|Z_B) \\
&\geq 1-\mathcal{O}(\sqrt{\varepsilon}) - h(Q_X)- h(Q_Z),
\end{align}
which yields in the limit $\varepsilon\to 0$ the optimal asymptotic BB84 rate $r_{\operatorname{key}}\geq 1-h(Q_X)-h(Q_Z)$.
\end{proof}

\subsection{General lifts of device-independent tasks}

In this section we aim to investigate \autoref{fig:two_switches_lift} from a general perspective of security proofs and rate functions and how this setup could lead to a framework for device-independent security proofs. As the notion of \emph{robust self-tests} from \cite{operator_algebra_self_test2024} is quite technical, we make the following general assumptions holding for the rest of this section.
\begin{assumptions}\label{assumptions}
We assume in the following to have a general model for sets $X,Y^\prime$
\begin{align}
    \left(\mathcal{H}_A, \mathcal{H}_B, \{M_{ a\vert x}\},\{N_{b\vert y}\},\rho_{AB}\right)
    \tag{GM}\label{eq:GM}
\end{align}
and two $\varepsilon_A$- and $\varepsilon_B$-dilations for the set of measurement operators $X$
\begin{align}
    S_A \coloneqq    \left(\mathcal{H}_A, \mathcal{H}_F, \{M_{ a\vert x}\},\{P_{p\vert q}\},\ket{\psi_{AF}} \right) \quad \text{of} \quad  \tilde{S}_A \coloneqq    \left(\mathcal{H}_{\tilde{A}}, \mathcal{H}_{\tilde{F}}, \{\tilde{M}_{ a\vert x}\},\{\tilde{P}_{p\vert q}\},\ket{\varphi_{\tilde{A}\tilde{F}}} \right)
    \tag{self-A}\label{eq:selfA}
\end{align}
that is $S_A \succ_{\varepsilon_A} \tilde{S}_A$ and similarly for a subset $Y\subseteq Y^\prime$
\begin{align}
    S_B \coloneqq    \left(\mathcal{H}_B, \mathcal{H}_G, \{N_{b\vert y}\},\{Q_{r\vert s}\},\ket{\mu_{BG}} \right) \quad \text{of} \quad  \tilde{S}_B \coloneqq    \left(\mathcal{H}_{\tilde{B}}, \mathcal{H}_{\tilde{G}},\{\tilde{N}_{b\vert y}\},\{\tilde{Q}_{r\vert s}\} ,\ket{\nu_{\tilde{A}\tilde{F}}} \right)
    \tag{self-B}\label{eq:selfB}
\end{align}
that is $S_B \succ_{\varepsilon_B} \tilde{S}_B$. Furthermore assume
\begin{align}
    \rho_A = \psi_A \qquad \text{and} \qquad \rho_B = \mu_B.
    \tag{MC}\label{eq:MC}
\end{align}
From the $\varepsilon_A$ and $\varepsilon_B$ dilations there exist by definition isometries $I_A$ and $I_B$ respectively. Define 
\begin{equation}
\tilde{\rho}_{\tilde A\tilde B}
\coloneqq   
\tr_{\operatorname{aux}}\left[\left(I_A\otimes I_B\right)\,\rho_{AB}\, \left(I_A\otimes I_B\right)^\dagger\right]
\, \in \, \mathcal S(\tilde{\mathcal H}_A\otimes \tilde{\mathcal H}_B).
\tag{RS}\label{eq:RS}
\end{equation}
In the following we moreover aim to generalize \autoref{eq:optimization_DIQKD} to more general conditional entropies as ,e.g., defined in \cite[Def.~1]{hahn2025analyticrenyientropybounds}, that is let $\mathbb{H}$ be a general conditional entropy and a given distribution $p(a,b\vert x,y)$, $a \in A,b\in B,x\in X,y\in Y$. Then we are interested in
\begin{equation}
\begin{aligned}
\operatorname{OPT}_{\mathrm{DI}}^{\mathbb{H}}(p;\tilde x)
\coloneqq   \inf\ 
& \mathbb{H}\left(A_{\tilde x}\vert E\right)_{(\mathcal M_{\tilde x}\otimes \id_E)(\rho_{AE})} \\
\operatorname{s.t.}\
& \tr\left[\rho_{AB}\,M_{a|x}\otimes N_{b|y}\right] = p_{abxy},\quad  a,b,x,y,\\
& \rho_{ABE}=\ketbra{\psi}{\psi},\ \{M_{a|x}\},\{N_{b|y}\}\ \text{projective}.
\end{aligned}
\tag{OPT-DI}\label{eq:OPT-DI}
\end{equation}
\end{assumptions}
In the following we show that \autoref{assumptions} implies that routed Bell-tests can be seen as a technique for self-testing the underlying local devices in Alice's and Bob's lab for key generation.  

\begin{figure}
    \centering
    \includegraphics[width=0.8\linewidth]{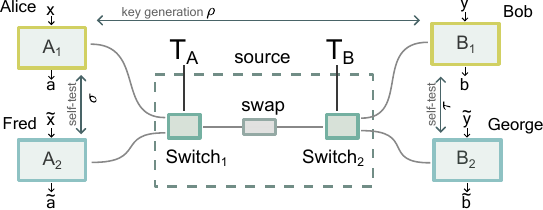}
    \caption{Adapted from the setting in \cite{kossmann2025routedbelltestsarbitrarily}. A routed Bell experiment with four parties. The self-tests are performed by the pairs Alice-Fred and Bob-George. 
    }
    \label{fig:two_switches_lift}
\end{figure}
In \autoref{fig:two_switches_lift} the model with four parties, self-testing the devices from Alice and Bob is shown. 
Our first result shows that, given a robust self-test, we can lift in a four-partite experiment the statistics made to the level of the perfect model. 

Given statistics that admit a robust self-test, the following lemma shows that they can also be realized, up to a small error, by a compressed state in the ideal model. This suggests that, at the level of observed statistics, a device-independent problem can effectively be reduced to a device-dependent one, again up to a small error. As a next step, we show an analogous statement for conditional entropy in a single-round DIQKD setting with locally robust self-tests.
\begin{lemma}[Compression of correlations under local $\varepsilon$-dilations]\label{lem:compression_correlations}
Given the \autoref{assumptions}, then for all $a,b,x,y$,
\begin{equation}\label{eq:probability_error}
\lvert
\tr\left[\rho_{AB}\,M_{a|x}\otimes N_{b|y}\right]
-
\tr\left[\tilde\rho_{\tilde A\tilde B}\,\tilde M_{a|x}\otimes \tilde N_{b|y}\right]
\rvert
\leq 3 \left(\varepsilon_A + \varepsilon_B\right).
\end{equation}
\end{lemma}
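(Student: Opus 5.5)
The idea is to conjugate everything by the local isometries $I_A\otimes I_B$ from \autoref{assumptions}. We then swap Alice's dilated operator for the ideal one, and afterwards Bob's, in a two-step telescoping argument. Each swap is controlled on the \emph{self-test} state, and the error is transferred to the key-generation state $\rho_{AB}$ via the marginal constraint \eqref{eq:MC}.

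\textbf{Setup.} Write $\rho'\coloneqq (I_A\otimes I_B)\rho_{AB}(I_A\otimes I_B)^\dagger$ on $\mathcal H_{\tilde A}\otimes\mathcal H_A^{\operatorname{aux}}\otimes\mathcal H_{\tilde B}\otimes\mathcal H_B^{\operatorname{aux}}$. Set $\hat M_{a|x}\coloneqq I_A M_{a|x}I_A^\dagger$ and $\hat N_{b|y}\coloneqq I_B N_{b|y}I_B^\dagger$. Since $I_A^\dagger I_A=\mathds 1$,
\begin{align}
\tr[\rho_{AB}\,M_{a|x}\otimes N_{b|y}]=\tr[\rho'\,\hat M_{a|x}\otimes \hat N_{b|y}].
\end{align}
By the definition \eqref{eq:RS} of $\tilde\rho_{\tilde A\tilde B}$ we also have
\begin{align}
\tr[\tilde\rho_{\tilde A\tilde B}\,\tilde M_{a|x}\otimes\tilde N_{b|y}]=\tr[\rho'\,(\tilde M_{a|x}\otimes\mathds 1_{\operatorname{aux}})\otimes(\tilde N_{b|y}\otimes\mathds 1_{\operatorname{aux}})].
\end{align}
Put $\Delta_A\coloneqq \hat M_{a|x}-\tilde M_{a|x}\otimes\mathds 1_{\operatorname{aux}}$ and define $\Delta_B$ analogously. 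The difference in \eqref{eq:probability_error} is then bounded by
\begin{align}
\lvert\tr[\rho'\,\Delta_A\otimes\hat N_{b|y}]\rvert+\lvert\tr[\rho'\,(\tilde M_{a|x}\otimes\mathds 1)\otimes\Delta_B]\rvert .
\end{align}

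\textbf{Reducing to one-sided norms.} Take a purification $\ket{\Omega}$ of $\rho'$ and apply Cauchy--Schwarz, using $\lVert\hat N_{b|y}\rVert\le 1$ and $\lVert\tilde M_{a|x}\rVert\le1$. This bounds the first term by $\lVert(\Delta_A\otimes\mathds 1)\ket\Omega\rVert_2$, whose square equals $\tr[\rho'_{A}\Delta_A^\dagger\Delta_A]$. Here $\rho'_A=I_A\rho_AI_A^\dagger$ is Alice's marginal of $\rho'$.

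\textbf{Transfer via the marginal constraint.} This is the crucial step. By \eqref{eq:MC}, $\rho_A=\psi_A$, so
\begin{align}
\tr[\rho'_A\Delta_A^\dagger\Delta_A]=\tr[I_A\psi_AI_A^\dagger\,\Delta_A^\dagger\Delta_A]=\lVert(\Delta_A\otimes\mathds 1)\ket{\Psi'}\rVert_2^2,
\qquad \ket{\Psi'}\coloneqq(I_A\otimes I_F)\ket{\psi_{AF}}.
\end{align}
Now $\hat M_{a|x}I_A=I_AM_{a|x}$. By the triangle inequality and the first and third conditions in \autoref{def:robust_self_test},
\begin{align}
\lVert(\Delta_A\otimes\mathds 1)\ket{\Psi'}\rVert_2
\le \lVert (I_A\otimes I_F)(M_{a|x}\otimes\mathds 1)\ket{\psi}-(\tilde M_{a|x}\otimes\mathds 1)\ket{\tilde\psi}\otimes\ket{\operatorname{aux}}\rVert_2
+\lVert\tilde M_{a|x}\rVert\,\lVert\ket{\tilde\psi}\otimes\ket{\operatorname{aux}}-\ket{\Psi'}\rVert_2,
\end{align}
and each of the two terms on the right is at most $\varepsilon_A$, so the total is at most $2\varepsilon_A$. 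The symmetric argument with $\rho_B=\mu_B$ and the dilation \eqref{eq:selfB} bounds the second term by $2\varepsilon_B$. Altogether the difference is at most $2(\varepsilon_A+\varepsilon_B)\le3(\varepsilon_A+\varepsilon_B)$.

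\textbf{Main obstacle.} The delicate point is the transfer step. The dilation bounds hold only on the vectors $\ket{\psi_{AF}}$ and $\ket{\mu_{BG}}$, not in operator norm, so they must be carried over to $\rho_{AB}$. This works only because the quantity needed, $\tr[\rho'_A\Delta_A^\dagger\Delta_A]$, depends on $\rho'$ solely through Alice's marginal. We must also check that the ancilla vector $\ket{\operatorname{aux}}$, which is shared with $F$, causes no problem. It does not, because $\Delta_A$ acts only on Alice's registers, so the norm identity above holds on the joint space including $F$'s registers.
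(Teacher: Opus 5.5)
Your proof is correct, and it gives the sharper constant $2(\varepsilon_A+\varepsilon_B)$. The skeleton is the same as the paper's: you swap Alice's operator and then Bob's, apply Cauchy--Schwarz, and move the one-sided error from the self-test state to $\rho_{AB}$ using the marginal constraint. Two steps differ.

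\emph{Transfer to $\rho_{AB}$.} The paper invokes Uhlmann's theorem to write a purification of $\rho_{AB}$ as $(\mathds 1_A\otimes V)\ket{\psi_{AF}}$. It then applies Cauchy--Schwarz on $\ket{\psi_{AF}}$ against the contraction $V^\dagger(N_{b|y}\otimes\mathds 1_R)V$. You apply Cauchy--Schwarz first and observe that $\tr[\rho'_A\Delta_A^\dagger\Delta_A]$ depends on $\rho'$ only through Alice's marginal. This makes Uhlmann unnecessary.

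\emph{Which vector is bounded.} The paper bounds the compressed vector $\bigl(M_{a|x}-\Lambda_A^*(\tilde M_{a|x})\bigr)\otimes\mathds 1_F\ket{\psi_{AF}}$ on the physical space, with $\Lambda_A^*(\cdot)=I_A^\dagger(\cdot\otimes\mathds 1)I_A$. This forces it to control the leakage term $(\mathds 1-I_AI_A^\dagger)(\tilde M_{a|x}\otimes\mathds 1)\ket{\varphi_{\tilde A\tilde F}}\otimes\ket{\operatorname{aux}_A}$, which costs a third $\varepsilon_A$. You stay in the dilated space and bound the intertwiner $I_AM_{a|x}-(\tilde M_{a|x}\otimes\mathds 1)I_A$ on the state, using each dilation condition once. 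This is the same $2\varepsilon_A$ estimate the paper derives later in the proof of \autoref{lem:entropy_transfer_continuity}.

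Since $M_{a|x}-\Lambda_A^*(\tilde M_{a|x})=I_A^\dagger\bigl(I_AM_{a|x}-(\tilde M_{a|x}\otimes\mathds 1)I_A\bigr)$ and $\lVert I_A^\dagger\rVert\le 1$, your norm dominates the paper's. The paper's $3\varepsilon_A$ is therefore simply lossy. Its phrasing through the unital map $\Lambda_A^*$ keeps everything on the physical Hilbert space, which may be notationally convenient. It gains nothing substantive, since your bound also holds against an arbitrary contraction on the other side.
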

\begin{proof}
    Given that we have $\varepsilon_A$-, $\varepsilon_B$-dilations for Alice and Bob at our disposal, we fix arbitrary inputs and outputs $a,b,x,y$ and abbreviate
\begin{align}
    M \coloneqq   M_{a|x},\qquad \tilde M \coloneqq   \tilde M_{a|x},\qquad 
    N \coloneqq   N_{b|y},\qquad \tilde N \coloneqq   \tilde N_{b|y}.
\end{align}
Then, by definition we let the isometries coming from $S_A\succ_{\varepsilon_A}\tilde S_A$ be
\begin{align}
I_A:\mathcal H_A\to \mathcal H_{\tilde A}\otimes \mathcal H_{\operatorname{aux}}^A,
\qquad 
I_F:\mathcal H_F\to \mathcal H_{\tilde F}\otimes \mathcal H_{\operatorname{aux}}^F,
\end{align}
together with some auxiliary unit vector $\ket{\operatorname{aux}_A}\in \mathcal H_{\operatorname{aux}}^A\otimes \mathcal H_{\operatorname{aux}}^F$.
Similarly, from $S_B\succ_{\varepsilon_B}\tilde S_B$ let
\begin{align}
I_B:\mathcal H_B\to \mathcal H_{\tilde B}\otimes \mathcal H_{\operatorname{aux}}^B,
\qquad 
I_G:\mathcal H_G\to \mathcal H_{\tilde G}\otimes \mathcal H_{\operatorname{aux}}^G,
\end{align}
and an auxiliary unit vector $\ket{\operatorname{aux}_B}\in \mathcal H_{\operatorname{aux}}^B\otimes \mathcal H_{\operatorname{aux}}^G$. Let $\mathcal H_{\operatorname{aux}}\coloneqq   \mathcal H_{\operatorname{aux}}^A\otimes \mathcal H_{\operatorname{aux}}^B$. Define the unital completely positive maps 
\begin{align}
\Lambda_A^*(X) \coloneqq   I_A^\dagger\,\left(X\otimes \mathds{1}_{\operatorname{aux}}^A\right)\,I_A,\qquad
\Lambda_B^*(Y)\coloneqq    I_B^\dagger\,\left(Y\otimes \mathds{1}_{\operatorname{aux}}^B\right)\,I_B .
\end{align}
Then, using the definition \autoref{eq:RS} and cyclicity of the trace,
\begin{align*}
\tr\left[\tilde\rho_{\tilde{A}\tilde{B}}\,\tilde{M}\otimes \tilde{N}\right]
&=
\tr\left[\left (I_A\otimes I_B\right)\rho_{AB}\left(I_A\otimes I_B\right)^\dagger \,\left(\tilde{M}\otimes \tilde{N}\otimes \mathds{1}_{\operatorname{aux}}\right)\right]\\
&=
\tr\left[\rho_{AB}\,\left(I_A^\dagger\left(\tilde{M}\otimes \mathds{1}_{\operatorname{aux}}^A\right)I_A\right)\otimes \left(I_B^\dagger\left(\tilde N\otimes \mathds 1_{\operatorname{aux}}^B\right)I_B\right)\right]\\
&= \tr\left[\rho_{AB}\,\Lambda_A^*(\tilde M)\otimes \Lambda_B^*(\tilde N)\right].
\end{align*}
Hence we can conclude
\begin{align}
\lvert \tr\left[\rho_{AB}\,M\otimes N\right]-\tr\left[\tilde{\rho}_{\tilde{A}\tilde{B}}\,\tilde{M}\otimes \tilde{N}\right]\rvert
=
\lvert \tr\left[\rho_{AB}\,M\otimes N\right]-\tr\left[\rho_{AB}\,\Lambda_A^*(\tilde{M})\otimes \Lambda_B^*(\tilde{N})\right]\rvert.
\end{align}
Then we have
\begin{align}\label{eq:split_error}
\lvert \tr\left[\rho_{AB}M\otimes N\right]-\tr[\rho_{AB}\Lambda_A^*(\tilde{M})\otimes \Lambda_B^*(\tilde{N})]\rvert
&\leq
\rvert\tr\left[\rho_{AB}\left(M-\Lambda_A^*\left(\tilde{M}\right)\right)\otimes N\right]\rvert
\\
&\quad+
\rvert \tr\left[\rho_{AB}\Lambda_A^*\left(\tilde M\right)\otimes \left(N-\Lambda_B^*\left(\tilde N\right)\right)\right]\rvert.
\end{align}

Set
\begin{align}
J_A \coloneqq   I_A\otimes I_F:\ \mathcal{H}_A\otimes \mathcal{H}_F \to
(\mathcal{H}_{\tilde A}\otimes \mathcal{H}_{\operatorname{aux}}^A)\otimes (\mathcal{H}_{\tilde F}\otimes \mathcal{H}_{\operatorname{aux}}^F).
\end{align}
We consider now the following assertion
\begin{align} \label{eq:3-eps-bound-on-A}
    \lVert \left(M-I_A^\dagger \tilde{M} \otimes \mathds{1}_{\operatorname{aux}}^A I_A \otimes \mathds{1}_F \right)\ket{\psi_{AF}} \rVert_2 \leq 3\varepsilon
\end{align}

Indeed, since $J_A$ is an isometry, it preserves the $2$-norm, hence
\begin{align}
\lVert \left(M-I_A^\dagger \tilde{M} \otimes \mathds{1}_{\operatorname{aux}}^A I_A \right)\otimes \mathds{1}_F\ket{\psi_{AF}}\rVert_2
=\lVert J_A \left(M-I_A^\dagger \tilde{M} \otimes \mathds{1}_{\operatorname{aux}}^A I_A \right)\otimes \mathds{1}_F\ket{\psi_{AF}}\rVert_2.
\end{align}

We compute $J_A\left(M-I_A^\dagger \tilde{M} \otimes \mathds{1}_{\operatorname{aux}}^A I_A \right)\otimes \mathds{1}_F\ket{\psi_{AF}}$ explicitly:
\begin{align*}
J_A\left(I_A^\dagger \tilde{M} \otimes \mathds{1}_{\operatorname{aux}}^A I_A \right)\otimes \mathds{1}_F\ket{\psi_{AF}}
&=
(I_A\otimes I_F)\left(I_A^\dagger \tilde{M} \otimes \mathds{1}_{\operatorname{aux}}^A I_A \right)\otimes \mathds{1}_F\ket{\psi_{AF}} \\
&=
\left(\left(I_A I_A^\dagger\right)\left(\tilde{M}\otimes \mathds{1}_{\operatorname{aux}}^A\right)\otimes \mathds 1_{\tilde {F}}\otimes \mathds{1}_{\operatorname{aux}}^F\right)\left(I_A\otimes I_F\right)\ket{\psi_{AF}}
\end{align*}

Therefore
\begin{align*}
\lVert J_A\left(M-I_A^\dagger \tilde{M} \otimes \mathds{1}_{\operatorname{aux}}^A I_A \right)\otimes \mathds{1}_F|\ket{\psi_{AF}}\rVert_2
&=
\lVert J_A(M\otimes \mathds 1_F)\ket{\psi_{AF}} - I_AI_A^\dagger \left(\tilde{M} \otimes \mathds{1}_{\tilde F}\right) J_A\ket{\psi_{AF}}\rVert_2\\
&\leq
\underbrace{\lVert J_A(M\otimes \mathds 1_F)\ket{\psi_{AF}}- \left(\tilde{M} \otimes \mathds 1_{\tilde{F}}\right) \ket{\varphi_{\tilde A\tilde F}} \otimes \ket{\operatorname{aux}_A}\rVert_2}_{\le\varepsilon_A\ \text{by \autoref{def:robust_self_test}}}\\
&\quad+
 \lVert\left(\tilde{M} \otimes \mathds 1_{\tilde{F}}\right) \ket{\varphi_{\tilde A\tilde F}} \otimes \ket{\operatorname{aux}_A} - I_AI_A^\dagger \left(\tilde{M} \otimes \mathds{1}_{\tilde F}\right) J_A\ket{\psi_{AF}}\rVert_2.
\end{align*}

For the remaining term, insert $I_A I_A^\dagger \left(\tilde{M}\otimes \mathds 1_{\tilde{F}}\right)\ket{\varphi_{\tilde A\tilde F}} \otimes \ket{\operatorname{aux}_A} $ and use the triangle inequality
\begin{align*}
\lVert\left(\tilde{M} \otimes \mathds 1_{\tilde{F}}\right) \ket{\varphi_{\tilde A\tilde F}} \otimes &\ket{\operatorname{aux}_A} - I_AI_A^\dagger \left(\tilde{M} \otimes \mathds{1}_{\tilde F}\right) J_A\ket{\psi_{AF}}\rVert_2\\
&\qquad\leq
\lVert \left(\tilde{M} \otimes \mathds 1_{\tilde{F}}\right) \ket{\varphi_{\tilde A\tilde F}} \otimes \ket{\operatorname{aux}_A} - I_A I_A^\dagger \left(\tilde{M}\otimes \mathds 1_{\tilde{F}}\right)\ket{\varphi_{\tilde A\tilde F}} \otimes \ket{\operatorname{aux}_A}\rVert_2 \\
&\qquad \quad+
\lVert I_A I_A^\dagger \left(\tilde{M}\otimes \mathds 1_{\tilde{F}}\right)\ket{\varphi_{\tilde A\tilde F}} \otimes \ket{\operatorname{aux}_A}-  I_AI_A^\dagger \left(\tilde{M} \otimes \mathds{1}_{\tilde F}\right) J_A\ket{\psi_{AF}}\rVert_2.
\end{align*}

Since $\tilde{M}$ is a contraction and $I_AI_A^\dagger$ is a projector, the second term is bounded by \autoref{def:robust_self_test}.
\begin{align}
    \lVert I_A I_A^\dagger \left(\tilde{M}\otimes \mathds 1_{\tilde{F}}\right)\ket{\varphi_{\tilde A\tilde F}} \otimes \ket{\operatorname{aux}_A}-  I_AI_A^\dagger \left(\tilde{M} \otimes \mathds{1}_{\tilde F}\right) J_A\ket{\psi_{AF}}\rVert_2\leq \varepsilon_A
\end{align}

We are left with finding a bound on 
\begin{equation}
\begin{aligned}
    \lVert &\left(\tilde{M} \otimes \mathds 1_{\tilde{F}}\right) \ket{\varphi_{\tilde A\tilde F}} \otimes \ket{\operatorname{aux}_A} - I_A I_A^\dagger \left(\tilde{M}\otimes \mathds 1_{\tilde{F}}\right)\ket{\varphi_{\tilde A\tilde F}} \otimes \ket{\operatorname{aux}_A}\rVert_2 \\
    &=\lVert \left(1-I_AI_A^\dagger\right) \left(\tilde{M} \otimes \mathds 1_{\tilde{F}}\right) \ket{\varphi_{\tilde A\tilde F}} \otimes \ket{\operatorname{aux}_A} \rVert_2
\end{aligned}
\end{equation}
Now we use that 
\begin{align}
    \left( 1 - I_AI_A^\dagger\right) J_A \left(M\otimes \mathds{1}_F\right)\psi_{AF} = 0,
\end{align}
because $\left( 1 - I_AI_A^\dagger\right) I_A \otimes I_F = 0$. Thus, we have
\begin{align}
    \lVert \left(1-I_AI_A^\dagger\right) &\left(\tilde{M} \otimes \mathds 1_{\tilde{F}}\right) \ket{\varphi_{\tilde A\tilde F}} \otimes \ket{\operatorname{aux}_A}  \rVert_2 \\
    &= \lVert \left(1-I_AI_A^\dagger\right) \left(\left(\tilde{M} \otimes \mathds 1_{\tilde{F}}\right) \ket{\varphi_{\tilde A\tilde F}} \otimes \ket{\operatorname{aux}_A} - J_A \left(M\otimes \mathds{1}_F\right)\psi_{AF}\right) \rVert_2 \\
    &\leq \lVert \left(\tilde{M} \otimes \mathds 1_{\tilde{F}}\right) \ket{\varphi_{\tilde A\tilde F}} \otimes \ket{\operatorname{aux}_A} - J_A \left(M\otimes \mathds{1}_F\right)\psi_{AF} \rVert_2 \\
    &\leq \varepsilon_{A}.
\end{align}

Now we consider $\rho_{AB}$ be the given state between Alice and Bob with the marginal constraint $\rho_A=\psi_A$ and let $\ket{\Psi_{ABR}}$ be an arbitrary purification of $\rho_{AB}$. Since $\ket{\psi_{AF}}$ is also a purification of $\rho_A$, Uhlmann's theorem implies the existence of an isometry
\begin{align}
V:\mathcal{H}_F\to \mathcal{H}_B\otimes \mathcal{H}_R
\quad\text{such that}\quad
\ket{\Psi}_{ABR} = \left(\mathds{1}_A\otimes V\right)\ket{\psi_{AF}}.
\end{align}
Consider the first term in \autoref{eq:split_error}
\begin{align}
\rvert\tr\left[\rho_{AB}\left(M-\Lambda_A^*\left(\tilde{M}\right)\right)\otimes N\right]\rvert
=
\bra{\Psi}\left(M-\Lambda_A^*\left(\tilde{M}\right)\right)\otimes N \otimes \mathds 1_R\ket{\Psi}\rvert.
\end{align}
Now we apply Uhlmann's theorem and define $X \coloneqq   V^\dagger N \otimes \mathds{1}_R V \leq 1$ with Cauchy-Schwarz inequality
\begin{equation}
    \begin{aligned}
        \bra{\Psi}\left(M-\Lambda_A^*\left(\tilde{M}\right)\right)\otimes N \otimes \mathds 1_R\ket{\Psi}\rvert &= \lvert \bra{\psi_{AF}}\mathds{1}_A \otimes V^\dagger \left[\left(M-\Lambda_A^*\left(\tilde{M}\right)\right)\otimes N \otimes \mathds{1}_R\right] \mathds{1}_A \otimes V \ket{\psi_{AF}}\rvert \\
        &= \lvert \bra{\psi_{AF}} \left(M-\Lambda_A^*\left(\tilde{M}\right)\right)\otimes X \ket{\psi_{AF}}\rvert \\
        &\leq \lVert \left(M-\Lambda_A^*\left(\tilde{M}\right)\right)\otimes \mathds{1}_F \ket{\psi_{AF}}\rVert_2 \cdot \Vert \mathds{1}_A \otimes X \ket{\psi_{AF}} \Vert_2 \\
        &\leq  \lVert \left(M-\Lambda_A^*\left(\tilde{M}\right)\right)\otimes \mathds{1}_F \ket{\psi_{AF}}\rVert_2. 
    \end{aligned}
\end{equation}
Now we apply \autoref{eq:3-eps-bound-on-A}.
Repeating the argument of the previous part with the roles $(A,F,\psi,\varepsilon_A,M,\tilde M)$ replaced by $(B,G,\mu,\varepsilon_B,N,\tilde N)$ yields the state-dependent vector bound
\begin{align}
\lVert (N-\Lambda_B^*(\tilde N))\otimes \mathds 1_G\ \ket{\mu_{BG}}\rVert_2\ \le\ 3\varepsilon_B,
\end{align}
and, using the marginal constraint $\rho_B=\mu_B$ from \autoref{eq:marginal_constraint}, the correlation bound
\begin{align}\label{eq:second_term_bound_general}
\lvert \tr\left[\rho_{AB}\,X\otimes \left(N-\Lambda_B^*\left(\tilde N\right)\right)\right]\rvert \ \le\ 3\varepsilon_B
\qquad\text{for every operator }X\text{ on }\mathcal{H}_A\text{ with }\lVert X\rVert\le 1.
\end{align}

We apply this with $X=\Lambda_A^*(\tilde M)$. Then \autoref{eq:second_term_bound_general} gives
\begin{align}\label{eq:second_term_bound}
\lvert \tr\left[\rho_{AB}\,\Lambda_A^*(\tilde M)\otimes \left(N-\Lambda_B^*\left(\tilde N\right)\right)\right]\rvert \ \le\ 3\varepsilon_B.
\end{align}

Finally, combine \autoref{eq:split_error} and \autoref{eq:second_term_bound}:
\begin{align}
\lvert
\tr\left[\rho_{AB}\,M_{a|x}\otimes N_{b|y}\right]
-
\tr\left[\tilde\rho_{\tilde A\tilde B}\,\tilde M_{a|x}\otimes \tilde N_{b|y}\right]
\rvert
\leq 3 \left(\varepsilon_A + \varepsilon_B\right).
\end{align}
This is exactly \autoref{eq:probability_error}.
\end{proof}

The following lemma shows that, given a local $\varepsilon_A$-robust self-test in Alice's lab, the conditional entropy can be transferred from the fully device-independent setting to a device-dependent one, up to an error controlled by the continuity properties of the conditional entropy. In general, this is a very strong statement, and one should not expect it to be efficient outside a small neighborhood of the self-test. In essence, it shows that the conditional entropy remains well behaved even when viewed solely as a function of the underlying statistics. By contrast, within the convex set of quantum-realizable statistics, for example those arising in a quantum game, many different strategies can produce the same observed statistics. In such situations, it is therefore unlikely that the conditional entropy can, in general, be reduced to an effectively device-dependent problem.
\begin{lemma}[Entropy transfer]\label{lem:entropy_transfer_continuity}
Fix $\tilde x\in X$ and assume \autoref{assumptions}. Let $\ketbra{\psi}{\psi}_{ABE}$
be a purification of $\rho_{AB}$, and define
$E' \coloneqq \mathcal H^{A}_{\operatorname{aux}}\otimes \mathcal H^{B}_{\operatorname{aux}}\otimes \mathcal H_E.$
Define the implemented classical-quantum state for $\tilde x$ by
\begin{align}
\rho^{\operatorname{phys}}_{AE}
\coloneqq
\sum_{a\in A}\ketbra{a}{a}_{A}\otimes
\tr_A\!\left[(M_{a\vert \tilde x}\otimes \mathds{1}_E)\rho_{AE}\right].
\end{align}
Define also the classical-quantum state on $AE'$ with 
\begin{align}
\ket{\Psi}
\coloneqq
(I_A\otimes I_B\otimes \mathds{1}_E)\ket{\psi}_{ABE} \in \mathcal{H}_{\tilde{A}} \otimes \mathcal{H}^A_{\operatorname{aux}} \otimes \tilde{\mathcal{H}}_B \otimes \mathcal{H}^B_{\operatorname{aux}} \otimes \mathcal{H}_E
\end{align}
by
\begin{align}
\tau_{AE'}
\coloneqq
\sum_{a\in A}\ketbra{a}{a}_{A}\otimes
\tr_{\tilde A\tilde B}\!\left[
\Bigl(
(\tilde M_{a\vert \tilde x}\otimes \mathds{1}^{A}_{\operatorname{aux}})
\otimes \mathds{1}_{\tilde B\,\mathcal H^{B}_{\operatorname{aux}}\,E}
\Bigr)
\ketbra{\Psi}{\Psi}
\right].
\end{align}
Then, with $\delta\coloneqq |A|\varepsilon_A$,
\begin{align}
\mathbb{H}(A\vert E)_{\rho^{\operatorname{phys}}_{AE}}
\ge
\mathbb{H}(A\vert E')_{\tau_{AE'}}
-
f(\delta,|A|).
\end{align}
\end{lemma}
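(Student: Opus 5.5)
The plan is to compare the two classical-quantum states $\rho^{\operatorname{phys}}_{AE}$ and $\tau_{AE'}$ directly in trace distance, and then invoke a continuity bound for the general conditional entropy $\mathbb{H}$. Here $f(\delta,|A|)$ is read as the associated continuity modulus, of Alicki--Fannes--Winter type for the von Neumann case and as in \cite[Def.~1]{hahn2025analyticrenyientropybounds} in general.

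The first step puts both states on the same conditioning system. Let $V\coloneqq I_A\otimes I_B\otimes\mathds 1_E$ and $\tilde\rho^{\operatorname{phys}}_{AE'}\coloneqq(\mathds 1_A\otimes \tr_{\tilde A\tilde B}\circ V\cdot V^\dagger)$ applied to $\rho^{\operatorname{phys}}$, extended in the obvious way. Equivalently,
\[
\tilde\rho^{\operatorname{phys}}_{AE'}=\sum_a \ketbra{a}{a}\otimes \tr_{\tilde A\tilde B}\bigl[V(M_{a\vert\tilde x}\otimes\mathds 1)\ketbra{\psi}{\psi}V^\dagger\bigr].
\]
Since $E$ is recovered from $E'$ by tracing out $\mathcal H^{A}_{\operatorname{aux}}\otimes\mathcal H^{B}_{\operatorname{aux}}$, and since $V$ is an isometry acting only on $A$'s and $B$'s quantum systems (which Eve does not hold), I would argue via data processing for $\mathbb H$ under channels on the conditioning system that $\mathbb H(A\vert E)_{\rho^{\operatorname{phys}}}\ge \mathbb H(A\vert E')_{\tilde\rho^{\operatorname{phys}}}$.

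This step needs care, because $E'$ contains Alice's and Bob's auxiliary systems, which are not originally Eve's. Giving Eve more systems can only decrease the entropy, so the inequality should go in the right direction. Concretely, $\rho^{\operatorname{phys}}_{AE}$ is obtained from the cq-state on $A\,\mathcal H_{\tilde A}\mathcal H_{\tilde B}E'$ by tracing out, and $\mathbb H$ is monotone under partial trace of the conditioning system. In fact, the cleaner choice is to condition on $E'$ together with $\tilde B$, and then trace out $\tilde B$.

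The second step is the main estimate. For each $a$, compare the vectors $V(M_{a\vert\tilde x}\otimes\mathds 1)\ket\psi$ and $(\tilde M_{a\vert\tilde x}\otimes\mathds 1)V\ket\psi$. Using the marginal constraint \eqref{eq:MC} and Uhlmann's theorem exactly as in the proof of \autoref{lem:compression_correlations}, write $\ket\psi_{ABE}=(\mathds 1_A\otimes W)\ket{\psi_{AF}}$. The difference then has norm at most $\lVert (I_AM-(\tilde M\otimes\mathds 1^A_{\operatorname{aux}})I_A)\otimes\mathds 1_F\ket{\psi_{AF}}\rVert_2$. This norm is bounded by $O(\varepsilon_A)$ by the same three-term triangle-inequality argument as in \eqref{eq:3-eps-bound-on-A}. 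Passing to operators via $\lVert\ketbra{u}{u}-\ketbra{v}{v}\rVert_1\le 2\lVert u-v\rVert_2$ would only give a bound of order $\lVert u-v\rVert$ times a constant. The cleaner route is to use projectivity: with $u_a=VM_a\ket\psi$ and $v_a=\tilde M_aV\ket\psi$, after partial trace one bounds $\lVert \tr_{\tilde A\tilde B}[u_au_a^\dagger-v_av_a^\dagger]\rVert_1\le \lVert u_a-v_a\rVert(\lVert u_a\rVert+\lVert v_a\rVert)\le 2\cdot O(\varepsilon_A)$. Summing over $a$ gives $\lVert\tilde\rho^{\operatorname{phys}}_{AE'}-\tau_{AE'}\rVert_1\le O(|A|\varepsilon_A)$, which is where $\delta=|A|\varepsilon_A$ comes from, with constants absorbed into $f$.

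Finally, apply the continuity bound for $\mathbb H$ on classical-quantum states with classical register of dimension $|A|$. This gives $\mathbb H(A\vert E')_{\tilde\rho^{\operatorname{phys}}}\ge \mathbb H(A\vert E')_{\tau}-f(\delta,|A|)$, and chaining with the first step yields the claim. I expect the main obstacle to be the first step, namely the bookkeeping showing that conditioning on the enlarged system $E'$ (which absorbs the auxiliary registers produced by the isometries) only lowers the entropy of the physical state. Note that $\tau$ need not be normalized exactly if $\tilde M$ is not a complete POVM on the range of $I_A$. I would first resolve this by observing that $\sum_a\tilde M_a=\mathds 1$ in the ideal model, so $\tau_{AE'}$ is a genuine cq-state.
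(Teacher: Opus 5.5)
Your proof is correct up to the fixes below. It has the same skeleton as the paper's: data processing to pass from $E$ to $E'$, the marginal constraint and Uhlmann's theorem to transport the dilation inequalities from $\ket{\psi_{AF}}$ to $\ket{\psi}_{ABE}$, a blockwise trace-norm bound, and continuity of $\mathbb H$. The difference is the intermediate state, and that choice matters.

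Set $u_a\coloneqq V(M_{a\vert\tilde x}\otimes\mathds 1)\ket{\psi}$ and $v_a\coloneqq(\tilde M_{a\vert\tilde x}\otimes\mathds 1)\ket{\Psi}$. Both arguments rest on $\lVert u_a-v_a\rVert_2\le 2\varepsilon_A$.

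\textbf{The paper's route.} It compares the blocks $\tr_{\tilde A\tilde B}\ketbra{v_a}{\Psi}$ of $\tau_{AE'}$ with the one-sided blocks $\tr_{\tilde A\tilde B}\ketbra{u_a}{\Psi}$ of its $\omega_{AE'}$. This gives $\lVert\Delta_a\rVert_1\le\lVert u_a-v_a\rVert_2$, hence exactly $\delta=|A|\varepsilon_A$. It also gives $\omega_{E'}=\tau_{E'}$, so only a continuity bound at fixed conditioning marginal is needed. However, $I_AM_{a\vert\tilde x}I_A^\dagger$ acts nontrivially on the factor $\mathcal H^A_{\operatorname{aux}}$ of $E'$, which is not traced out. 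So $\tr_{\tilde A\tilde B}\ketbra{u_a}{\Psi}$ need not even be Hermitian, and $\omega_{AE'}$ is not guaranteed to be a state on which $\mathbb H$, data processing or continuity make sense.

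\textbf{Your route.} Your Step 2 uses the L\"uders-type blocks $\tr_{\tilde A\tilde B}\ketbra{u_a}{u_a}$. These form a genuine cq-state whose $AE$-marginal is $\rho^{\operatorname{phys}}_{AE}$, using projectivity of $M_{a\vert\tilde x}$. So every step of your chain is applied to an actual state. The price is twofold:
\begin{itemize}
\item The estimate $\lVert\ketbra{u_a}{u_a}-\ketbra{v_a}{v_a}\rVert_1\le\lVert u_a-v_a\rVert_2(\lVert u_a\rVert_2+\lVert v_a\rVert_2)$ loses a factor $2$. You get $f(2\delta,|A|)$ unless $f$ may absorb constants.
\item The $E'$-marginals no longer coincide, because the pinching by $I_AM_{a\vert\tilde x}I_A^\dagger$ acts on $\mathcal H^A_{\operatorname{aux}}$. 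You therefore need an unrestricted continuity bound for $\mathbb H$. This holds for the von Neumann entropy by Alicki--Fannes--Winter. For a general $\mathbb H$ it is a stronger hypothesis than the fixed-marginal property the paper invokes, and you should state it explicitly.
\end{itemize}

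\textbf{Small corrections.}
\begin{itemize}
\item The formula displayed in your Step 1, $\tr_{\tilde A\tilde B}[V(M_{a\vert\tilde x}\otimes\mathds 1)\ketbra{\psi}{\psi}V^\dagger]$, is exactly the paper's $\omega$ and inherits its defect. To match Step 2 it must read $\tr_{\tilde A\tilde B}[V(M_{a\vert\tilde x}\otimes\mathds 1)\ketbra{\psi}{\psi}(M_{a\vert\tilde x}\otimes\mathds 1)V^\dagger]$.
\item Conditioning on $E'\tilde B$, as your aside suggests, would only give a bound in terms of $\mathbb H(A\vert E'\tilde B)_\tau\le\mathbb H(A\vert E')_\tau$. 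That is the wrong direction, so keep conditioning on $E'$ alone.
\item The intertwining bound follows from the first and third dilation inequalities by a two-term triangle inequality with constant $2$. The three-term argument in \autoref{lem:compression_correlations} controls a different operator, $M-\Lambda_A^*(\tilde M)$.
\end{itemize}
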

\begin{proof}
Since $\rho_A=\psi_A$ by \autoref{eq:MC}, and $\ket{\psi_{AF}}$ is a purification of $\rho_A$, while $\ket{\psi}_{ABE}$ is a purification of $\rho_A$ as well, Uhlmann's theorem yields an isometry
\begin{align}
W:\mathcal H_F \to \mathcal H_B\otimes \mathcal H_E
\qquad\text{such that}\qquad
\ket{\psi}_{ABE}=(\mathds{1}_A\otimes W)\ket{\psi_{AF}}.
\end{align}
Define, in addition to $\tau_{AE'}$,
\begin{align}\label{eq:marginal_constraint_used_entropy_transer}
\omega_{AE'}
\coloneqq
\sum_{a\in A}\ketbra{a}{a}_{A}\otimes
\tr_{\tilde A\tilde B}\!\left[
\Bigl(
(I_A M_{a\vert \tilde x} I_A^\dagger)\otimes
\mathds{1}_{\tilde B\,\mathcal H^B_{\operatorname{aux}}\,E}
\Bigr)
\ketbra{\Psi}{\Psi}
\right].
\end{align}
Thus $\omega_{AE'}$ is the lifted cq-state obtained from the physical measurement, whereas $\tau_{AE'}$ is obtained from the ideal measurement.

We first derive an intertwining estimate from the robust local dilation on Alice's side. For this, set
\begin{align}
J_A \coloneqq I_A\otimes \mathds{1}_F
\end{align}
and
\begin{align}
\tilde T_a
\coloneqq
(\tilde M_{a\vert \tilde x}\otimes \mathds{1}^{A}_{\operatorname{aux}})
\otimes
\mathds{1}_{\tilde F\,F_{\operatorname{aux}}}.
\end{align}
By $S_A\succ_{\varepsilon_A}\tilde S_A$, there exists an auxiliary state
$\ket{\operatorname{aux}} \coloneqq \ket{\operatorname{aux}}_A \otimes \ket{\operatorname{aux}}_F\in \mathcal H_{F_{\operatorname{aux}}}$ such that
\begin{align}
\left\|
J_A(M_{a\vert \tilde x}\otimes \mathds{1}_F)\ket{\psi_{AF}}
-
\tilde T_a\bigl(\ket{\varphi_{\tilde A\tilde F}}\otimes \ket{\operatorname{aux}}\bigr)
\right\|_2
&\le \varepsilon_A,
\\
\left\|
J_A\ket{\psi_{AF}}
-
\ket{\varphi_{\tilde A\tilde F}}\otimes \ket{\operatorname{aux}}
\right\|_2
&\le \varepsilon_A.
\end{align}
Since $\tilde T_a$ is a contraction, applying $\tilde T_a$ to the second estimate gives
\begin{align}
\left\|
\tilde T_a J_A\ket{\psi_{AF}}
-
\tilde T_a\bigl(\ket{\varphi_{\tilde A\tilde F}}\otimes \ket{\operatorname{aux}}\bigr)
\right\|_2
\le \varepsilon_A.
\end{align}
Hence, by the triangle inequality,
\begin{align}
\left\|
J_A(M_{a\vert \tilde x}\otimes \mathds{1}_F)\ket{\psi_{AF}}
-
\tilde T_a J_A\ket{\psi_{AF}}
\right\|_2
\le 2\varepsilon_A.
\end{align}
Using the definitions of $J_A$ and $\tilde T_a$, this is
\begin{align}
\left\|
\Bigl(
I_A M_{a\vert \tilde x}
-
(\tilde M_{a\vert \tilde x}\otimes \mathds{1}^{A}_{\operatorname{aux}})I_A
\Bigr)\otimes \mathds{1}_F
\,\ket{\psi_{AF}}
\right\|_2
\le 2\varepsilon_A.
\end{align}
Now apply the isometry $\mathds{1}_A\otimes W$ and use
$\ket{\psi}_{ABE}=(\mathds{1}_A\otimes W)\ket{\psi_{AF}}$ to obtain
\begin{align}\label{eq:entropy_transfer_intertwine_ABE}
\left\|
\Bigl(
I_A M_{a\vert \tilde x}
-
(\tilde M_{a\vert \tilde x}\otimes \mathds{1}^{A}_{\operatorname{aux}})I_A
\Bigr)\otimes \mathds{1}_{BE}
\,\ket{\psi}_{ABE}
\right\|_2
\le 2\varepsilon_A.
\end{align}
Finally, since $\ket{\Psi}=(I_A\otimes I_B\otimes \mathds{1}_E)\ket{\psi}_{ABE}$ and
$I_B$ is an isometry, \autoref{eq:entropy_transfer_intertwine_ABE} implies
\begin{align}\label{eq:entropy_transfer_intertwine_lifted}
\left\|
\Bigl(
I_A M_{a\vert \tilde x} I_A^\dagger
-
(\tilde M_{a\vert \tilde x}\otimes \mathds{1}^{A}_{\operatorname{aux}})
\Bigr)\otimes \mathds{1}_{\tilde B\,\mathcal H^B_{\operatorname{aux}}\,E}
\,\ket{\Psi}
\right\|_2
\le 2\varepsilon_A.
\end{align}
We next bound the trace distance between $\omega_{AE'}$ and $\tau_{AE'}$.
Write
\begin{align}
\omega_{AE'}-\tau_{AE'}
=
\sum_{a\in A}\ketbra{a}{a}_{A}\otimes \Delta_a,
\end{align}
where
\begin{align}
\Delta_a
\coloneqq
\tr_{\tilde A\tilde B}\!\left[
\Bigl(
(I_A M_{a\vert \tilde x} I_A^\dagger
-
(\tilde M_{a\vert \tilde x}\otimes \mathds{1}^{A}_{\operatorname{aux}}))
\otimes \mathds{1}_{\tilde B\,\mathcal H^B_{\operatorname{aux}}\,E}
\Bigr)
\ketbra{\Psi}{\Psi}
\right].
\end{align}
Since the classical blocks are orthogonal,
\begin{align}
\Vert \omega_{AE'}-\tau_{AE'}\Vert_1
=
\sum_{a\in A}\Vert \Delta_a\Vert_1.
\end{align}
For fixed $a$, define
\begin{align}
\ket{u_a}
\coloneqq
\Bigl(
(I_A M_{a\vert \tilde x} I_A^\dagger
-
(\tilde M_{a\vert \tilde x}\otimes \mathds{1}^{A}_{\operatorname{aux}}))
\otimes \mathds{1}_{\tilde B\,\mathcal H^B_{\operatorname{aux}}\,E}
\Bigr)\ket{\Psi}.
\end{align}
Then $\Delta_a=\tr_{\tilde A\tilde B}\ketbra{u_a}{\Psi}$. Using the standard estimate
\begin{align}
\left\|\tr_X\ketbra{u}{v}\right\|_1 \le \|u\|_2\,\|v\|_2,
\end{align}
we get
\begin{align}
\|\Delta_a\|_1
\le
\|u_a\|_2\,\|\Psi\|_2
=
\|u_a\|_2.
\end{align}
Hence, by \autoref{eq:entropy_transfer_intertwine_lifted},
\begin{align}
\|\Delta_a\|_1 \le 2\varepsilon_A.
\end{align}
Summing over $a\in A$ yields
\begin{align}\label{eq:entropy_transfer_trace_distance}
\Vert \omega_{AE'}-\tau_{AE'}\Vert_1
\le 2|A|\,\varepsilon_A,
\qquad
\frac12\Vert \omega_{AE'}-\tau_{AE'}\Vert_1
\le |A|\,\varepsilon_A
\eqqcolon \delta.
\end{align}

We now compare the relevant marginals.

First, by construction of $\omega_{AE'}$, applying the local isometries and then
measuring the lifted effects $I_A M_{a\vert \tilde x}I_A^\dagger$ produces the same
physical cq-state as measuring $M_{a\vert \tilde x}$ before the isometries.
Therefore,
\begin{align}\label{eq:entropy_transfer_physical_marginal}
\tr_{E'\setminus E}\!\left[\omega_{AE'}\right]
=
\rho^{\operatorname{phys}}_{AE}.
\end{align}

Second, $\omega_{AE'}$ and $\tau_{AE'}$ have the same conditioning marginal.
Indeed,
\begin{align}
\omega_{E'}
&=
\sum_{a\in A}
\tr_{\tilde A\tilde B}\!\left[
\Bigl(
(I_A M_{a\vert \tilde x} I_A^\dagger)\otimes
\mathds{1}_{\tilde B\,\mathcal H^B_{\operatorname{aux}}\,E}
\Bigr)\ketbra{\Psi}{\Psi}
\right]
\\
&=
\tr_{\tilde A\tilde B}\!\left[
\Bigl(
(I_A I_A^\dagger)\otimes
\mathds{1}_{\tilde B\,\mathcal H^B_{\operatorname{aux}}\,E}
\Bigr)\ketbra{\Psi}{\Psi}
\right]
\\
&=
\tr_{\tilde A\tilde B}\!\left[\ketbra{\Psi}{\Psi}\right],
\end{align}
because $\sum_a M_{a\vert \tilde x}=\mathds{1}_A$ and $\ket{\Psi}$ lies in the range
of $I_A\otimes I_B$. Similarly,
\begin{align}
\tau_{E'}
&=
\tr_{\tilde A\tilde B}\!\left[
\Bigl(
(\sum_a \tilde M_{a\vert \tilde x}\otimes \mathds{1}^{A}_{\operatorname{aux}})
\otimes
\mathds{1}_{\tilde B\,\mathcal H^B_{\operatorname{aux}}\,E}
\Bigr)\ketbra{\Psi}{\Psi}
\right]
\\
&=
\tr_{\tilde A\tilde B}\!\left[\ketbra{\Psi}{\Psi}\right].
\end{align}
Hence
\begin{align}\label{eq:entropy_transfer_equal_marginal}
\omega_{E'}=\tau_{E'}.
\end{align}
We can now invoke the two abstract properties of the conditional entropy $\mathbb H$.

By \autoref{eq:entropy_transfer_physical_marginal}, the state
$\rho^{\operatorname{phys}}_{AE}$ is obtained from $\omega_{AE'}$ by a local
processing of the conditioning register $E'$. Therefore, monotonicity under local
processing of the conditioning system gives
\begin{align}
\mathbb H(A\vert E)_{\rho^{\operatorname{phys}}_{AE}}
\ge
\mathbb H(A\vert E')_{\omega_{AE'}}.
\end{align}
Moreover, by \autoref{eq:entropy_transfer_equal_marginal} and
\autoref{eq:entropy_transfer_trace_distance}, the continuity bound at fixed
conditioning marginal yields
\begin{align}
\mathbb H(A\vert E')_{\omega_{AE'}}
\ge
\mathbb H(A\vert E')_{\tau_{AE'}}
-
f(\delta,|A|).
\end{align}
Combining the two inequalities proves
\begin{align}
\mathbb H(A\vert E)_{\rho^{\operatorname{phys}}_{AE}}
\ge
\mathbb H(A\vert E')_{\tau_{AE'}}
-
f(\delta,|A|),
\end{align}
with $\delta=|A|\,\varepsilon_A$.
\end{proof}

For the device-dependent model $\left(\mathcal{H}_{\tilde{A}},\tilde{\mathcal{H}}_B, \{\tilde{M}_{a\vert x}\}, \{\tilde{N}_{b\vert y}\}\right)$, we consider the following optimization problem on the reduced space coming from \autoref{assumptions}. 
\begin{equation}\label{eq:OPT_ideal_entropy_relaxed}
\begin{aligned}
\widetilde{\operatorname{OPT}}^{\mathbb{H}}_{\varepsilon_{\operatorname{prob}}}(p;\tilde x)
\coloneqq 
\inf\
& \mathbb H\left(A_{\tilde x}\vert E'\right)_{(\tilde{\mathcal M}_{\tilde x}\otimes \id_{E'})(\tilde\rho_{\tilde A E'})}\\
\operatorname{s.t.}\
& \left\vert
\tr \left[\tilde\rho_{\tilde A\tilde B}\,\tilde M_{a\vert x}\otimes \tilde N_{b\vert y}\right]
-
p_{abxy}
\right\vert
\leq
3(\varepsilon_A + \varepsilon_B),
\quad  a,b,x,y,\\
& \tilde\rho_{\tilde A\tilde B E'} \geq 0.
\end{aligned}
\end{equation}
Then we can show the following theorem. 
\begin{theorem}[Finite-dimensional reduction from robust self-testing]\label{thm:finite_reduction_from_RST}
Assume fixed finite-dimensional ideal projective families
$\{\tilde M_{a\vert x}\}$ on $\tilde{\mathcal H}_A$ and
$\{\tilde N_{b\vert y}\}$ on $\tilde{\mathcal H}_B$.

Assume that for every feasible realization of \autoref{eq:OPT-DI},
there exist local $\varepsilon_A$- and $\varepsilon_B$-dilations as in
Definition~\ref{def:robust_self_test}, and that
Lemma~\ref{lem:compression_correlations} applies, yielding
\begin{align}
\varepsilon_{\operatorname{prob}}=3\left(\varepsilon_A+\varepsilon_B\right).
\end{align}
Assume moreover the entropy-transfer estimate
\begin{align}
\mathbb H\left(A_{\tilde x}\vert E\right)_{\operatorname{phys}}
\ge
\mathbb H\left(A_{\tilde x}\vert E'\right)_{\operatorname{ideal}}
-
f(\varepsilon_A,\vert A\vert).
\end{align}
Then
\begin{align}\label{eq:main_reduction_entropy}
\operatorname{OPT}_{\mathrm{DI}}^{\mathbb H}(p;\tilde x)
\ge
\widetilde{\operatorname{OPT}}^{\mathbb H}_{\varepsilon_{\operatorname{prob}}}(p;\tilde x)
-
f(\varepsilon_A,\vert A\vert).
\end{align}
\end{theorem}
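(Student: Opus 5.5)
The plan is the standard ``feasible-point-to-feasible-point'' argument: every feasible realization of \autoref{eq:OPT-DI} is mapped to a feasible point of \autoref{eq:OPT_ideal_entropy_relaxed} whose objective value is at most the DI objective plus $f(\varepsilon_A,\vert A\vert)$. Taking the infimum then gives \autoref{eq:main_reduction_entropy}. If \autoref{eq:OPT-DI} is infeasible, its infimum is $+\infty$ and there is nothing to show, so we may assume a feasible realization exists.

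Fix an arbitrary feasible realization $(\mathcal H_A,\mathcal H_B,\{M_{a\vert x}\},\{N_{b\vert y}\},\ket{\psi}_{ABE})$ of \autoref{eq:OPT-DI}. By hypothesis it carries local $\varepsilon_A$- and $\varepsilon_B$-dilations satisfying \autoref{assumptions}, in particular \autoref{eq:MC}. These give isometries $I_A$ and $I_B$. We form $\ket{\Psi}=(I_A\otimes I_B\otimes\mathds 1_E)\ket{\psi}_{ABE}$, the register $E'=\mathcal H^A_{\operatorname{aux}}\otimes\mathcal H^B_{\operatorname{aux}}\otimes\mathcal H_E$, and the candidate $\tilde\rho_{\tilde A\tilde B E'}\coloneqq\ketbra{\Psi}{\Psi}$ viewed on $\tilde{\mathcal H}_A\otimes\tilde{\mathcal H}_B\otimes E'$. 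This operator is positive. Its $\tilde A\tilde B$-marginal is exactly \autoref{eq:RS}.

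Next we check feasibility. Since $p_{abxy}=\tr[\rho_{AB}M_{a\vert x}\otimes N_{b\vert y}]$ for the realization, Lemma~\ref{lem:compression_correlations} yields
\[
\bigl\vert\tr[\tilde\rho_{\tilde A\tilde B}\tilde M_{a\vert x}\otimes\tilde N_{b\vert y}]-p_{abxy}\bigr\vert\le 3(\varepsilon_A+\varepsilon_B)=\varepsilon_{\operatorname{prob}}
\]
for all $a,b,x,y$. So $\tilde\rho_{\tilde A\tilde B E'}$ is feasible for $\widetilde{\operatorname{OPT}}^{\mathbb H}_{\varepsilon_{\operatorname{prob}}}$.

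For the objective, we identify the cq-state $(\tilde{\mathcal M}_{\tilde x}\otimes\id_{E'})(\tilde\rho_{\tilde AE'})$ with $\tau_{AE'}$ from Lemma~\ref{lem:entropy_transfer_continuity}. Both are obtained by measuring $\tilde M_{a\vert\tilde x}\otimes\mathds 1^A_{\operatorname{aux}}$ on $\ket\Psi$ and tracing out $\tilde A\tilde B$. The entropy-transfer estimate, assumed in the theorem and provided by Lemma~\ref{lem:entropy_transfer_continuity}, then gives
\[
\mathbb H(A_{\tilde x}\vert E)_{\operatorname{phys}}\ge\mathbb H(A_{\tilde x}\vert E')_{\tau}-f(\varepsilon_A,\vert A\vert)\ge\widetilde{\operatorname{OPT}}^{\mathbb H}_{\varepsilon_{\operatorname{prob}}}(p;\tilde x)-f(\varepsilon_A,\vert A\vert).
\]
The right-hand side no longer depends on the chosen realization. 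Taking the infimum over all feasible realizations of \autoref{eq:OPT-DI} therefore proves \autoref{eq:main_reduction_entropy}.

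The main obstacle is the bookkeeping in this identification. The register $E'$ must be admissible in the relaxed problem, which requires the infimum to range over arbitrary (possibly infinite-dimensional) extensions $E'$. I would make this explicit by reading the constraint $\tilde\rho_{\tilde A\tilde B E'}\ge0$ as allowing any purifying system.

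A further subtlety is that the $\varepsilon$'s must be uniform over all realizations. This is exactly the hypothesis ``for every feasible realization there exist $\varepsilon_A$-, $\varepsilon_B$-dilations'', and I would cite it at that point. Finally, the function $f$ in the theorem takes $\varepsilon_A$ while Lemma~\ref{lem:entropy_transfer_continuity} uses $\delta=\vert A\vert\varepsilon_A$. I would absorb the factor $\vert A\vert$ into $f$, consistent with the theorem's stated assumption.
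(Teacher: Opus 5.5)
Your proposal is correct and matches the paper's proof essentially step for step. Both lift an arbitrary feasible realization via $I_A\otimes I_B\otimes\mathds 1_E$ to $\tilde\rho_{\tilde A\tilde B E'}$, certify feasibility for the relaxed problem with Lemma~\ref{lem:compression_correlations}, apply the assumed entropy-transfer estimate to the resulting feasible point, and take the infimum. Your extra remarks are points the paper leaves implicit, and handling them as you suggest is fine: the infeasible case, allowing an arbitrary register $E'$ in the relaxed problem, and the $\delta=\vert A\vert\varepsilon_A$ versus $\varepsilon_A$ argument of $f$.
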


\begin{proof}
Fix an arbitrary feasible realization of \autoref{eq:OPT-DI},
\begin{align}
\left(\rho_{ABE},\{M_{a\vert x}\},\{N_{b\vert y}\}\right),
\qquad
\rho_{ABE}=\ketbra{\psi}{\psi}.
\end{align}
By assumption, there exist local $\varepsilon_A$- and $\varepsilon_B$-dilations with isometries $I_A,I_B$.
Define the compressed state as in \autoref{eq:RS},
\begin{align}
\tilde\rho_{\tilde A\tilde B}
=
\tr_{\operatorname{aux}}
\left[
\left(I_A\otimes I_B\right)\rho_{AB}\left(I_A\otimes I_B\right)^\dagger
\right].
\end{align}
Also define the lifted tripartite state
\begin{align}
\tilde\rho_{\tilde A\tilde B E'}
:=
\left(I_A\otimes I_B\otimes \mathds{1}_E\right)\rho_{ABE}
\left(I_A\otimes I_B\otimes \mathds{1}_E\right)^\dagger,
\end{align}
where the auxiliary output systems of $I_A$ and $I_B$ are absorbed into $E'$.
Its $\tilde A\tilde B$-marginal is precisely $\tilde\rho_{\tilde A\tilde B}$.

By Lemma~\ref{lem:compression_correlations}, for all $a,b,x,y$,
\begin{align}
\left\vert
\tr \left[\rho_{AB}M_{a\vert x}\otimes N_{b\vert y}\right]
-
\tr \left[\tilde\rho_{\tilde A\tilde B}\tilde M_{a\vert x}\otimes \tilde N_{b\vert y}\right]
\right\vert
\le
\varepsilon_{\operatorname{prob}}.
\end{align}
Since the original realization is feasible in \autoref{eq:OPT-DI},
\begin{align}
\tr \left[\rho_{AB}M_{a\vert x}\otimes N_{b\vert y}\right]=p_{abxy},
\end{align}
hence
\begin{align}
\left\vert
\tr \left[\tilde\rho_{\tilde A\tilde B}\tilde M_{a\vert x}\otimes \tilde N_{b\vert y}\right]
-
p_{abxy}
\right\vert
\le
\varepsilon_{\operatorname{prob}}
\quad \forall a,b,x,y.
\end{align}
Therefore the lifted ideal realization
$\bigl(\tilde\rho_{\tilde A\tilde B E'},\{\tilde M_{a\vert x}\},\{\tilde N_{b\vert y}\}\bigr)$
is feasible for \autoref{eq:OPT_ideal_entropy_relaxed}.

Now apply the assumed entropy-transfer bound to this realization: 
\begin{align}
\mathbb H\left(A_{\tilde x}\vert E\right)_{\operatorname{phys}}
\ge
\mathbb H\left(A_{\tilde x}\vert E'\right)_{\operatorname{ideal}}
-
f(\varepsilon_A,\vert A\vert).
\end{align}
The term
\begin{align}
\mathbb H\left(A_{\tilde x}\vert E'\right)_{\operatorname{ideal}}
\end{align}
is exactly the objective value of \autoref{eq:OPT_ideal_entropy_relaxed} at a feasible point, so
\begin{align}
\mathbb H\left(A_{\tilde x}\vert E'\right)_{\operatorname{ideal}}
\ge
\widetilde{\operatorname{OPT}}^{\mathbb H}_{\varepsilon_{\operatorname{prob}}}(p;\tilde x).
\end{align}
Consequently,
\begin{align}
\mathbb H\left(A_{\tilde x}\vert E\right)_{\operatorname{phys}}
\ge
\widetilde{\operatorname{OPT}}^{\mathbb H}_{\varepsilon_{\operatorname{prob}}}(p;\tilde x)
-
f(\varepsilon_A,\vert A\vert).
\end{align}

Since the chosen feasible realization of \autoref{eq:OPT-DI} was arbitrary, taking the infimum over all feasible realizations yields
\autoref{eq:main_reduction_entropy}.
\end{proof}

\subsection{Error bounds for robust self-tests} \label{sec:CHSH-bounds}

In \autoref{assumptions}, we started with $\varepsilon_A$- and $\varepsilon_B$-dilations. One might therefore rightfully ask how to obtain the parameters $\varepsilon_A$, $\varepsilon_B$ which played a crucial role in our analysis. In this subsection, we will make explicit how they can be obtained if we use the CHSH inequality as our Bell inequality. In \cite{operator_algebra_self_test2024}, it was proven that for an $\varepsilon$-optimal strategy (meaning that the value of the game is at most $\varepsilon$-far from the optimal value, or in other words that the observed violation of the CHSH inequality is at most $\varepsilon$ far from the maximal quantum violation), $\varepsilon_A$ and $\varepsilon_B$ can be chosen as $\mathcal O(\sqrt{\varepsilon})$.
We will sketch the arguments in Section 5.2 of \cite{zhao2024tsirelson} that yield a concrete bound on the constant in front of $\sqrt{\epsilon}$.

For the CHSH game, its associated game polynomial is given by
\begin{equation}
    \Phi_{\mathrm{CHSH}}:=\frac{1}{2} + \frac{1}{8}[a_0 \otimes b_0 + a_0 \otimes b_1 + a_1 \otimes b_0 - a_1 \otimes b_1]
\end{equation}
Here, $a_0$, $a_1$ are the observables Alice measures and $b_0$, $b_1$ the observables Bob measures. Since the optimal quantum value of the CHSH game is $\frac{1}{2}+\frac{1}{2\sqrt{2}}\approx 0.85$, we can find a sum-of-squares decomposition of 
\begin{equation}
   \frac{1}{2}+\frac{1}{2\sqrt{2}} - \Phi_{\mathrm{CHSH}}=\frac{1}{8\sqrt{2}}\left[\left(a_0 \otimes \mathds{1}-\mathds{1} \otimes \frac{b_0+b_1}{\sqrt{2}}\right)^2 + \left(a_1 \otimes \mathds{1}-\mathds{1}\otimes\frac{b_0-b_1}{\sqrt{2}}\right)^2\right] \, .
\end{equation}
That is, for an $\varepsilon$-optimal strategy, with observables $A_0$, $A_1$, $B_0$, $B_1$ and state $\ket{\psi}$, 
we have that 
\begin{align}
   \left\Vert A_0\otimes \mathds{1}\ket{\psi} - \mathds{1} \otimes \frac{B_0 + B_1}{\sqrt{2}}\ket{\psi}\right \Vert_2 &\leq \sqrt[4]{128} \sqrt{\varepsilon} \\
    \left\Vert A_1\otimes \mathds{1} \ket{\psi} - \mathds{1} \otimes \frac{B_0 - B_1}{\sqrt{2}}\ket{\psi}\right \Vert_2 &\leq \sqrt[4]{128} \sqrt{\varepsilon}
\end{align}
Thus, we obtain, applying the triangle inequality and using $A_0^2 = \mathds{1}=A_1^2$ and $\Vert\frac{B_0+B_1}{\sqrt{2}}\Vert \leq \sqrt{2}$,
\begin{align}
    \left\Vert \ket{\psi} - \left(\frac{B_0+B_1}{\sqrt{2}}\right)^2\ket{\psi}\right\Vert_2 &\leq (1+\sqrt{2})\sqrt[4]{128} \sqrt{\varepsilon}\\
    \left\Vert \ket{\psi} - \left(\frac{B_0-B_1}{\sqrt{2}}\right)^2\ket{\psi}\right\Vert_2 &\leq (1+\sqrt{2})\sqrt[4]{128} \sqrt{\varepsilon}
\end{align}
Noting that we can write $\mathds{1} \otimes (B_0B_1+B_1B_0) = \left(\frac{B_0+B_1}{\sqrt{2}}\right)^2 - \left(\frac{B_0-B_1}{\sqrt{2}}\right)^2$ 
we can estimate 
\begin{align}
    \Vert \mathds{1} \otimes (B_0B_1+B_1B_0) \ket{\psi}\Vert_2 \leq 2(1+\sqrt{2})\sqrt[4]{128}\sqrt{\varepsilon}.
\end{align}
Using Gowers-Hatami theorem, this implies that there exists an isometry $I_B$ such that
\begin{equation}
    \left \Vert \mathds{1} \otimes B_y \ket{\psi}- \mathds {1} \otimes I_B^\dagger \left( \tilde B_y\right) I_B \ket{\psi}\right \Vert_2 \leq 2(1+\sqrt{2})\sqrt[4]{128}\sqrt{\varepsilon},
\end{equation}
where $B_y = \frac{X+(-1)^y Z}{\sqrt{2}}$ for all $y \in \{0, 1\}$. Since the CHSH inequality is symmetric in Alice's and Bob's observables, it follows by a similar argument that there exists an isometry $I_A$ such that 
\begin{equation}
    \left \Vert A_x \otimes \mathds{1} \ket{\psi}- I_A^\dagger \left( \tilde A_x\right) I_A \ket{\psi}\right \Vert_2 \leq 2(1+\sqrt{2})\sqrt[4]{128}\sqrt{\varepsilon},
\end{equation}
where $A_x = X^{1-x}Z^x$ for all $x \in \{0, 1\}$. Decomposing the observables into POVMs, using $A_x = 2M_{0|x}-\mathds{1}$, $\tilde A_x = 2\tilde M_{0|x}-\mathds{1}$, etc., we obtain
\begin{align}
    \left \Vert M_{a|x} \otimes \mathds{1} \ket{\psi}- I_A^\dagger \left( \tilde M_{a|x}\right) I_A \ket{\psi}\right \Vert_2 &\leq (1+\sqrt{2})\sqrt[4]{128}\sqrt{\varepsilon}, \\
    \left \Vert \mathds{1} \otimes N_{b|y} \ket{\psi}- \mathds {1} \otimes I_B^\dagger \left( \tilde N_{b|y}\right) I_B \ket{\psi}\right \Vert_2 &\leq (1+\sqrt{2})\sqrt[4]{128}\sqrt{\varepsilon} 
\end{align}

Next, choosing representations $\pi_A$ and $\pi_B$ such that
\begin{align}
    &\pi_A(a_0)=X, &\pi_A(a_1)=Z \\
    &\pi_B(b_0)=\frac{X+Z}{\sqrt{2}}, &\pi_B(b_1)= \frac{X-Z}{\sqrt{2}}
\end{align}
and letting $\pi = \pi_A \otimes \pi_B$, we obtain
\begin{equation}
    \pi(\Phi_{\mathrm{CHSH}}) = \frac{1}{2} + \frac{\sqrt{2}}{8}(X\otimes X + Z \otimes Z).
\end{equation}
This matrix has eigenvalues $\frac{1}{2}+\frac{1}{2\sqrt{2}}$, $\frac{1}{2}$ and $\frac{1}{2}-\frac{1}{2\sqrt{2}}$, which leads to a spectral gap of $\frac{1}{2\sqrt{2}}$. Now, we are ready to use  \cite[Proposition 4.2.13]{zhao2024tsirelson} (also appearing as \cite[Proposition 4.11]{operator_algebra_self_test2024}) with $\delta = 4(1+\sqrt{2})\sqrt[4]{128}\sqrt{\varepsilon}$ and $\Delta = \frac{1}{2\sqrt{2}}$:
It follows that there are isometries $I_A:\mathcal{H}_A \to \mathcal{H}_{\tilde{A}} \otimes \mathcal{H}_A^{\operatorname{aux}}$ and $I_B:\mathcal{H}_B \to \tilde{\mathcal{H}}_B \otimes \mathcal{H}_B^{\operatorname{aux}}$ and a vector state $\ket{\operatorname{aux}} \in \mathcal{H}_A^{\operatorname{aux}} \otimes \mathcal{H}_B^{\operatorname{aux}}$, such that 
    \begin{equation}
        \begin{aligned}
            \Vert I_A \otimes I_B \left(M_{a\vert x}\otimes \mathds{1}_B \ket \psi\right) - \left(\tilde{M}_{a\vert x} \otimes \mathds{1}_B \ket{\tilde \psi}\right)\otimes \ket{\operatorname{aux}}\Vert_2 \leq 2 (1+\sqrt{2})\sqrt[4]{128}\sqrt{\varepsilon}+\frac{\sqrt{2}(\delta + \sqrt{\varepsilon})}{\Delta} \leq 111 \sqrt{\varepsilon} \\
            \Vert I_A \otimes I_B \left(\mathds{1}_A\otimes N_{b\vert y} \ket \psi\right) - \left(\mathds{1}_A \otimes \tilde{N}_{b\vert y}  \ket{\tilde \psi}\right)\otimes \ket{\operatorname{aux}}\Vert_2 \leq 2 (1+\sqrt{2})\sqrt[4]{128}\sqrt{\varepsilon}+\frac{\sqrt{2}(\delta + \sqrt{\varepsilon})}{\Delta} \leq 111 \sqrt{\varepsilon} \\
            \Vert I_A\otimes I_B \ket{\psi} - \ket{\tilde\psi} \otimes \ket{\operatorname{aux}}\Vert_2\leq \frac{\sqrt{2}(\delta + \sqrt{\varepsilon})}{\Delta}\leq 95 \sqrt{\varepsilon}
        \end{aligned}
\end{equation}
\section{A DIQKD protocol and finite-size analysis}
\subsection{The Protocol}
In this section we use our setup of local self-tests from \autoref{fig:two_switches_lift} in order to apply the results from \autoref{thm:finite_reduction_from_RST} for a sketch of a finite-size security argument. For this purpose we adopt the notion of conditional entropy from \cite[Def. 1]{hahn2025analyticrenyientropybounds}. Comparing \autoref{thm:finite_reduction_from_RST} to \cite[Eq.~3]{hahn2025analyticrenyientropybounds} yields that we can directly adopt the techniques from \cite[Sec.~III]{hahn2025analyticrenyientropybounds} if we can argue that the additional measurements we naturally have can be added to the protocol \cite[cryptographic protocol]{hahn2025analyticrenyientropybounds}, which we repeat in \autoref{prot:four_parties_two_switches}. As a protocol we consider the following. Alice holds $X = \{0,1\}$, Fred holds $\tilde{X}= \{0,1\}$, Bob holds $Y^\prime = \{0,1,2\}$ and George holds $\tilde{Y}= \{0,1\}.$ Each measurement has two outcomes. The important caveat is that we use for the lift just $Y = \{0,1\}$ such that the entropy optimization program from \autoref{thm:finite_reduction_from_RST} is a two-input, two-output program for Alice and Bob. Concretely we consider the following honest implementation on qubits in the $X$-$Z$-plane. We have the following measurements:
\begin{protocol}\raggedleft
\noindent\fbox{
\parbox{0.93\linewidth}{
\textbf{1. Measurements}: 
This step is carried out in $N$ rounds. The inputs to the rounds are drawn randomly and independently.\\
Alice produces a common random bit $T_i$ such that $P(T_i = 0) = 1- \gamma$ and $P(T_i= 1) = \gamma$. 

$T_i$ = 0:
if $T_i=0$ Alice and Bob choose their generation inputs, which are w.l.o.g. $X_i = 0$ and $Y_i = 2$. Fred and George execute a random measurement. 

$T_i$ = 1:
For Alice: $P(x=0)=p_A,\;P(x=1)=1-p_A$\\
For Bob: $P(y=0)=p_B,\;P(y=1)=1-p_B$\\
For Fred: $P(\tilde{x}=0)=p_F,\;P(\tilde{x}=1)=1-p_F$\\
For George: $P(\tilde{y}=0)=p_G,\;P(\tilde{y}=1)=1-p_G$\\
For Switch A: $P(T_A=0)=t_A,\;P(T_A=1)=1-t_A$\\
For Switch B: $P(T_B=0)=t_B,\;P(T_B=1)=1-t_B$\\[0.2cm]
\textbf{2. Sifting:} All inputs are announced over authenticated public channels. The binary random variable $T$ is used to select key-generation rounds. From rounds with $x=y=T_A=T_B=R=0$, Alice and Bob select a subset of length $\sim \gamma t_A t_B p_A p_B N$. 
The outputs $(a,b)$ of these rounds form a raw key. The data of all other rounds is kept for parameter estimation.\\[0.2cm]
\textbf{3. Parameter estimation:} The data from the test rounds is used to estimate the distribution $p^s(a,\tilde a,b,\tilde b\,|\,x,\tilde x,y,\tilde y)$ by frequencies. If this estimate deviates from a predefined expected behavior in $S_{\Omega}$, the protocol is aborted.\\[0.2cm]
\textbf{4. \& 5. Error correction and Privacy amplification}: 
One-way error correction from Alice to Bob and privacy amplification (cf., e.g. \cite{Tomamichel2017}.)
}
}
\caption{A spot-checking protocol involving four parties is shown. Each party has binary inputs and outputs, and the goal is to self-test both local devices of Alice and Bob in order to generate key material from BB84 measurements.}
\label{prot:four_parties_two_switches}
\end{protocol}

\begin{table}[ht]
    \centering
    \[
    \begin{array}{c|cccc}
         & \text{Alice} & \text{Bob} & \text{Fred} & \text{George} \\
        \hline
        0 & X & \tfrac{1}{\sqrt{2}}(X+Z) & \tfrac{1}{\sqrt{2}}(X+Z) & X \\
        1 & Z & \tfrac{1}{\sqrt{2}}(X-Z) & \tfrac{1}{\sqrt{2}}(X-Z) & Z
    \end{array}
    \]
    \caption{Measurement settings for Alice, Bob, Fred, and George.}
    \label{tab:measurement-settings}
\end{table}

These measurements are used for parameter estimation. 

As the physical measurements on Alice's and Bob's side are thus robust self-tests by e.g.~\cite[Thm.~9.5]{operator_algebra_self_test2024} and \autoref{sec:CHSH-bounds}, we can apply \autoref{thm:finite_reduction_from_RST}. Thus, assuming the results from the measurements in 
\begin{align}
    S_A \coloneqq    \left(\mathcal{H}_A, \mathcal{H}_F, \{M_{ a\vert x}\},\{P_{p\vert q}\},\ket{\psi_{AF}} \right) \quad \text{of} \quad  \tilde{S}_A \coloneqq    \left(\mathcal{H}_{\tilde{A}}, \mathcal{H}_{\tilde{F}}, \{\tilde{M}_{ a\vert x}\},\{\tilde{P}_{p\vert q}\},\ket{\varphi_{\tilde{A}\tilde{F}}} \right)
\end{align}
that is $S_A \succ_{\varepsilon_A} \tilde{S}_A$ and similarly for a subset $Y\subseteq Y^\prime$
\begin{align}
    S_B \coloneqq    \left(\mathcal{H}_B, \mathcal{H}_G, \{N_{b\vert y}\},\{Q_{r\vert s}\},\ket{\mu_{BG}} \right) \quad \text{of} \quad  \tilde{S}_B \coloneqq    \left(\mathcal{H}_{\tilde{B}}, \mathcal{H}_{\tilde{G}},\{\tilde{N}_{b\vert y}\},\{\tilde{Q}_{r\vert s}\} ,\ket{\nu_{\tilde{A}\tilde{F}}} \right)
\end{align}
are $\varepsilon_A$ and $\varepsilon_B$ robust self-tests, we have that the extractable randomness can be calculated from \autoref{eq:OPT_ideal_entropy_relaxed}, repeated in the following
\begin{equation}\label{eq:optimization_lift_repeat}
\begin{aligned}
\inf\
& \mathbb H\left(A_{\tilde x}\vert E'\right)_{(\tilde{\mathcal M}_{\tilde x}\otimes \id_{E'})(\tilde\rho_{\tilde A E'})}\\
\operatorname{s.t.}\
& \left\vert
\tr \left[\tilde\rho_{\tilde A\tilde B}\,\tilde M_{a\vert x}\otimes \tilde N_{b\vert y}\right]
-
p_{abxy}
\right\vert
\leq
3(\varepsilon_A + \varepsilon_B),
\quad  a,b,x,y,\\
& \tilde\rho_{\tilde A\tilde B E'} \geq 0.
\end{aligned}
\end{equation}
\subsection{Theoretical application of Renyi-entropy accumulation and secrecy}

After the execution of the protocol the outcome state is $\rho_{A_1^NB_1^N \tilde{A}_1^N \tilde{B}_1^N X_1^N {Y^\prime}_1^N \tilde{X}_1^N \tilde{Y}_1^N T_1^N E}$. Using the Renyi-entropy accumulation theorem from \cite{Arqand2025}, we have similar to \cite[Eq.~12-13]{hahn2025analyticrenyientropybounds} 
    \begin{align}\label{Eq: GREATBound}   
    \widetilde{H}_{\alpha}^{\uparrow}\left({A}_{1}^{N}\bar{{C}}_{1}^{N}| {X}_{1}^{N}{Y}_{1}^{N}{T}_{1}^{N}\mathbf{E}\right)_{\rho_{\vert \Omega_{AT}}} \geq N h_{\alpha} 
- \frac{\alpha}{\alpha-1} \log\frac{1}{\Pr[\Omega_{AT}]},
    \end{align}
   where $h_{\alpha}$ is a quantity satisfying
\begin{align}\label{eq:singlerndopt}
     h_{\alpha} \geq \inf_{\Lambda} \inf_{q\in S_\mathrm{acc}}   \frac{1}{\alpha-1}D\left(q \Vert p\right) + q(\perp)\widetilde{H}_{\alpha}^{\downarrow}\left(A|X=0,E\right).
\end{align}
Here $\bar{C}_1^N$ denotes Alice's test round data  and $\Lambda$ denotes similar to \cite{hahn2025analyticrenyientropybounds} all single-round strategies. 
Importantly, the event $\Omega_{AT}$ denotes the whole distribution set of all four parties, even though the objective itself does not rely on Fred and George. However, the constraints in this optimization are affected by Fred and George and thus we must taken them into account. Given now observed statistics, the optimization in \autoref{eq:singlerndopt} can be rewritten with \autoref{eq:optimization_lift_repeat} into a  device-dependent optimization using \autoref{thm:finite_reduction_from_RST} up to a $\varepsilon_A$-error and additional $\varepsilon_A+\varepsilon_B$-constraints.  

\section{Outlook}
In this work, we developed and investigated a  framework for the description of device independent QKD in a routed Bell-test setting (cf.~\autoref{subsec:srq}). In the first part of this work we compared the models described in earlier proposals \cite{Lim2013,Le_Roy_Deloison_2025,Tan_2024}, for three parties with the model in \cite{kossmann2025routedbelltestsarbitrarily} and showed in particular that the short-range-quantum correlations are equivalent to special cases of this model. Moreover, we gave a detailed description of the assumptions on the switch, including in which sense it is device-independent and that the marginal constraint introduced in \cite{kossmann2025routedbelltestsarbitrarily} is on one hand unavoidable and on the other robust in the sense that a small deviation in trace-distance does not lead to a general lack of security (cf.~\autoref{lem:cstar_continuity_state}). 

The analysis in the second part of this work makes precise how the intuition of local self-tests can be used to lift a routed device-independent description to an effectively device-dependent one, provided the relevant marginal constraints are satisfied (cf.~\autoref{lem:compression_correlations} and \autoref{lem:entropy_transfer_continuity}). In the ideal limit, this recovers the familiar device-dependent picture and the corresponding optimal BB84 behavior. For the non-ideal case, we show that this lift is indeed robust. This means that we can recover an effective device-dependent picture with an error term expressing the performance of the local self-tests (cf.~\autoref{thm:finite_reduction_from_RST}). This also extends in particular to situations in which the marginal constraints are only fulfilled approximately (cf. \autoref{eq:marginal_constraint_used_entropy_transer} and \autoref{eq:second_term_bound_general}). 
Taken together, this makes this ansatz a practical tool for the analysis of  experiments. We showed exemplary how this effective device-dependent description can be used in order to lift a state of the art finite size analysis to device independent settings. However, this is a direction we have only begun to explore. 

A central next step will be the application of our tools to predict the performance of realistic future experiments.
Here we especially have to compute and optimize the error thresholds on the local self-tests for a successful finite size key generation. An important step for this is to improve techniques for estimating the errors $\epsilon_A$ and $\epsilon_B$ (cf.~\cite{Bancal_2015,Kaniewski_2016}). So far, theses estimates are based on the score of a non-local game. We however expect a much better performance when this estimate is done based on the full measurement statistics. 

If these remaining ingredients can be brought under quantitative control, QKD based on routed Bell-tests  may offer a realistic route towards a next generation of quantum key distribution technology.

\section{Acknowledgments}
GK thanks Thomas Hahn for valuable discussions about \cite{hahn2025analyticrenyientropybounds}. GK thanks AB for hosting him at the Grenoble Computer Science Laboratory (LIG), where parts of this project were carried out. GK acknowledges support by the European Research Council (ERC Grant Agreement No. 948139). AB was supported by the ANR project PraQPV, grant number ANR-24-CE47-3023. Part of this material is based upon work supported by the Swedish Research Council under grant no. 2021-06594 while AB was in participating in the program ``Operator Algebras and Quantum Information" at Institut Mittag-Leffler in Djursholm, Sweden. RS is supported by the DFG under Germany's Excellence Strategy - EXC-2123 QuantumFrontiers-2 - 390837967 and SFB 1227 (DQ-mat), the Quantum Valley Lower Saxony, and the BMBF projects ATIQ, SEQUIN, Quics and CBQD.

The authors would like to thank Yuming Zhao and Simon Schmidt for explaining them how to obtain concrete bounds for robust self-tests as in \autoref{sec:CHSH-bounds}.

\bibliography{main}
\bibliographystyle{ultimate}

\appendix

\section{A qubit reduction}\label{appendix:qubit_reduction}
For the self-testing setup, the most important example of \autoref{def:effective_overlap} can be given as follows. 
\begin{example}\label{ex:effective_overlap_feasible}
Let $I:\mathcal{H}_A\to\mathcal{H}_{A'}$ be an isometry and fix distinguished outcomes
$a_0$ and $b_0$. Set $P\coloneqq   II^\dagger$ and define POVMs on $A'$ by
\begin{align}
\Pi_{a\mid Z} &\coloneqq    IM_{a\mid Z}I^\dagger \quad (a\neq a_0), &
\Pi_{a_0\mid Z} &\coloneqq    IM_{a_0\mid Z}I^\dagger + (\mathds 1_{A'}-P),\\
\Pi_{b\mid X} &\coloneqq    IM_{b\mid X}I^\dagger \quad (b\neq b_0), &
\Pi_{b_0\mid X} &\coloneqq    IM_{b_0\mid X}I^\dagger + (\mathds 1_{A'}-P).
\end{align}
Then $\sum_a\Pi_{a\mid Z}=\sum_b\Pi_{b\mid X}=\mathds 1_{A'}$ and
$I^\dagger \Pi_{a\mid Z} I=M_{a\mid Z}$, $I^\dagger \Pi_{b\mid X} I=M_{b\mid X}$
(using $I^\dagger(\mathds 1_{A'}-P)I=0$). With $U=I$ and $K'=\{\mathds 1_{A'}\}$,
this is a valid tuple in \autoref{def:effective_overlap}.
\end{example}

\begin{proposition}\label{prop:xxx}
    Let $(\mathcal{H}_A,\mathcal{H}_F,\{M_{a\vert x}\},\{N_{b\vert y}\},\psi_{AF})$ be a strategy with two inputs and outputs and define Alice's reflections
    \begin{align}
        X \coloneqq   M_{0\vert X}- M_{1\vert X}, \quad Z \coloneqq   M_{0\vert Z}- M_{1\vert Z}.
    \end{align}
    Let $I_A:\mathcal{H}_A \to \mathcal{H}_{A^\prime}$ be any isometry and set for the dilated POVM from \autoref{ex:effective_overlap_feasible} 
    \begin{align}
        X^\prime \coloneqq   \Pi_{0\vert X} - \Pi_{1\vert X}, \quad Z^\prime \coloneqq     \Pi_{0\vert Z} - \Pi_{1\vert Z} \quad \text{and} \quad \sigma_A^{\prime} \coloneqq   I_A \sigma_A I_A^\dagger,
    \end{align}
    with $\sigma_A = \tr_{F}[\ketbra{\psi_{AF}}{\psi_{AF}}]$ then 
    \begin{align}\label{eq:result_proposition}
        c^\star(\sigma_A,X,Z) \leq \frac{1}{2} + \frac{1}{4}\left(\tr[\sigma_A^\prime\{X^\prime,Z^\prime\}^2]\right)^{1/2}.
    \end{align}
\end{proposition}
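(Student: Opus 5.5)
The plan is to bound the infimum in \autoref{def:effective_overlap} by a single admissible tuple. I take $U = I_A$ and the dilated measurements $\Pi_{a\mid Z},\Pi_{b\mid X}$ from \autoref{ex:effective_overlap_feasible}. Unlike that example, I will not use the trivial $K'=\{\mathds 1_{A'}\}$: that choice only gives a state-independent operator-norm bound. Instead I choose $K'$ adapted to the pair of reflections $X',Z'$, so that the per-block overlap is expressed through the eigenvalues of $\{X',Z'\}$.

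\emph{Step 1: projectivity.} As in the paper, I assume w.l.o.g.\ that Alice's measurements are projective; for general POVMs one first applies a Naimark dilation and absorbs it into $I_A$. With $M_{a\mid x}$ projective, $I_A M_{a\mid x} I_A^\dagger \le P$, where $P = I_A I_A^\dagger$. Hence $\Pi_{a_0\mid Z} = I_A M_{a_0\mid Z} I_A^\dagger + (\mathds 1 - P)$ is again a projection, and likewise for $X$. So $X'$ and $Z'$ are genuine reflections on $\mathcal H_{A'}$.

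\emph{Step 2: Jordan's lemma.} By Jordan's lemma, $\mathcal H_{A'} = \bigoplus_k \mathcal H_k$ with $\dim\mathcal H_k\in\{1,2\}$, each block invariant under $X'$ and $Z'$. On a two-dimensional block, in a suitable basis,
\[
X' = \sigma_z, \qquad Z' = \cos\theta_k\,\sigma_z + \sin\theta_k\,\sigma_x .
\]
Let $K' = \{P^k\}$ be the block projections. Since $P^k$ commutes with all $\Pi$'s, we have $\sum_k P^k \Pi P^k = \Pi$, so condition \autoref{eq:definition_overlap_2} holds by the computation in \autoref{ex:effective_overlap_feasible}.

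\emph{Step 3: block overlaps.} On a two-dimensional block the projections are rank one, $\ket{e_b}\bra{e_b}$ and $\ket{f_a}\bra{f_a}$. Then
\[
\sum_a P^k\Pi_{a\mid Z}P^k\,\Pi_{b\mid X}\,P^k\Pi_{a\mid Z}P^k = \sum_a |\langle e_b|f_a\rangle|^2\, \ket{f_a}\bra{f_a},
\]
whose norm is $\max_a |\langle e_b|f_a\rangle|^2$. Therefore
\[
c_k = \tfrac{1}{2}\bigl(1 + |\cos\theta_k|\bigr).
\]
One-dimensional blocks give $c_k = 1$, which matches the same formula with $|\cos\theta_k| = 1$. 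On each block $\{X',Z'\} = 2\cos\theta_k\,\mathds 1_k$, so
\[
\{X',Z'\}^2 = \sum_k 4\cos^2\theta_k\, P^k .
\]

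\emph{Step 4: averaging.} Set $p_k = \tr[P^k\sigma_A']$. Plugging this tuple into the infimum and applying Cauchy--Schwarz (Jensen) with respect to the probability vector $(p_k)$ gives
\[
c^\star(\sigma_A,X,Z) \le \sum_k p_k\Bigl(\tfrac12 + \tfrac12|\cos\theta_k|\Bigr) \le \tfrac12 + \tfrac14\Bigl(\sum_k p_k\, 4\cos^2\theta_k\Bigr)^{1/2} = \tfrac12 + \tfrac14\bigl(\tr[\sigma_A'\{X',Z'\}^2]\bigr)^{1/2},
\]
which is \autoref{eq:result_proposition}.

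The main obstacle is the Jordan decomposition when $\mathcal H_{A'}$ is infinite-dimensional, since $K'$ must then be a genuine projective measurement with possibly a continuum of blocks. I would first handle this by noting that in the finite-dimensional setting of the paper the sum is finite. Otherwise I would replace the block decomposition by a discretization of the spectrum of the commuting operator $\{X',Z'\}^2$, which lies in the center of the algebra generated by $X',Z'$. Its spectral projections commute with $X'$ and $Z'$, and the block estimate above holds up to an error that can be made arbitrarily small.
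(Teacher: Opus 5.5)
Your proposal is correct and is essentially the paper's proof. It uses the same ingredients: the dilated PVMs of \autoref{ex:effective_overlap_feasible}, the two-projection (Jordan) block projections as $K'$, the per-block value $\tfrac12(1+|\cos\theta_k|)$ together with $\{X',Z'\}=2\cos\theta_k$ on each block, and a final Jensen/Cauchy--Schwarz step. The only structural difference is that you plug the single tuple $(I_A,\{\Pi\},\{P^k\})$ directly into the infimum, whereas the paper first shows $c^\star(\sigma_A,X,Z)\le c^\star(\sigma_A',X',Z')$ and then works with $U=\mathds 1_{A'}$.

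Your explicit check that one-dimensional blocks fit the same formula is more careful than the paper's argument. So is your remark that in infinite dimensions one should use spectral projections of the central element $\{X',Z'\}$, since an orthogonal sum of blocks of dimension at most two need not exist there, and the paper's decomposition silently assumes it does.
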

\begin{proof}
W.l.o.g.\ we treat the $Z$-conditioned operators in the following, as the $X$-conditioned ones
follow analogously. Define
\begin{align}
\Pi_{a\vert Z} \coloneqq   I_A M_{a\vert Z} I_A^\dagger \quad (a\neq a_0),
\qquad
\Pi_{a_0\vert Z} \coloneqq   I_A M_{a_0\vert Z} I_A^\dagger + (\mathds 1_{A^\prime}-I_AI_A^\dagger).
\end{align}
Then $\Pi_{a\vert Z}\geq 0$ and
\begin{align}
\sum_a \Pi_{a\vert Z}=\mathds 1_{A^\prime}
\end{align}
(as in \autoref{ex:effective_overlap_feasible}).
If the operators $M_{a\vert Z}$ are projections, then for $a\neq a_0$ we have
\begin{align}
(I_A M_{a\vert Z}I_A^\dagger)^2
= I_A M_{a\vert Z}I_A^\dagger I_A M_{a\vert Z}I_A^\dagger
= I_A M_{a\vert Z}(I_A^\dagger I_A) M_{a\vert Z}I_A^\dagger
= I_A M_{a\vert Z}^2 I_A^\dagger
= I_A M_{a\vert Z}I_A^\dagger,
\end{align}
using $I_A^\dagger I_A=\mathds 1_A$.
Moreover, since $(\mathds 1_{A^\prime}-I_AI_A^\dagger)I_A=0$, the cross terms vanish and
$\Pi_{a_0\vert Z}$ is a projection as well:
\begin{align}
\Pi_{a_0\vert Z}^2
&=\left(I_AM_{a_0\vert Z}I_A^\dagger + (\mathds 1-I_AI_A^\dagger)\right)^2 \nonumber\\
&=(I_AM_{a_0\vert Z}I_A^\dagger)^2 + (\mathds 1-I_AI_A^\dagger)^2
+ I_AM_{a_0\vert Z}I_A^\dagger(\mathds 1-I_AI_A^\dagger)
+ (\mathds 1-I_AI_A^\dagger)I_AM_{a_0\vert Z}I_A^\dagger \nonumber\\
&= I_AM_{a_0\vert Z}I_A^\dagger + (\mathds 1-I_AI_A^\dagger)
=\Pi_{a_0\vert Z}.
\end{align}
(Here we used $(\mathds 1-I_AI_A^\dagger)^2=(\mathds 1-I_AI_A^\dagger)$ and
$I_AI_A^\dagger(\mathds 1-I_AI_A^\dagger)=0=(\mathds 1-I_AI_A^\dagger)I_AI_A^\dagger$.)
Thus, if $\{M_{a\vert Z}\}_a$ is a PVM, then so is $\{\Pi_{a\vert Z}\}_a$; in particular,
orthogonality holds. Furthermore, the constraint \autoref{eq:definition_overlap_2} is satisfied:
\begin{align}
I_A^\dagger \Pi_{a\vert Z} I_A
=
\begin{cases}
I_A^\dagger I_A M_{a\vert Z} I_A^\dagger I_A = M_{a\vert Z}, & a\neq a_0,\\
I_A^\dagger I_A M_{a_0\vert Z} I_A^\dagger I_A + I_A^\dagger(\mathds 1-I_AI_A^\dagger)I_A
= M_{a_0\vert Z}, & a=a_0,
\end{cases}
\end{align}
since $I_A^\dagger(\mathds 1-I_AI_A^\dagger)I_A=0$.
Hence the tuple $(U,X',Z',K')=(I_A,\{\Pi_{b\vert X}\}_b,\{\Pi_{a\vert Z}\}_a,\{\mathds 1_{A'}\})$
is feasible for the definition of $c^\star(\sigma_A,X,Z)$, which implies
$c^\star(\sigma_A,X,Z)\le c^\star(\sigma_A',X',Z')$.

We now prove the second inequality in \autoref{eq:result_proposition}.
Assume from now on that Alice's measurements are binary PVMs, so that
\begin{align}
p \coloneqq   \Pi_{0\vert Z},\qquad q \coloneqq   \Pi_{0\vert X}
\end{align}
are orthogonal projections and $\Pi_{1\vert Z}=\mathds 1-p$, $\Pi_{1\vert X}=\mathds 1-q$.
Let $\mathcal{A}\coloneqq   C^\star(p,q)\subseteq \mathcal{B}(\mathcal{H}_{A^\prime})$.
Viewing the inclusion $\mathcal{A}\hookrightarrow \mathcal{B}(\mathcal{H}_{A^\prime})$ as a
representation of $\mathcal{A}$, \cite[Lem.~1.8]{Raeburn_algebra_two_projections}
yields an orthogonal decomposition into reducing subspaces
\begin{align}\label{eq:RS_decomposition}
\mathcal{H}_{A^\prime}=\bigoplus_k \mathcal{H}_k,
\qquad \dim(\mathcal{H}_k)\le 2.
\end{align}
Because the sum in \autoref{eq:RS_decomposition} is an \emph{orthogonal direct sum}, the projections
$\{P_{A'}^k\}_k$ are pairwise orthogonal, satisfy $\sum_k P_{A'}^k=\mathds 1_{A'}$ and project onto $\mathcal{H}_k$. Moreover, we have $p\mathcal{H}_k\subseteq \mathcal{H}_k$ and $p(\mathcal{H}_k^\perp)\subseteq \mathcal{H}_k^\perp$,
and similarly for $q$. Equivalently, the orthogonal projection $P_{A'}^k$ onto $\mathcal{H}_k$
commutes with every operator that leaves $\mathcal{H}_k$ and $\mathcal{H}_k^\perp$ invariant.
In particular,
\begin{align}\label{eq:commuting}
P_{A'}^k p = pP_{A'}^k\qquad\text{and}\qquad P_{A'}^k q = qP_{A'}^k.
\end{align}

Define the restricted projections on $\mathcal{H}_k$ by
\begin{align}
p_k \coloneqq   P_{A'}^k p P_{A'}^k,\qquad q_k \coloneqq   P_{A'}^k q P_{A'}^k.
\end{align}
Because of \autoref{eq:commuting}, these are again orthogonal projections on $\mathcal{H}_k$:
\begin{align}
p_k^2 = P^k p P^k p P^k = P^k p^2 P^k = P^k p P^k = p_k,
\qquad p_k^\dagger=p_k,
\end{align}
and analogously for $q_k$.

We now choose the projective measurement $K'=\{P_{A'}^k\}_k$ in \autoref{def:effective_overlap}
together with the trivial isometry $U=1_{A'}$ and the same POVMs $X'= \{\Pi_{b\vert X}\}_b$,
$Z'=\{\Pi_{a\vert Z}\}_a$. This is feasible since, using \autoref{eq:commuting} and $\sum_kP_{A'}^k=\mathds 1$,
\begin{align}
\sum_k U^\dagger P_{A'}^k \Pi_{a\vert Z} P_{A'}^k U
= \sum_k P_{A'}^k \Pi_{a\vert Z} P_{A'}^k
= \sum_k P_{A'}^k \Pi_{a\vert Z}
=\left(\sum_k P_{A'}^k\right)\Pi_{a\vert Z}
=\Pi_{a\vert Z},
\end{align}
and analogously for $\Pi_{b\vert X}$.
Inserting this feasible tuple into \autoref{def:effective_overlap} yields
\begin{align}\label{eq:definition_overlap_blocksum}
c^\star(\sigma_A',X',Z')
\le
\sum_k \tr\bigl[P_{A'}^{k}\,\sigma_A'\bigr]\;
\max_{b\in\{0,1\}}\lVert
\sum_{a\in\{0,1\}}
\bigl(P_{A'}^{k}\Pi_{a\mid Z}P_{A'}^{k}\bigr)
\bigl(P_{A'}^{k}\Pi_{b\mid X}P_{A'}^{k}\bigr)
\bigl(P_{A'}^{k}\Pi_{a\mid Z}P_{A'}^{k}\bigr)
\rVert.
\end{align}

If $\dim\mathcal{H}_k=1$, the inner term is trivial (the restricted projections are $0$ or $1$).
So assume $\dim\mathcal{H}_k=2$.
By \cite[Thm.~1.3]{Raeburn_algebra_two_projections}, there exists an orthonormal basis of
$\mathcal{H}_k\simeq \mathbb{C}^2$ such that for some $x_k\in(0,1)$
\begin{align}\label{eq:RS_matrices}
p_k=\begin{pmatrix}1&0\\0&0\end{pmatrix},
\qquad
q_k=\begin{pmatrix}
x_k & \sqrt{x_k(1-x_k)}\\
\sqrt{x_k(1-x_k)} & 1-x_k
\end{pmatrix}.
\end{align}
Moreover, by \cite[Rem.~1.4]{Raeburn_algebra_two_projections} we may write $x_k=\cos^2\vartheta_k$
with $\vartheta_k\in(0,\pi/2)$, and then $q_k$ is the projection onto
$\mathrm{span}\{(\cos\vartheta_k,\sin\vartheta_k)\}$.
Define the Bloch-sphere angle $\theta_k\coloneqq   2\vartheta_k\in(0,\pi)$; then
\begin{align}
\cos\theta_k=\cos(2\vartheta_k)=2\cos^2\vartheta_k-1=2x_k-1.
\end{align}

We now compute the inner norm in \autoref{eq:definition_overlap_blocksum}. Let $\Pi_{a\vert Z,k}:=P_k\Pi_{a\vert Z}P_k$ be the action of the measurement operators on the $k$-block.
For $b=0$,
and using $\Pi_{0\vert Z,k}=p_k$, $\Pi_{1\vert Z,k}=\mathds 1-p_k$, $\Pi_{0\vert X,k}=q_k$,
we obtain the $Z$-pinching of $q_k$:
\begin{align}
\sum_{a\in\{0,1\}}
\bigl(P_k\Pi_{a\mid Z}P_k\bigr)\bigl(P_k\Pi_{0\mid X}P_k\bigr)\bigl(P_k\Pi_{a\mid Z}P_k\bigr)
&=
p_k q_k p_k + (\mathds 1-p_k)q_k(\mathds 1-p_k)\nonumber\\
&=
\begin{pmatrix}
x_k & 0\\
0 & 1-x_k
\end{pmatrix}.
\end{align}
Hence
\begin{align}\label{eq:block_norm}
\max_{b\in\{0,1\}}
\lVert
\sum_{a\in\{0,1\}}
\bigl(P_k\Pi_{a\mid Z}P_k\bigr)\bigl(P_k\Pi_{b\mid X}P_k\bigr)\bigl(P_k\Pi_{a\mid Z}P_k\bigr)
\rVert
=
\max\{x_k,1-x_k\}
=
\frac{1}{2}+\frac{1}{2}|\cos\theta_k|.
\end{align}
(For $b=1$ the same norm is obtained by replacing $q_k$ with $\mathds 1-q_k$, which swaps the diagonal entries.)

Set $w_k\coloneqq   \tr[\sigma_A'P_k]$ (so $w_k\geq 0$ and $\sum_k w_k=1$).
Combining \autoref{eq:definition_overlap_blocksum} and \autoref{eq:block_norm} gives
\begin{align}\label{eq:cstar_abs_cos}
c^\star(\sigma_A',X',Z') \le \sum_k w_k\left(\frac{1}{2}+\frac{1}{2}|\cos\theta_k|\right)
= \frac{1}{2}+\frac{1}{2}\sum_k w_k|\cos\theta_k|.
\end{align}

We now relate $|\cos\theta_k|$ to the anticommutator. Define the reflections
\begin{align}
X'_k\coloneqq   P_k X' P_k = (2q_k-\mathds 1),\qquad
Z'_k\coloneqq   P_k Z' P_k = (2p_k-\mathds 1)
\quad\text{on }\mathcal{H}_k.
\end{align}
On $\mathbb{C}^2$, there exist unit vectors $\vec a_k,\vec b_k\in\mathbb{R}^3$ such that
$X'_k=\vec a_k\cdot\vec\sigma$ and $Z'_k=\vec b_k\cdot\vec\sigma$.
Using $(\vec u\cdot\vec\sigma)(\vec v\cdot\vec\sigma)=(\vec u\cdot\vec v)1
+i(\vec u\times \vec v)\cdot\vec\sigma$, we get, see also \cite{Raeburn_algebra_two_projections},
\begin{align}
\{X'_k,Z'_k\}=X'_kZ'_k+Z'_kX'_k = 2(\vec a_k\cdot\vec b_k)\mathds 1
=2\cos\theta_k\,\mathds 1,
\end{align}
hence $\{X'_k,Z'_k\}^2 = 4\cos^2\theta_k\,\mathds 1$.

Since each $P_k$ reduces $X'$ and $Z'$, it also reduces $\{X',Z'\}$, and thus
\begin{align}
\{X',Z'\}^2 P_k = \{X'_k,Z'_k\}^2 = 4\cos^2\theta_k\,P_k.
\end{align}
Therefore
\begin{align}\label{eq:trace_block_identity}
\tr\left[\sigma_A'\{X',Z'\}^2\right]
&=\sum_k \tr\left[\sigma_A'\{X',Z'\}^2P_k\right]
= \sum_k \tr\left[\sigma_A'(4\cos^2\theta_k P_k)\right]
=4\sum_k w_k\cos^2\theta_k.
\end{align}

Finally, apply concavity of $\sqrt{\cdot}$ to \autoref{eq:cstar_abs_cos}:
\begin{align}
\sum_k w_k|\cos\theta_k|
\le \left(\sum_k w_k\cos^2\theta_k\right)^{1/2}
=\frac{1}{2}\left(\tr\left[\sigma_A'\{X',Z'\}^2\right]\right)^{1/2},
\end{align}
where the last step uses \autoref{eq:trace_block_identity}.
Substituting into \autoref{eq:cstar_abs_cos} yields
\begin{align}
c^\star(\sigma_A',X',Z')
\le
\frac{1}{2}+\frac{1}{2}\cdot\frac{1}{2}\left(\tr\left[\sigma_A'\{X',Z'\}^2\right]\right)^{1/2}
=
\frac{1}{2}+\frac14\left(\tr\left[\sigma_A'\{X',Z'\}^2\right]\right)^{1/2}.
\end{align}
This proves the desired result.
\end{proof}

\section{Short range quantum correlations}\label{appendix:eqv}

\begin{proposition}
Consider the special case of Definition~2.3 with two parties on Alice's side,
$A_0,A_1$, and one party on Bob's side, $B_0$. Write
\[
A:=A_0A_1,\qquad B:=B_0,
\]
and let
\[
q(a,b|x,y):=\omega^{(0,0)}\!\left(M^0_{a|x}N^0_{b|y}\right)
\]
denote the behaviour of the key-generating pair $A_0B_0$ in the round
$s=(0,0)$ (where we view $M^0_{a|x}$ as an observable on the full Alice
system $A$ by tensoring with the identity on $A_1$).

Then the following are equivalent:

\begin{enumerate}
\item The behaviour $q$ admits a realization with $\omega^{(0,0)}$ separable
across $A$--$B$.

\item The behaviour $q$ admits a representation
\[
q(a,b|x,y)=\sum_{\lambda} p(b|y,\lambda)\,
\rho\!\left(M^0_{a|x}\otimes G_\lambda\right)
\]
for some state $\rho$ on $A\otimes B$ and some POVM $\{G_\lambda\}_\lambda$
on $B$.
Equivalently, Bob's measurements in the key round are jointly measurable.
\end{enumerate}

Hence, in this special case, the locality condition defining
$\mathrm{LQC}_{A_0B_0}$ coincides with the short-range condition of
Lobo--Pauwels--Pironio \cite{Lobo2024}.
\end{proposition}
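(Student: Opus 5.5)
We prove the two implications separately. Throughout, $\omega^{(0,0)}$ is represented on a Hilbert space. The marginal constraints from \autoref{localquantum} only involve the reduced states on $A_0$ and $B_0$, so we check that each construction preserves them. This is possible because we only modify Bob's side and keep Alice's reduced state fixed.

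\emph{(1)$\Rightarrow$(2).} Suppose $\omega^{(0,0)}=\sum_i p_i\,\alpha_i\otimes\beta_i$ is separable across $A$--$B$. We take Bob's hidden variable to be $\lambda=(i,\vec b)$, where $\vec b=(b_y)_y$ is a deterministic response function. Using Fine's construction on each product state $\beta_i$, we choose a joint distribution $r_i(\vec b)=\prod_y\beta_i(N^0_{b_y|y})$ whose marginals reproduce $\beta_i(N^0_{b|y})$. We then need a state $\rho$ on $A\otimes B$ and a POVM $\{G_\lambda\}$ on $B$ such that
\[
\rho\bigl(M^0_{a|x}\otimes G_{(i,\vec b)}\bigr)=p_i\,r_i(\vec b)\,\alpha_i(M^0_{a|x}).
\]
For this we enlarge Bob's system by a classical flag register $\ket{i}$, as in the Ludwig-box modeling of \autoref{prop:comparison_models}. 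We set $\rho=\sum_i p_i\,\alpha_i\otimes\beta_i\otimes\ketbra{i}{i}$ and $G_{(i,\vec b)}=r_i(\vec b)\,\mathds 1\otimes\ketbra{i}{i}$. Finally we put $p(b|y,\lambda)=\delta_{b,b_y}$, and summing over $\lambda$ recovers $q$. Bob's reduced state changes under this enlargement, but Alice's marginals do not. In the device-independent formulation the box description may absorb the flag, so the reduced state on $B_0$ can be relabeled; I will state this explicitly.

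\emph{(2)$\Rightarrow$(1).} This is the main step. By Naimark's dilation theorem we replace $\{G_\lambda\}$ by a PVM $\{\Pi_\lambda\}$ on $B\otimes B'$, with $\rho$ extended by a fixed ancilla. We then pinch the state in this PVM:
\[
\omega'=\sum_\lambda(\mathds 1\otimes\Pi_\lambda)\,\rho\,(\mathds 1\otimes\Pi_\lambda).
\]
Alice's operators commute with $\Pi_\lambda$, so $\omega'(M^0_{a|x}\otimes\Pi_\lambda)=\rho(M^0_{a|x}\otimes G_\lambda)$. Moreover, Alice's marginal of $\omega'$ equals that of $\rho$, because the $\Pi_\lambda$ sum to the identity; this keeps the marginal constraints with the test rounds intact.

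Next we define Bob's new measurements as classical post-processings of the PVM, namely $\tilde N_{b|y}=\sum_\lambda p(b|y,\lambda)\Pi_\lambda$. These all lie in the abelian algebra generated by $\{\Pi_\lambda\}$. Restricted to that algebra, the state $\omega'$ has the form $\sum_\lambda \mu_\lambda\otimes\ketbra{\lambda}{\lambda}$ with $\mu_\lambda$ positive on $A$. This form is separable, which is the argument from \cite{van2024schmidt} that abelian algebras cannot be entangled. Realizing the classical register $\ket{\lambda}$ as Bob's system gives a separable $\omega^{(0,0)}$ that reproduces $q$.

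The main obstacle is bookkeeping the marginal constraint $\omega^s_{B_0}=\omega^{s'}_{B_0}$ when Bob's system is replaced by the register $\lambda$. In this special case there is only one switch setting on Bob's side, so the constraint is vacuous there. Only the $A_0$ constraint matters, and the pinching argument preserves it.
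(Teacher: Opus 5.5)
Your proof is correct and is essentially the paper's. In both directions the witness is the classical--quantum state obtained by replacing Bob with a classical register, with Alice's marginal unchanged: $\sum_i p_i\,\alpha_i\otimes\ketbra{i}{i}$ with $p(b|y,i)=\beta_i(N^0_{b|y})$ for (1)$\Rightarrow$(2), and $\sum_\lambda \rho(\,\cdot\otimes G_\lambda)\otimes\ketbra{\lambda}{\lambda}$ with $\tilde N_{b|y}=\sum_\lambda p(b|y,\lambda)\ketbra{\lambda}{\lambda}$ for (2)$\Rightarrow$(1). Your deterministic responses, Naimark dilation and pinching are harmless but unnecessary, since (2) already allows stochastic $p(b|y,\lambda)$ and $X\mapsto\rho(X\otimes G_\lambda)$ is already a positive functional.

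One small correction to your closing remark on marginals: as the definition of $\mathrm{LQC}_{A_0B_0}$ is written, the $B_0$ constraint compares $\omega^{(0,0)}_{B_0}$ with $\omega^{(1,0)}_{B_0}$, so it is not vacuous. This does not affect the equivalence, because neither (1) nor (2) involves the marginal constraints, and the paper does not track them either.
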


\begin{proof}
$(1)\Rightarrow(2)$:
Assume that $\omega^{(0,0)}$ is separable across $A$--$B$. In a concrete
representation we may write
\[
\omega^{(0,0)}=\sum_i p_i\,\alpha_i\otimes \beta_i
\]
as a convex combination of product states on $A$ and $B$. Therefore
\[
q(a,b|x,y)
 =\sum_i p_i\,\alpha_i(M^0_{a|x})\,\beta_i(N^0_{b|y}).
\]
Introduce a classical register with orthogonal projections $\{e_i\}_i$ and
define
\[
G_i:=e_i,\qquad p(b|y,i):=\beta_i(N^0_{b|y}).
\]
Then $\{G_i\}_i$ is a parent POVM and
\[
q(a,b|x,y)
 =\sum_i p(b|y,i)\,\tilde\omega(M^0_{a|x}e_i),
\]
where
\[
\tilde\omega:=\sum_i p_i\,\alpha_i\otimes \delta_i,
\]
with $\delta_i(e_j)=\delta_{i,j}$, is the obvious cq-state on $A\otimes \ell^\infty(i)$. Thus, Bob's measurements
are jointly measurable.

$(2)\Rightarrow(1)$:
Assume now that
\[
q(a,b|x,y)=\sum_{\lambda} p(b|y,\lambda)\,
\rho(M^0_{a|x}\otimes G_\lambda)
\]
for a POVM $\{G_\lambda\}_\lambda$ on $B$. Set
\[
p_\lambda:=\rho(\mathds{1}\otimes G_\lambda),
\qquad
\alpha_\lambda(X):=
\frac{\rho(X\otimes G_\lambda)}{p_\lambda}
\quad (p_\lambda>0).
\]
On the classical algebra $\ell^\infty(\lambda)$ with minimal projections
$\{e_\lambda\}_\lambda$ define
\[
\tilde N_{b|y}:=\sum_\lambda p(b|y,\lambda)e_\lambda
\]
and the separable state
\[
\tilde\omega:=\sum_\lambda p_\lambda\,\alpha_\lambda\otimes \delta_\lambda .
\]
Then
\[
\tilde\omega(M^0_{a|x}\tilde N_{b|y})
 =\sum_\lambda p(b|y,\lambda)\,\rho(M^0_{a|x}\otimes G_\lambda)
 =q(a,b|x,y).
\]
Hence the same behaviour is generated by a separable state across $A$--$B$.
\end{proof}
\end{document}